\documentclass[11pt,letterpaper]{article}
\usepackage[margin=1in]{geometry}
\usepackage{amsmath,amssymb,amsthm}
\usepackage{newtxtext}
\usepackage[subscriptcorrection]{newtxmath}
\usepackage{xcolor}
\usepackage{graphicx}
\usepackage{float}
\usepackage{algorithm}
\usepackage{algpseudocode}
\usepackage{multirow}
\usepackage{array}
\usepackage[font=small]{caption}
\usepackage{authblk}
\usepackage[authoryear,round,sort]{natbib}
\usepackage{xurl}
\usepackage{hyperref}
\hypersetup{
  hidelinks,
  pdftitle={Manifold Fitting by Successive Tangent-Space Projection},
  pdfauthor={Yuqing Xia, Bingjie Li, Zhigang Yao}
}
\theoremstyle{plain}
\newtheorem{theorem}{Theorem}
\newtheorem{lemma}{Lemma}
\newtheorem{corollary}{Corollary}
\newtheorem{proposition}{Proposition}
\newtheorem{definition}{Definition}
\theoremstyle{definition}
\newtheorem{assumption}{Assumption}
\theoremstyle{remark}
\newtheorem{remark}{Remark}

\newenvironment{keywords}
  {\par\smallskip\noindent\small\textbf{Keywords: }}
  {\par\medskip}

\newcommand{\bed}{\begin{definition}}
\newcommand{\eed}{\end{definition}}
\newcommand{\beq}{\begin{equation}}
\newcommand{\eeq}{\end{equation}}

\newcommand{\R}{\mathbb{R}}

\newcommand{\E}{\mathbb{E}}

\newcommand{\cov}{\mathrm{cov}}
\newcommand{\reach}{\mathrm{reach}}
\newcommand{\inj}{\mathrm{inj}}
\newcommand{\dist}{\mathrm{dist}}

\graphicspath{{./art/}}

\begin{document}

\title{Manifold Fitting by Successive Tangent-Space Projection}
\author[1]{Yuqing Xia}
\author[2]{Bingjie Li}
\author[3,*]{Zhigang Yao}
\affil[1]{Zhejiang University of Finance and Economics, School of Data
Sciences, Hangzhou, 310018, China}
\affil[2]{Department of Epidemiology and Biostatistics, Ministry of Education
Key Laboratory of Environment and Health, School of Public Health, Tongji
Medical College, Huazhong University of Science and Technology, Wuhan,
Hubei, China}
\affil[3]{Department of Statistics \& Data Science, National University of
Singapore, 21 Lower Kent Ridge Road, Singapore}
\date{}
\maketitle
\begingroup
\renewcommand{\thefootnote}{\fnsymbol{footnote}}
\footnotetext[1]{Corresponding author: \href{mailto:zhigang.yao@nus.edu.sg}{\texttt{zhigang.yao@nus.edu.sg}}.}
\endgroup

\begin{abstract}
{Manifold fitting seeks to recover the geometric structure underlying noisy
ambient observations. We propose Successive Tangent Space Projection
(STSP), an iterative manifold-fitting method that reduces normal noise
while preserving tangential variation. The former improves fitting
accuracy, whereas the latter helps retain geometric coverage. We
characterise the nearby population fixed-point set of STSP as a manifold-fitting
object. Under uniform sampling from a compact smooth manifold with positive
reach and isotropic Gaussian noise, this set lies within $O(\sigma^2)$ of
the underlying manifold, and nearby population orbits converge
geometrically to it. At the finite-sample level, with high probability,
fixed points in the local tube lie within $O(\sigma^2)$ of the underlying
manifold up to sampling error, and empirical orbits initialised within that
tube and generated from a fixed reference sample enter and remain in the same
neighbourhood. We further develop a
multi-scale extension, MS-STSP, designed to reduce curvature bias while
preserving the $O(\sigma^2)$ geometric localization order of STSP at both the population
and finite-sample levels. Numerical experiments show that STSP compares
favourably with competing methods in balancing fitting accuracy and
geometric coverage, and that MS-STSP reduces curvature-induced shrinkage,
particularly under high noise.}
\end{abstract}
\begin{keywords}
{Manifold fitting; Tangent-space projection; Geometric coverage;
Multi-scale extrapolation; Ambient-space denoising.}
\end{keywords}

\begingroup
\section{Introduction}

Manifold fitting aims to recover a smooth low-dimensional structure from
noisy observations in a high-dimensional ambient space.  It builds on the
manifold hypothesis, which views the apparent high dimensionality of data
as arising from variation around a much simpler underlying structure.
\citet{fefferman} connected this hypothesis to the statistical problem of
determining whether a distribution can be well approximated by a smooth
manifold, while \citet{whiteley2026} explained how such structure can arise
from latent variables, correlation, and stationarity.  The purpose of
manifold fitting is to estimate the geometric structure of possible
signals in their original space, including their local variation and
connectivity \citep{yaosu2026}.

The literature approaches manifold fitting from several directions.
Minimax theory establishes fundamental limits for estimating a manifold
under Hausdorff loss \citep{minmaxME,genovese2012b}.  Implicit
reconstruction methods define the fitted manifold as a ridge or a zero
set constructed from the data \citep{genovese2014,fefferman2018,yaoxiamf};
the definition is global, but the manifold must still be recovered
numerically.  In practice, numerical manifold fitting commonly
relies on neighbourhood information, either to recover an implicitly
defined manifold or to construct fitted points directly.  Two
representative local methods are local PCA and mean shift.  Local PCA
estimates tangent geometry from a neighbourhood covariance matrix and
projects points onto the resulting affine approximation \citep{local_pca}.
Structure-adaptive manifold estimation also uses local geometric
information, iteratively refining local weights to reconstruct a smooth
manifold and denoise the observations \citep{puchkin2022}.  Mean shift is a
mode-seeking procedure based on a kernel density estimate \citep{mean_shift},
while subspace-constrained mean shift (SCMS) adapts the mean-shift update to
estimate density ridges
\citep{ozertem2011locally,genovese2014}.

The differing behaviour of local PCA and mean shift highlights two aspects
of manifold fitting.  Fitting accuracy measures whether the fitted points
lie close to the underlying manifold, whereas geometric coverage measures
whether they represent all parts of it.  Figure~\ref{fig:intro-sphere-motivation}
illustrates how the two methods affect these criteria.  By retaining
tangential positions, local PCA preserves broad geometric coverage but
leaves visible variation normal to the sphere.  Mean shift updates points
along the full mean-shift vector, producing a more concentrated fit but also
uneven density and local gaps.

\begin{figure}[H]
  \centering
  \includegraphics[width=\linewidth]{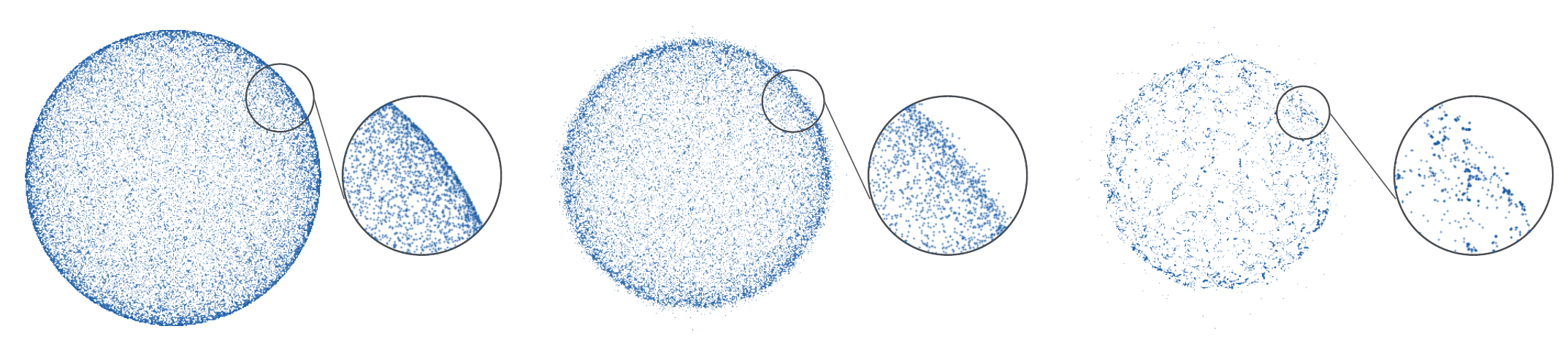}
  \vspace{-5pt}
  {\makebox[\linewidth]{%
    \makebox[.333\linewidth]{(a)}%
    \makebox[.333\linewidth]{(b)}%
    \makebox[.333\linewidth]{(c)}}}
  \caption{(a) The underlying sphere; (b) the local PCA output; and
  (c) the mean-shift output. The same near-boundary region is enlarged in
  each panel.}
  \label{fig:intro-sphere-motivation}
\end{figure}

{These observations motivate Successive Tangent Space Projection (STSP), an
iterative manifold-fitting algorithm whose nearby population fixed-point set
forms a fitted manifold. Each update removes the estimated normal
displacement from a point to its local mean while retaining its tangential
position. The normal correction improves fitting accuracy, whereas the
preservation of tangential position limits drift along the manifold and
helps retain geometric coverage. Repeated application of the update gives
an explicit construction of the fixed-point set. The multi-scale extension,
MS-STSP, combines results obtained at several neighbourhood scales through
squared-scale extrapolation to reduce curvature bias. Our contributions are
as follows.}

\begin{itemize}
  {
  \item \emph{Local geometric refinement.}
  We introduce STSP, which combines local averaging and tangent-space
  projection to correct normal displacement while retaining tangential
  variation.

  \item \emph{Fixed-point geometry and population convergence.}
  We show that the complete nearby population fixed-point set is a
  $d$-dimensional manifold whose nearest-point projection onto $\mathcal M$
  is a diffeomorphism. When $h$ is proportional to $\sigma$, this
  fixed-point manifold lies within $O(\sigma^2)$ of $\mathcal M$, and every
  sufficiently close population orbit converges geometrically to one of its
  points.

  \item \emph{Finite-sample orbit localization.}
  For a fixed reference sample, we derive uniform local-moment and
  tangent-projector bounds and use them to control empirical orbits
  initialised within the local tube. After a geometric transient, each orbit remains within
  $O(\sigma^2+h r_N)$ of the underlying manifold, where $r_N$ depends on the
  intrinsic local sample size $Nh^d$.

  \item \emph{Multi-scale correction.}
  We develop MS-STSP by combining corresponding fixed-reference branches
  through squared-scale extrapolation. The theory shows that stable
  extrapolation preserves the $O(\sigma^2)$ localization order, while the
  experiments investigate its ability to reduce curvature-induced
  shrinkage.}
\end{itemize}
\endgroup

\noindent Section~\ref{sec:STSP} states the model and assumptions and defines the STSP update. Section~\ref{sec:analysis} develops the local-moment expansions, population fixed-point manifold, and finite-sample localization theory. Section~\ref{sec:multiscale} analyses the curvature bias of local averaging and develops the multi-scale procedure. Section~\ref{Sect5} reports numerical studies. Section~\ref{sec:conclusion} concludes. Full proofs are given in the Supplementary Material.

\section{Manifold Fitting via Successive Tangent-Space Projection}
\label{sec:STSP}

{\subsection{Model, assumptions, and notation}}
\label{sec:model-assump}
Let $\mathcal{M}\subset \R^D$ be a compact $d$-dimensional manifold,
{with $1\le d<D$}. We observe
\[
    \mathbf{y}_i = \mathbf{x}_i + \sigma\boldsymbol\xi_i, \qquad i = 1,\dots,N,
\]
where $\mathbf{x}_i \in \mathcal{M}$ is the signal and $\boldsymbol\xi_i \stackrel{\mathrm{i.i.d.}}{\sim} \mathcal{N}(\mathbf0, I_D)$ is standardised ambient noise, independent of $\{\mathbf{x}_i\}_{i=1}^N$.

\begin{assumption}\label{ass:conditions}
The manifold and sampling satisfy the following.
\begin{enumerate}
    \item[\textup{(A1)}] The signals $\{\mathbf{x}_i\}_{i=1}^N$ are drawn independently from the uniform distribution on $\mathcal{M}$.
    \item[\textup{(A2)}] {$\mathcal{M}$ is a $C^5$ embedded submanifold without boundary. In geodesic normal charts of the radius used below, the embedding and its derivatives through order five are bounded by a fixed constant $K_{\mathrm{geo}}$.}
    \item[\textup{(A3)}] $\reach(\mathcal{M}) \ge \tau > 0$, where
    \[
    \reach(\mathcal M)=\sup\{r>0:\text{every $\mathbf y$ with $\dist(\mathbf y,\mathcal M)<r$
    has a unique nearest point in $\mathcal M$}\}.
    \]
    \item[\textup{(A4)}] {Let $\iota_{\mathcal M}=\inf_{x\in\mathcal M}\inj_x(\mathcal M)>0$ be the global injectivity radius of the induced Riemannian metric, and set $r_{\mathrm{geo}}=\min\{\tau,\iota_{\mathcal M}\}$. The dimensions $d,D$ are fixed, $\sigma<1$, and $\sigma/r_{\mathrm{geo}}$ is sufficiently small.}
\end{enumerate}
\end{assumption}

{Conditions (A2) and (A3) control the geometric complexity of $\mathcal{M}$: within a neighbourhood of radius $h \ll \tau$, the manifold deviates from its tangent space by at most order $h^2$, and the positive reach excludes sharp features and near self-intersections that would make recovery ill-posed. The injectivity radius in (A4) is the largest intrinsic scale on which geodesic normal coordinates are single-valued uniformly over $\mathcal M$. Condition (A4) ensures the noise is small relative to both the reach and injectivity scales, so that normal fluctuations can be distinguished from genuine structure.}

{We use the following notation throughout. For a nonempty set
$\mathcal S\subset\R^D$, define
$\dist(\mathbf y,\mathcal S)=\inf_{\mathbf s\in\mathcal S}
\|\mathbf y-\mathbf s\|$. For nonempty compact sets $\mathcal S_1$ and
$\mathcal S_2$, their Hausdorff distance is
$d_H(\mathcal S_1,\mathcal S_2)=
\max\{\sup_{u\in\mathcal S_1}\dist(u,\mathcal S_2),
\sup_{v\in\mathcal S_2}\dist(v,\mathcal S_1)\}$. Whenever the nearest point is unique,
$\Pi_{\mathcal S}(\mathbf y)$ denotes that point. In particular, if
$\dist(\mathbf y,\mathcal M)<\tau$, then (A3) gives the unique projection
$\Pi_{\mathcal M}(\mathbf y)$. For $x\in\mathcal M$, $T_x\mathcal M$ and
$N_x\mathcal M$ denote the tangent and normal spaces, $\mathbf P_x$ is the
orthogonal projector onto $T_x\mathcal M$, and
$\mathbf Q_x:=I_D-\mathbf P_x$ is the orthogonal projector onto
$N_x\mathcal M$. Thus $\Pi_{\mathcal S}$ returns a projected point,
whereas bold capital $\mathbf P$ and $\mathbf Q$ denote linear projector
operators. The symbol $\mathrm{II}_x$ denotes the second fundamental form,
and $I_m$ denotes the $m\times m$ identity matrix. For $\mathbf b\in\R^D$
and $r>0$, $B_{\mathrm E}(\mathbf b,r)=\{\mathbf y\in\R^D:
\|\mathbf y-\mathbf b\|\le r\}$ is the ambient Euclidean ball. For
$x\in\mathcal M$, $B_g(x,r)=\{z\in\mathcal M:d_g(x,z)\le r\}$ is the
geodesic ball, and $d_g$ is geodesic distance. The symbol $dA$ denotes the
induced $d$-dimensional Riemannian volume measure on $\mathcal M$. If
$0<r<\inj_x(\mathcal M)$, the exponential map
$\exp_x$ is one-to-one on $\{u\in T_x\mathcal M:\|u\|<r\}$; writing
$z=\exp_x(u)$ gives geodesic normal coordinates centred at $x$, and then
$d_g(x,z)=\|u\|$. Bold lowercase symbols denote observed, reference, or
query vectors, whereas plain lowercase symbols denote deterministic
manifold points or local-coordinate vectors. Hats denote empirical quantities,
and parenthesized superscripts denote iterates or repeated composition.
Constants may depend on
$d,D,\tau,K_{\mathrm{geo}},\iota_{\mathcal M}$ and $\mathrm{Area}(\mathcal M)$,
but not on $N,\sigma$, or $h$, and their values may change between
occurrences.}

{
\subsection{From an implicit fitting equation to an explicit construction}
\label{sec:implicit-explicit}

\citet{yaoxiamf} define a fitted manifold through the zero set of a
smoothed local normal residual, denoted here by $\mathbf G_h$:
\[
  \widehat{\mathcal M}_{h,{\mathrm{YX}}}
  =\{\mathbf y:\mathbf G_h(\mathbf y)=\mathbf0\}.
\]
The fitted locations have zero estimated normal fitting bias, and their
analysis places this zero-level set close to the underlying manifold.
This construction provides a geometric starting point for manifold fitting:
characterizing fitted locations through local normal residuals. However,
an implicit equation characterizes the fitted set but still requires a
numerical procedure to produce points on it. Their implementation minimizes
$\|\mathbf G_h(\mathbf y)\|_2^2$ by gradient descent with line search.
Membership in the fitted set requires a zero objective, whereas a stationary
numerical solution may have a positive objective. This distinction motivates
an explicit construction whose iterates can be analysed directly relative
to the underlying manifold.

For an affine-space correction, let $\mathbf m(\mathbf y)$ be a weighted
local centre, let $T_{\mathrm{loc}}(\mathbf y)$ be a local tangent space, and let
$\mathbf P_{\mathrm{loc}}(\mathbf y)$ be its orthogonal projector. The associated
local affine space and normal residual are
\[
 \mathcal A(\mathbf y)=\mathbf m(\mathbf y)+T_{\mathrm{loc}}(\mathbf y),
 \qquad
 \mathbf R(\mathbf y)
 =\mathbf y-\Pi_{\mathcal A(\mathbf y)}(\mathbf y)
 =\{I_D-\mathbf P_{\mathrm{loc}}(\mathbf y)\}
   \{\mathbf y-\mathbf m(\mathbf y)\}.
\]
The zero-residual set of this affine-space construction is
\[
  \widehat{\mathcal M}
  =\{\mathbf y:\mathbf R(\mathbf y)=\mathbf0\}.
\]
Its geometric condition is that the displacement from the local centre has
zero projection onto the estimated normal space. Accordingly, for any
affine-space residual of the form above, define the corresponding
explicit correction by
\[
 \mathcal T(\mathbf y)
 =\mathbf m(\mathbf y)+\mathbf P_{\mathrm{loc}}(\mathbf y)
   \{\mathbf y-\mathbf m(\mathbf y)\}
 =\mathbf y-\mathbf R(\mathbf y).
\]
Then
$\widehat{\mathcal M}=\operatorname{Fix}(\mathcal T)$: the implicit
zero-residual equation is equivalently the fixed-point equation of a local
tangent-space correction.

Motivated by the normal-residual characterization, STSP constructs an
explicit iteration from hard-radius conditional moments. Section~\ref{sec:analysis}
studies its fixed-point set and local trajectories. Let $\mathbf Y$ denote a
generic observation from the model and set
$p_h(\mathbf y)=\Pr\{\mathbf Y\in B_{\mathrm E}(\mathbf y,h)\}$. Wherever
$p_h(\mathbf y)>0$, let $\boldsymbol\mu_h(\mathbf y)$ and
$\boldsymbol\Sigma_h(\mathbf y)$ be the conditional mean and covariance of
$\mathbf Y$ given $\mathbf Y\in B_{\mathrm E}(\mathbf y,h)$. Let
$\mathcal D_h$ collect points with positive local probability and a strict
eigengap. For $\mathbf y\in\mathcal D_h$, let $T_h(\mathbf y)$ be the uniquely
defined leading-$d$ eigenspace of $\boldsymbol\Sigma_h(\mathbf y)$, and let
$\mathbf P_h(\mathbf y)$ be its orthogonal projector. The population domain,
local affine space, residual, fitting object, and explicit correction are
\begin{flalign*}
 &\mathcal D_h
 =\bigl\{\mathbf y:p_h(\mathbf y)>0,\ 
   \lambda_d\{\boldsymbol\Sigma_h(\mathbf y)\}>
   \lambda_{d+1}\{\boldsymbol\Sigma_h(\mathbf y)\}\bigr\}, &&\\
 &\mathcal A_h(\mathbf y)
 =\boldsymbol\mu_h(\mathbf y)+T_h(\mathbf y), &&\\
 &\mathbf F_h(\mathbf y)
 =\{I_D-\mathbf P_h(\mathbf y)\}
   \{\mathbf y-\boldsymbol\mu_h(\mathbf y)\},\qquad
   \mathcal S_{h,{\mathrm{pop}}}
 =\{\mathbf y\in\mathcal D_h:\mathbf F_h(\mathbf y)=\mathbf0\}, &&\\
 &\mathcal T_h(\mathbf y)
 =\Pi_{\mathcal A_h(\mathbf y)}(\mathbf y)
 =\boldsymbol\mu_h(\mathbf y)+\mathbf P_h(\mathbf y)
   \{\mathbf y-\boldsymbol\mu_h(\mathbf y)\}
 =\mathbf y-\mathbf F_h(\mathbf y). &&
\end{flalign*}
Consequently,
$\mathcal S_{h,{\mathrm{pop}}}=\operatorname{Fix}(\mathcal T_h)$, and manifold
fitting is implemented by the successive corrections
$\mathbf y^{(t+1)}=\mathcal T_h(\mathbf y^{(t)})$. Figure~\ref{fig:implicit-explicit} illustrates normal-residual fitting
and successive tangent-space corrections.}

\begin{figure}[H]
  \centering
  \begin{minipage}[t]{.485\linewidth}
    \centering
    \includegraphics[width=\linewidth]{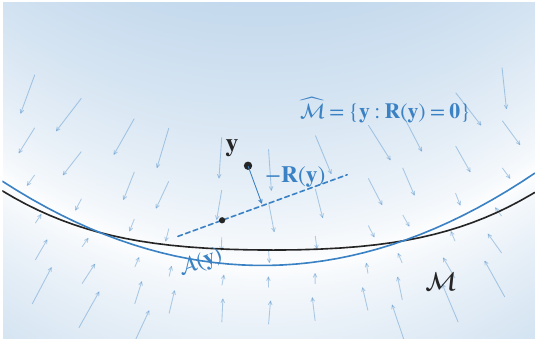}
    \smallskip

    \textup{(a) Residual zero set}
  \end{minipage}%
  \hfill
  \begin{minipage}[t]{.485\linewidth}
    \centering
    \includegraphics[width=\linewidth]{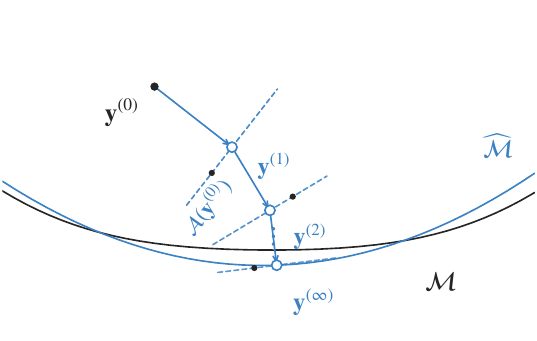}
    \smallskip

    \textup{(b) Successive tangent-space projections}
  \end{minipage}
  \caption[Implicit and explicit manifold fitting]{{Implicit and explicit manifold fitting. (a) Blue
  intensity represents the length $\|\mathbf R(\mathbf y)\|$ of the local
  affine-space residual, and the white locus is the generic fitted set
  $\widehat{\mathcal M}=\{\mathbf y:\mathbf R(\mathbf y)=\mathbf0\}$.
  (b) A schematic sequence of projections onto the query-local affine spaces
  $\mathcal A(\mathbf y)$ approaching the fitted set. For STSP, the residual
  and affine space are $\mathbf R=\mathbf F_h$ and
  $\mathcal A=\mathcal A_h$, respectively.}}
  \label{fig:implicit-explicit}
\end{figure}

\subsection{The STSP update}
\label{sec:algorithm}
{
Set $\mathbf z_j=\mathbf y_j$ and keep
$\mathcal Z=\{\mathbf z_j\}_{j=1}^N$ fixed throughout the algorithm.
At iteration $t$ only the query locations $\widehat{\mathbf y}_i^{(t)}$
change.
For each query, first form the radius-$h$ neighbourhood in the fixed
reference cloud,
\[
  \mathcal I_{\mathcal Z,h}(\mathbf y)
  = \bigl\{ j : \|\mathbf{z}_j-\mathbf y\|\le h \bigr\}.
\]
Let $n_{\mathcal Z,h}(\mathbf y)
=|\mathcal I_{\mathcal Z,h}(\mathbf y)|$ denote the corresponding count.
The radius $h$ must be large enough for stable estimation and small enough
that the manifold is nearly linear within it. Whenever
$n_{\mathcal Z,h}(\mathbf y)>0$, compute the local mean
$\widehat{\boldsymbol\mu}_{\mathcal Z,h}(\mathbf y)
=n_{\mathcal Z,h}(\mathbf y)^{-1}
\sum_{j\in\mathcal I_{\mathcal Z,h}(\mathbf y)}\mathbf z_j$
and covariance
\[
  \widehat{\boldsymbol\Sigma}_{\mathcal Z,h}(\mathbf y)
    = \frac{1}{n_{\mathcal Z,h}(\mathbf y)}
      \sum_{j\in\mathcal I_{\mathcal Z,h}(\mathbf y)}
      \{\mathbf z_j-\widehat{\boldsymbol\mu}_{\mathcal Z,h}(\mathbf y)\}
      \{\mathbf z_j-\widehat{\boldsymbol\mu}_{\mathcal Z,h}(\mathbf y)\}^{\top}.
\]
If, in addition, the $d$th covariance eigenvalue is strictly greater than the
$(d+1)$st, let $\widehat T_{\mathcal Z,h}(\mathbf y)$ be its uniquely defined
leading-$d$ eigenspace, and let
$\widehat{\mathbf P}_{\mathcal Z,h}(\mathbf y)$ be the orthogonal projector
onto $\widehat T_{\mathcal Z,h}(\mathbf y)$. On this empirical domain, define
\[
 \widehat{\mathcal T}_{\mathcal Z,h}(\mathbf y)
 =\widehat{\boldsymbol\mu}_{\mathcal Z,h}(\mathbf y)
  +\widehat{\mathbf P}_{\mathcal Z,h}(\mathbf y)
   \{\mathbf y-\widehat{\boldsymbol\mu}_{\mathcal Z,h}(\mathbf y)\}.
\]
All iterations apply this same deterministic map through
$\widehat{\mathbf y}_i^{(t+1)}=\widehat{\mathcal T}_{\mathcal Z,h}
(\widehat{\mathbf y}_i^{(t)})$. Conditional on
$\mathcal Z$, the $N$ query orbits are uncoupled.  The update retains the
estimated tangential component relative to the local mean and discards the
estimated normal component.
The hard-radius operator is well defined along a trajectory when every
visited neighbourhood is nonempty and its $d$th and $(d+1)$st covariance
eigenvalues are separated. The finite-sample theory below supplies uniform
conditions for both properties.

\begin{algorithm}[!ht]

\caption{Successive Tangent-Space Projection (STSP)}
\label{alg:stsp}
\begin{algorithmic}[1]
\Require Reference points
  \(\mathcal Z=\{\mathbf z_j\}_{j=1}^N\subset\R^D\), initial queries
  \(\widehat{\mathbf y}_i^{(0)}=\mathbf z_i\), dimension \(d\), radius
  \(h\), and iteration horizon \(L\).
\Ensure Fitted queries \(\{\widehat{\mathbf y}_i^{(L)}\}_{i=1}^N\).
\For{\(t=0\) to \(L-1\)}
  \For{\(i=1\) to \(N\)}
    \State
      \(\mathcal I_i^{(t)}
       :=\mathcal I_{\mathcal Z,h}(\widehat{\mathbf y}_i^{(t)})\).
    \State Set
      \(\widehat{\boldsymbol\mu}_i^{(t)}
      :=\widehat{\boldsymbol\mu}_{\mathcal Z,h}(\widehat{\mathbf y}_i^{(t)})\) and
      \(\widehat{\boldsymbol\Sigma}_i^{(t)}
      :=\widehat{\boldsymbol\Sigma}_{\mathcal Z,h}(\widehat{\mathbf y}_i^{(t)})\),
      computed from \(\{\mathbf z_j:j\in\mathcal I_i^{(t)}\}\).
    \State Set \(\widehat{\mathbf P}_i^{(t)}
      :=\widehat{\mathbf P}_{\mathcal Z,h}(\widehat{\mathbf y}_i^{(t)})\), the
      leading-\(d\) projector of \(\widehat{\boldsymbol\Sigma}_i^{(t)}\).
    \State
      \(\widehat{\mathbf y}_i^{(t+1)}
       =\widehat{\boldsymbol\mu}_i^{(t)}
        +\widehat{\mathbf P}_i^{(t)}
          (\widehat{\mathbf y}_i^{(t)}-\widehat{\boldsymbol\mu}_i^{(t)})\).
  \EndFor
\EndFor
\State \Return \(\{\widehat{\mathbf y}_i^{(L)}\}_{i=1}^N\).
\end{algorithmic}
\end{algorithm}
}

{
\subsection{Relation to local PCA, mean shift, and SCMS}
\label{sec:relationship}

Write $\mathbf Q_h(\mathbf y)=I_D-\mathbf P_h(\mathbf y)$. The population
STSP update has the projected-shift form
\begin{equation}
\label{eq:stsp-projected-shift}
 \mathcal T_h(\mathbf y)
 =\mathbf y+\mathbf Q_h(\mathbf y)
   \{\boldsymbol\mu_h(\mathbf y)-\mathbf y\}.
\end{equation}
It therefore applies the estimated normal component of the displacement
from the query to its hard-radius conditional mean while preserving
the query's tangential component.

The local-PCA comparator used in Section~\ref{Sect5} applies one
projection onto an estimated affine tangent space. STSP recomputes the
query-local conditional moments as the query moves and repeatedly applies
the normal correction in \eqref{eq:stsp-projected-shift}, while the
distribution or sample that supplies those moments remains fixed.
Mean shift instead takes the full displacement to the local mean.

This algebraic form is closely related to SCMS\@. A standard SCMS update can
be written schematically as
\[
 \mathcal T_{{\mathrm{SCMS}},h}(\mathbf y)
 =\mathbf y+\widetilde{\mathbf Q}_h(\mathbf y)
   \{\widetilde{\boldsymbol\mu}_h(\mathbf y)-\mathbf y\},
\]
where $\widetilde{\boldsymbol\mu}_h(\mathbf y)-\mathbf y$ is a kernel
mean-shift vector and $\widetilde{\mathbf Q}_h(\mathbf y)$ projects onto the
estimated normal eigenspace of a kernel-density Hessian. \begingroup For Gaussian
weights, write $\boldsymbol\mu_{G,h}$ and $\boldsymbol\Sigma_{G,h}$ for the
weighted mean and covariance, and $f_{G,h}$ for the corresponding Gaussian
kernel density estimate. Then
$\nabla\log f_{G,h}=(\boldsymbol\mu_{G,h}-\mathbf y)/h^2$ and
$\nabla^2\log f_{G,h}=\boldsymbol\Sigma_{G,h}/h^4-I_D/h^2$.
Thus a projected update using Gaussian moments and the leading covariance
eigenspace has the same local form as SCMS-log-KDE. The STSP studied here
instead uses hard-radius conditional moments. Our contribution concerns the
geometry of its nearby fixed-point set and finite-sample control of its
iterates, rather than the projected mean-shift form alone.\endgroup\ 
Section~\ref{sec:analysis} establishes these local results.
Table~\ref{tab:projected-local-updates} summarizes the local quantities,
updates, and targets of these methods.

\begin{table}[H]
\centering

\caption{Schematic projection-based comparison of the local updates
considered here.}
\label{tab:projected-local-updates}
\small
\setlength{\tabcolsep}{4pt}
\begin{tabular}{>{\raggedright\arraybackslash}p{.14\linewidth}
                >{\raggedright\arraybackslash}p{.25\linewidth}
                >{\raggedright\arraybackslash}p{.27\linewidth}
                >{\raggedright\arraybackslash}p{.24\linewidth}}
\hline
Method & Local quantities & Update & Target or role \\
\hline
Mean shift & Local mean & Full displacement toward the local mean & Density modes \\
SCMS & Kernel mean-shift vector and normal eigenspace of the density Hessian & Normal projection of the kernel mean-shift vector & Density ridge \\
Local PCA & Estimated affine tangent space & Projection onto the affine tangent space & Local affine approximation \\
STSP & Local mean and covariance within a fixed radius & Repeated normal correction using fixed reference data & Fixed-point manifold near the underlying manifold \\
\hline
\end{tabular}
\end{table}

}

\section{Theoretical Analysis}
\label{sec:analysis}

Fix $c>0$ and set $h=c\sigma$, with
$c\sigma\le r_{\mathrm{geo}}/2$.  The population results are stated as
$\sigma\to0$.  In the finite-sample results, $N\to\infty$ jointly with
$\sigma\to0$. With $d,D,c,A$ fixed, the condition
$N\sigma^{d+2}/\{D\log(1/\sigma)+\log(1/\delta)\}\ge C_0$ for a sufficiently
large constant $C_0$ gives uniform empirical consistency and preserves the
population $O(\sigma^2)$ localization order.

\subsection{Flat reference model and uniform local expansions}
\label{sec:flat-expansion}
Proofs of the results in this subsection are given in Supplementary
Sections~S2.1--S2.2.  Let
$\Phi_\rho=\{\mathbf y:\dist(\mathbf y,\mathcal M)\le\rho\}$. For
$\mathbf y\in\Phi_\rho$, write $x=\Pi_{\mathcal M}(\mathbf y)$ and
$\boldsymbol\nu(\mathbf y)=\mathbf y-x\in N_x\mathcal M$. We analyse the
population operator on $\Phi_{A\sigma}$ for a fixed $A<\infty$.

Put $q=D-d$.  For $a\in\R^q$, let
$W=(W_{\mathrm T},W_{\mathrm N})\in\R^d\times\R^q$ have
density
\begin{equation}
\label{eq:flat-law}
 p_{c,a}(w)=
 \frac{\mathbf 1\{\|w\|\le c\}
       \exp\{-\|a+w_{\mathrm N}\|^2/2\}}
      {\int_{\|v\|\le c}\exp\{-\|a+v_{\mathrm N}\|^2/2\}\,dv}.
\end{equation}
Let $\E_{c,a}$ and $\cov_{c,a}$ denote expectation and covariance under
$p_{c,a}$, respectively. Define
\[
 m_c(a)=a+\E_{c,a}W_{\mathrm N},\qquad S_c(a)=\cov_{c,a}(W).
\]
The law in \eqref{eq:flat-law} is the exact scaled conditional law for an
affine $d$-plane. We call $(c,A)$ \emph{spectrally admissible} when there
is $g_0>0$ such that, for every $\|a\|\le A$,
\begin{equation}
\label{eq:admissible}
 \lambda_{\min}\{\cov_{c,a}(W_{\mathrm T})\}
 -\lambda_{\max}\{\cov_{c,a}(W_{\mathrm N})\}\ge g_0.
\end{equation}
The two ingredients have distinct roles. Spectral admissibility identifies
the leading-$d$ covariance eigenspace with the tangent space, whereas the
score identity
\begin{equation}
\label{eq:normal-score-identity}
 \mathrm Dm_c(a)=I_q-\cov_{c,a}(W_{\mathrm N})
\end{equation}
and the Brascamp--Lieb inequality imply
$\sup_{\|a\|\le A}\|\mathrm Dm_c(a)\|_{\mathrm{op}}\le\kappa_0<1$.

\begin{remark}[Scale-free spectral condition]
The normalised gap in \eqref{eq:admissible} depends only on
$(d,q,c,A)$, where $q=D-d$, and not on $\sigma$, the reach, or other
geometric features of $\mathcal M$. It can therefore be checked once for
each dimension pair, radius ratio, and normalised tube width. Supplementary
Section~S2.1 reports the minimum gaps for the values used in the numerical
studies.
\end{remark}

\begin{lemma}[Uniform local-moment expansion]
\label{lem:pop-mean-cov}
Under Assumption~\ref{ass:conditions} and spectral admissibility, let
$\mathbf y\in\Phi_{A\sigma}$, $x=\Pi_{\mathcal M}(\mathbf y)$, and write
$\boldsymbol\nu(\mathbf y)=\sigma a$ in orthonormal tangent--normal frames at
$x$. Uniformly over $\Phi_{A\sigma}$,
\[
 \boldsymbol\mu_h(\mathbf y)
 =x+\sigma(\mathbf0_d,m_c(a))+O(\sigma^2),\qquad
 \boldsymbol\Sigma_h(\mathbf y)=\sigma^2S_c(a)+O(\sigma^3),
\]
and
\[
 \|\mathbf P_h(\mathbf y)-\mathbf P_x\|_{\mathrm{op}}\le C\sigma.
\]
Consequently,
\[
 \mathbf Q_x\{\mathcal T_h(\mathbf y)-x\}
   =\sigma m_c(a)+O(\sigma^2),\qquad
 \|\mathbf P_x\{\mathcal T_h(\mathbf y)-x\}\|\le C\sigma^2.
\]
\end{lemma}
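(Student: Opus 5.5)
The plan is to write the conditional law of $\mathbf Y$ given $\mathbf Y\in B_{\mathrm E}(\mathbf y,h)$ in local coordinates at $x$ and compare it with the flat law \eqref{eq:flat-law}. The density of $\mathbf Y$ is
\[
p_\sigma(\mathbf w)=\mathrm{Area}(\mathcal M)^{-1}\int_{\mathcal M}(2\pi\sigma^2)^{-D/2}e^{-\|\mathbf w-z\|^2/(2\sigma^2)}\,dA(z).
\]
I will rescale $\mathbf w=x+\sigma w$ with $\|w-(\mathbf 0_d,a)\|\le c$, that is, $\|w'\|\le c$ after the shift $w=(\mathbf0_d,a)+w'$.

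\emph{Step 1: the rescaled density.} Only the part of $\mathcal M$ within geodesic distance $O(\sigma\sqrt{\log(1/\sigma)})$ of $x$ contributes beyond an $O(\sigma^k)$ error; this uses (A3) and (A4) to rule out far sheets. On that part I parametrize $z=\exp_x(\sigma u)$ and use (A2):
\[
z=x+\sigma(u,0)+\tfrac{\sigma^2}{2}(0,\mathrm{II}_x(u,u))+O(\sigma^3\|u\|^3),
\]
with volume factor $1+O(\sigma^2\|u\|^2)$. Substituting and integrating out $u$ gives the rescaled density
\[
\tilde p(w)=C_\sigma\,e^{-\|w_{\mathrm N}\|^2/2}\{1+\sigma\,\ell_x(w)+O(\sigma^2)\},
\]
uniformly on the compact set $\|w\|\le A+c$. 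Here $\ell_x$ is a polynomial whose coefficients are bounded by $K_{\mathrm{geo}}$; it comes from the cross term $w_{\mathrm N}\cdot\mathrm{II}_x(u,u)$ after Gaussian integration. The remainder is controlled because the Gaussian moments of $u$ are finite.

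\emph{Step 2: the moments.} The normalizing constant cancels. Since the ball constraint becomes $\|w'\|\le c$, the conditional law of $w'$ is $p_{c,a}\cdot\{1+\sigma\tilde\ell+O(\sigma^2)\}$, renormalized. Ratios of bounded integrals over a compact ball, with denominator bounded below uniformly in $\|a\|\le A$, are Lipschitz in such perturbations. This gives
\[
\E w=(\mathbf0,a)+(\mathbf 0,\E_{c,a}W_{\mathrm N})+O(\sigma),\qquad \cov w=S_c(a)+O(\sigma).
\]
For the mean, note that the tangential part of $\E_{c,a}W$ vanishes by the symmetry $w_{\mathrm T}\mapsto-w_{\mathrm T}$ of $p_{c,a}$. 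Undoing the scaling gives $\boldsymbol\mu_h=x+\sigma(\mathbf0,m_c(a))+O(\sigma^2)$ and $\boldsymbol\Sigma_h=\sigma^2S_c(a)+O(\sigma^3)$. Uniformity over $\Phi_{A\sigma}$ follows because every constant depends only on $K_{\mathrm{geo}},\tau,c,A,d,D$.

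\emph{Step 3: the projector.} In the same symmetry, $\cov_{c,a}(W_{\mathrm T},W_{\mathrm N})=0$, so $S_c(a)$ is block diagonal. By \eqref{eq:admissible}, its top-$d$ eigenspace is exactly $T_x\mathcal M$, with gap at least $g_0$. Davis--Kahan applied to $\sigma^{-2}\boldsymbol\Sigma_h=S_c(a)+O(\sigma)$ then gives $\|\mathbf P_h-\mathbf P_x\|_{\mathrm{op}}\le C\sigma/g_0$. It also shows the eigengap is positive, so $\mathbf y\in\mathcal D_h$.

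\emph{Step 4: the update.} Write $\mathcal T_h(\mathbf y)-x=(\boldsymbol\mu_h-x)+\mathbf P_h(\mathbf y-\boldsymbol\mu_h)$, and note that $\mathbf y-\boldsymbol\mu_h=\sigma(\mathbf0,a-m_c(a))+O(\sigma^2)$ is normal to leading order. Then $\mathbf P_h(\mathbf y-\boldsymbol\mu_h)=\mathbf P_x(\mathbf y-\boldsymbol\mu_h)+O(\sigma)\cdot O(\sigma)=O(\sigma^2)$. Applying $\mathbf Q_x$ and $\mathbf P_x$ gives both claims.

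The main obstacle is Step 1: making the curvature expansion and the tail truncation uniform in $x$ and $a$ with an honest $O(\sigma^2)$ remainder after conditioning. In particular, the $O(\sigma)$ correction $\ell_x$ must not shift the tangential mean at first order. I expect to get the $\sigma^2$ accuracy in the mean from the parity of $\ell_x$: it is odd in $w_{\mathrm N}$-coupled terms but even in $w_{\mathrm T}$. For the lemma as stated, however, an $O(\sigma)$ relative error in the rescaled mean already suffices for $O(\sigma^2)$ in original units, which is a safe fallback.
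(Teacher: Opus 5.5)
Your proposal is correct and follows essentially the same route as the paper: the Gaussian-convolution density is expanded in geodesic coordinates with a $\sqrt{\log(1/\sigma)}$ truncation, the conditional law on the fixed rescaled ball is reduced to $p_{c,a}$ up to $O(\sigma)$, tangent reflection gives the zero tangential mean and the block structure, Davis--Kahan is applied with gap $g_0$, and the update is handled by the $O(\sigma)\cdot O(\sigma)$ projector-error argument. The paper never computes your explicit correction $\ell_x$ or uses its parity; an $O(\sigma)$ remainder in the rescaled law (your stated fallback) is all it needs, though it also records a $C^1$ version of these expansions that is used later but not for this statement.
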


\subsection{Population fixed-point manifold and attraction}
\label{sec:population-attraction}
Proofs of the population recursion and fixed-point manifold theorem are
given in Supplementary Section~S2.2. Throughout this subsection,
$(c,A)$ is spectrally admissible and $c\sigma\le r_{\mathrm{geo}}/2$.

\begin{theorem}[Local normal-distance recursion]
\label{thm:contraction}
Under Assumption~\ref{ass:conditions} and the spectral admissibility
condition above, there are $\sigma_{\mathrm{con}}>0$, $C<\infty$, and
$\kappa\in(\kappa_0,1)$ such that, for $\sigma\le\sigma_{\mathrm{con}}$ and
$\mathbf y\in\Phi_{A\sigma}$,
\[
 \dist\bigl(\mathcal T_h(\mathbf y),\mathcal M\bigr)
 \le \kappa\,\dist(\mathbf y,\mathcal M)+C\sigma^2.
\]
For sufficiently small $\sigma$, $\Phi_{A\sigma}$ is forward invariant,
and any population orbit starting there obeys
\[
 \dist\bigl(\mathcal T_h^{(t)}(\mathbf y),\mathcal M\bigr)
 \le\kappa^t\dist(\mathbf y,\mathcal M)
 +\frac{C\sigma^2}{1-\kappa}.
\]
\end{theorem}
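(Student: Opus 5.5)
The plan is to derive the one-step recursion directly from Lemma~\ref{lem:pop-mean-cov}, combined with the contraction property of $m_c$ that comes from the score identity \eqref{eq:normal-score-identity}. The iterated bound then follows by induction once forward invariance is checked.

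\emph{Step 1: Lipschitz control of $m_c$.} Fix $\mathbf y\in\Phi_{A\sigma}$ and set $x=\Pi_{\mathcal M}(\mathbf y)$, with $\boldsymbol\nu(\mathbf y)=\sigma a$ and $\|a\|\le A$, so that $\dist(\mathbf y,\mathcal M)=\sigma\|a\|$. By symmetry of the flat law \eqref{eq:flat-law} under $w_{\mathrm N}\mapsto -w_{\mathrm N}$ combined with $a\mapsto -a$, $m_c(0)=0$. Integrating $\mathrm Dm_c$ along the segment from $0$ to $a$, which lies in the ball $\|a\|\le A$, and using $\sup\|\mathrm Dm_c\|_{\mathrm{op}}\le\kappa_0$ gives $\|m_c(a)\|\le\kappa_0\|a\|$.

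\emph{Step 2: the one-step bound.} Write $\mathbf y'=\mathcal T_h(\mathbf y)$. Since $x\in\mathcal M$,
\[
\dist(\mathbf y',\mathcal M)\le\|\mathbf y'-x\|\le\|\mathbf Q_x(\mathbf y'-x)\|+\|\mathbf P_x(\mathbf y'-x)\|.
\]
By the consequence part of Lemma~\ref{lem:pop-mean-cov}, the first term is at most $\sigma\|m_c(a)\|+C\sigma^2\le\kappa_0\,\sigma\|a\|+C\sigma^2$, and the second is at most $C\sigma^2$. Hence
\[
\dist(\mathbf y',\mathcal M)\le\kappa_0\,\dist(\mathbf y,\mathcal M)+C\sigma^2\le\kappa\,\dist(\mathbf y,\mathcal M)+C\sigma^2
\]
for any $\kappa\in(\kappa_0,1)$. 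The Lemma's uniformity over $\Phi_{A\sigma}$ makes $C$ independent of $\mathbf y$. The slack $\kappa>\kappa_0$ is kept to absorb any $O(\sigma)\|a\|$-type remainder, should the Lemma's error be tracked relative to $\|a\|$.

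\emph{Step 3: forward invariance and iteration.} If $\dist(\mathbf y,\mathcal M)\le A\sigma$, then $\dist(\mathbf y',\mathcal M)\le\kappa A\sigma+C\sigma^2\le A\sigma$ as soon as $C\sigma\le(1-\kappa)A$; this fixes $\sigma_{\mathrm{con}}$. Forward invariance keeps every iterate in $\Phi_{A\sigma}$, where the recursion applies. Induction on $t$ and summing the geometric series $\sum_{s<t}\kappa^s\le 1/(1-\kappa)$ then give the stated orbit bound.

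\emph{Main obstacle.} The delicate point is Step~1: the hypotheses must make $\kappa_0<1$ hold uniformly on the whole ball $\|a\|\le A$ rather than only near $a=0$. This needs $0<\cov_{c,a}(W_{\mathrm N})\le I_q$ in the positive semidefinite order, so that $\mathrm Dm_c(a)=I_q-\cov_{c,a}(W_{\mathrm N})$ is symmetric with spectrum in $[0,1)$.
\begin{itemize}
\item The upper bound $\cov_{c,a}(W_{\mathrm N})\le I_q$ is Brascamp--Lieb: the law is the Gaussian restricted to the convex set $\{\|w\|\le c\}$, which is strongly log-concave with unit Hessian.
\item The strict positivity comes from the compact support, uniformly on the compact ball by continuity.
\end{itemize}
Since the statement takes these facts as given alongside spectral admissibility, I would invoke them directly. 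I would still verify that $\kappa_0$ does not depend on $\sigma$, which holds because the flat law involves only $(c,a)$.
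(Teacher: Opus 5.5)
Your proposal is correct and follows the paper's proof essentially step for step. The steps match: $m_c(\mathbf 0_q)=\mathbf 0_q$ by reflection symmetry, $\|m_c(a)\|\le\kappa_0\|a\|$ by the mean-value theorem, the one-step bound $\kappa_0\,\dist(\mathbf y,\mathcal M)+C\sigma^2$ from Lemma~\ref{lem:pop-mean-cov}, and forward invariance from $\kappa A\sigma+C\sigma^2\le A\sigma$.

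The only difference is cosmetic. You bound $\dist(\mathcal T_h(\mathbf y),\mathcal M)\le\|\mathcal T_h(\mathbf y)-x\|$ through a tangent--normal split, whereas the paper compares $\mathcal T_h(\mathbf y)$ with the normal-fibre point $x+\sigma(\mathbf 0_d,m_c(a))$ and uses that distance to $\mathcal M$ is $1$-Lipschitz. One small point on your side remark about $\kappa_0<1$: the joint law is flat in the tangent directions, so it is not strongly log-concave. The paper therefore applies Brascamp--Lieb to the $W_{\mathrm N}$-marginal after integrating out the tangent coordinates. Since the statement already supplies $\kappa_0<1$, this does not affect your proof.
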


To lift normal localization to ambient convergence, we identify the
complete nearby zero set of the update residual and control the map
transverse to that set. Recall $\mathbf F_h$ and
$\mathcal S_{h,{\mathrm{pop}}}$ from Section~\ref{sec:implicit-explicit}.

\begin{theorem}[Population fixed-point manifold and local orbit convergence]
\label{thm:orbit-convergence}
Fix $0<A_0<A$.  Under the conditions of
Theorem~\ref{thm:contraction}, there are
$C,C_{\mathrm{step}},\rho_0>0$ and $\bar\kappa\in(0,1)$ such that, for all
sufficiently small $\sigma$, the
fixed-point set in $\Phi_{A_0\sigma}$ is the compact $C^1$ normal graph
\[
 \mathcal M_h=\{x+b_h(x):x\in\mathcal M,\ b_h(x)\in N_x\mathcal M\},
 \qquad \|b_h\|_{C^1}\le C\sigma^2.
\]
Its nearest-point projection onto $\mathcal M$ is a $C^1$ diffeomorphism
and
\[
 d_H(\mathcal M_h,\mathcal M)\le C\sigma^2.
\]
There is a neighbourhood
$\mathcal U_h=\{\mathbf y:\dist(\mathbf y,\mathcal M_h)<\rho_0\sigma\}
\subset\Phi_{A\sigma}$ such that every
$\mathbf y^{(0)}\in\mathcal U_h$ generates an orbit
$\mathbf y^{(t+1)}=\mathcal T_h(\mathbf y^{(t)})$ that converges to a
fixed point $\mathbf y^{(\infty)}\in\mathcal M_h$.  More precisely,
\[
 \dist(\mathbf y^{(t)},\mathcal M_h)
 \le\bar\kappa^t\dist(\mathbf y^{(0)},\mathcal M_h)
\]
and
\[
 \|\mathbf y^{(t)}-\mathbf y^{(\infty)}\|
 \le
 \frac{C_{\mathrm{step}}\bar\kappa^t}{1-\bar\kappa}
 \dist(\mathbf y^{(0)},\mathcal M_h).
\]
In particular,
$\dist(\mathbf y^{(\infty)},\mathcal M)\le C\sigma^2$.
\end{theorem}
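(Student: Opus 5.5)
The plan is to work in tube coordinates $\mathbf y=x+\sigma\,\mathbf E_x a$, where $x\in\mathcal M$, $a\in\R^q$, and $\mathbf E_x$ is a local orthonormal normal frame. In these coordinates I will study the normal fixed-point equation
\[
 \Psi_\sigma(x,a):=\sigma^{-1}\mathbf Q_x\mathbf F_h(x+\sigma\mathbf E_xa)=0 .
\]
By Lemma~\ref{lem:pop-mean-cov} and $\mathbf F_h=\mathbf y-\mathcal T_h(\mathbf y)$ we have
\[
 \Psi_\sigma(x,a)=a-m_c(a)+O(\sigma).
\]
The leading term $a\mapsto a-m_c(a)$ vanishes at $a=0$: $m_c(0)=0$ by the symmetry $w_{\mathrm N}\mapsto-w_{\mathrm N}$ of $p_{c,0}$. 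Its derivative $I_q-\mathrm Dm_c(a)=\cov_{c,a}(W_{\mathrm N})$ is positive definite. Moreover $\|\mathrm Dm_c\|_{\mathrm{op}}\le\kappa_0<1$, so the map is injective on $\|a\|\le A$, and $a=0$ is its only zero there.

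\textbf{Step 1 (the graph).} For each $x$ I will apply a quantitative implicit function / contraction argument to
\[
 a\mapsto \sigma^{-1}\mathbf Q_x\{\mathcal T_h(x+\sigma\mathbf E_xa)-x\}.
\]
This map is $a\mapsto m_c(a)+O(\sigma)$ with Lipschitz constant at most $\kappa<1$. That gives a unique zero $a_h(x)$ with $\|a_h(x)\|\le C\sigma$. The tangential part of $\mathbf F_h$ causes no difficulty: at a point with $\mathbf Q_x\mathbf F_h=0$, the relation $\mathbf F_h=\mathbf Q_h\mathbf F_h$ together with $\|\mathbf Q_h-\mathbf Q_x\|\le C\sigma$ forces $\mathbf F_h=0$. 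Hence every fixed point in $\Phi_{A_0\sigma}$ is of the form $x+\sigma\mathbf E_xa_h(x)$, and this point is indeed fixed.

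The crude bound $\|a_h\|\le C\sigma$ gives only $\|b_h\|\le C\sigma^2$ with $b_h=\sigma\mathbf E_xa_h$. That is exactly the required $C^0$ size, and it yields $d_H(\mathcal M_h,\mathcal M)\le C\sigma^2$, because the graph is defined over all of $\mathcal M$. Nearest-point projection is then the inverse of $x\mapsto x+b_h(x)$, since $b_h(x)\in N_x\mathcal M$ and $\|b_h\|\ll\tau$. It is a $C^1$ diffeomorphism once $\|\mathrm Db_h\|$ is small.

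\textbf{Step 2 (the $C^1$ bound) is the main obstacle.} Obtaining $\|\mathrm D_x b_h\|\le C\sigma^2$ requires differentiating the moment expansions of Lemma~\ref{lem:pop-mean-cov} in $x$. I would need the $O(\sigma)$ remainder in $\Psi_\sigma$ to have $x$-derivative $O(\sigma)$ in geodesic coordinates scaled by $\sigma$. This amounts to showing that $\boldsymbol\mu_h$, $\boldsymbol\Sigma_h$ and $\mathbf P_h$ are $C^1$ in $\mathbf y$ with derivative expansions one order finer than their values, for example
\[
 \mathrm D\boldsymbol\mu_h=(\text{flat term})+O(\sigma).
\]
I would first write the hard-ball integrals in $\sigma$-rescaled normal charts, where the $C^5$ bounds from (A2) make the integrands smooth in $(x,a)$ with uniform derivative control. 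The ball boundary contributes surface terms, but these are smooth because the Gaussian factor is smooth. The eigengap from spectral admissibility then transfers derivative bounds to $\mathbf P_h$ via the resolvent formula. The implicit function theorem gives
\[
 \mathrm D_x a_h=-(\partial_a\Psi_\sigma)^{-1}\partial_x\Psi_\sigma .
\]
Since $\partial_x$ of the flat part vanishes (it is $x$-independent after rescaling), $\partial_x\Psi_\sigma=O(\sigma)$ in unscaled $x$. This yields $\|\mathrm Db_h\|\le C\sigma^2$ as required.

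\textbf{Step 3 (attraction).} Near $\mathcal M_h$ I will use the graph coordinates $\mathbf y=x+b_h(x)+\sigma\mathbf E_x s$ and write the map as
\[
 (x,s)\mapsto(x',s').
\]
By Step 1's Lipschitz bound, $\|s'\|\le\bar\kappa\|s\|$ with $\bar\kappa\in(\kappa,1)$, using that $a\mapsto m_c(a)$ is a $\kappa_0$-contraction and fixing $a_h$. Converting back gives
\[
 \dist(\mathbf y^{(t)},\mathcal M_h)\le\bar\kappa^t\dist(\mathbf y^{(0)},\mathcal M_h),
\]
after adjusting constants via comparison of $\|s\|\sigma$ with distance, or by a slightly adapted norm. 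Forward invariance of $\mathcal U_h$ follows from Theorem~\ref{thm:contraction}. For the step sizes, $\mathcal T_h$ equals the identity on $\mathcal M_h$ and is Lipschitz with constant $O(1)$; this uses the derivative bounds of Step 2, since $\mathbf F_h$ has $\mathrm D\mathbf F_h=O(1)$. Therefore
\[
 \|\mathbf y^{(t+1)}-\mathbf y^{(t)}\|=\|\mathbf F_h(\mathbf y^{(t)})\|\le C_{\mathrm{step}}\dist(\mathbf y^{(t)},\mathcal M_h).
\]
Summing the geometric series gives a Cauchy sequence whose limit $\mathbf y^{(\infty)}$ satisfies the stated tail bound. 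By continuity it is a fixed point in $\Phi_{A_0\sigma}$, hence lies on $\mathcal M_h$. The final claim $\dist(\mathbf y^{(\infty)},\mathcal M)\le C\sigma^2$ is then immediate from the Hausdorff bound.
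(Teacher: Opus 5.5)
Your proposal is correct and follows essentially the paper's argument. The paper also rescales the normal residual to $\mathbf H_{\sigma,x}(a)=a-m_c(a)+\mathbf r_\sigma(x,a)$ with $\|\mathbf r_\sigma\|_{C^1}\le C\sigma$, using the same $C^1$ moment expansions you flag in Step~2 (obtained there via a query-centred change of variables onto a fixed ball rather than boundary terms), solves for $a_h$ by a parameter-dependent contraction/implicit-function argument, proves transverse contraction in graph coordinates, and bounds steps by $\|\mathbf F_h(\mathbf y)\|\le C_{\mathrm{step}}\dist(\mathbf y,\mathcal M_h)$; the one thing to make explicit in your Step~3 is that Step~1's fixed-base Lipschitz bound must be combined with the fact that the tangential part of $\mathcal T_h$ is $O(\sigma^2)$ in $C^1$ and vanishes on $\mathcal M_h$, so the base-point drift is $O(\sigma^2\|s\|)$ rather than merely $O(\sigma^2)$ (otherwise you only get $\dist(\mathcal T_h(\mathbf y),\mathcal M_h)\le\kappa\,\dist(\mathbf y,\mathcal M_h)+C\sigma^2$), which the paper records as $X_\sigma(x,a)=x+O_{C^1}(\sigma^2)$ with $X_\sigma\{x,a_h(x)\}=x$.
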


\subsection{Finite-sample localization}
\label{sec:sample-fixpt}
Proofs of the uniform empirical approximation, fixed-point localization,
and orbit localization results are given in Supplementary Section~S3.1.
We analyse the empirical operator $\widehat{\mathcal T}_{\mathcal Z,h}$
constructed from the fixed i.i.d.\ reference sample $\mathcal Z$.

Let $0<\delta<1$ and define
\[
 r_N=
 \sqrt{\frac{D\log(1/h)+\log(1/\delta)}{Nh^d}}
 +\frac{D\log(1/h)+\log(1/\delta)}{Nh^d}.
\]

\begin{lemma}[Uniform empirical approximation]
\label{lem:empirical-approximation}
Under the conditions of Lemma~\ref{lem:pop-mean-cov}, there are constants
$C,r_0,c_{\mathrm{gap}}>0$ such that, if $r_N\le r_0$, then with
probability at least $1-\delta$,
\[
 \sup_{\Phi_{A\sigma}}\|\widehat{\boldsymbol\mu}_{\mathcal Z,h}
       -\boldsymbol\mu_h\|\le Chr_N,\qquad
 \sup_{\Phi_{A\sigma}}\|\widehat{\boldsymbol\Sigma}_{\mathcal Z,h}
       -\boldsymbol\Sigma_h\|_{\mathrm{op}}\le Ch^2r_N,
\]
\[
 \sup_{\Phi_{A\sigma}}\|\widehat{\mathbf P}_{\mathcal Z,h}
       -\mathbf P_h\|_{\mathrm{op}}\le Cr_N,\qquad
 \inf_{\Phi_{A\sigma}}
 \{\widehat\lambda_d-\widehat\lambda_{d+1}\}
 \ge c_{\mathrm{gap}}h^2.
\]
\end{lemma}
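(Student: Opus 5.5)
The plan is to reduce the statement to uniform concentration of empirical ball-moments over the class of Euclidean balls of radius $h$ centred in $\Phi_{A\sigma}$, and then to transfer moment bounds to the projector through Davis--Kahan. Write $Z_j=\mathbf z_j$ and, for $\mathbf y\in\Phi_{A\sigma}$, consider the localized empirical averages
\[
 \widehat p_h(\mathbf y)=\frac1N\sum_j\mathbf 1\{\|Z_j-\mathbf y\|\le h\},\quad
 \widehat{\mathbf s}_1(\mathbf y)=\frac1N\sum_j (Z_j-\mathbf y)\mathbf 1\{\cdot\},\quad
 \widehat{\mathbf s}_2(\mathbf y)=\frac1N\sum_j (Z_j-\mathbf y)(Z_j-\mathbf y)^{\top}\mathbf 1\{\cdot\},
\]
with population counterparts $p_h,\mathbf s_1,\mathbf s_2$. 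The summands are bounded by $1,h,h^2$ respectively. First I lower-bound $p_h$ uniformly on the tube: for $\mathbf y\in\Phi_{A\sigma}$, the Gaussian noise density integrated against uniform measure on a geodesic ball of radius $\asymp h$ gives $p_h(\mathbf y)\ge c_1h^d$, using the flat reference law \eqref{eq:flat-law} and Lemma~\ref{lem:pop-mean-cov}'s chart control, since $\|a\|\le A$ and $c$ is fixed. Similarly, the variance of each summand is at most $Ch^d$ times the squared envelope.

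Next I apply a Bernstein/Talagrand-type uniform deviation bound over the VC class of balls (VC dimension $O(D)$), or equivalently a covering of $\Phi_{A\sigma}$ at resolution $h\cdot h^{K}$ with $\log$-cardinality $O(D\log(1/h))$, together with a bracketing argument that handles the discontinuity of the indicator by comparing radii $h\pm\eta$ (the probability mass of the annulus is $O(h^{d}\eta/h)$, which is negligible for polynomially small $\eta$). This gives, with probability $1-\delta$,
\[
 |\widehat p_h-p_h|\le Ch^d r_N,\quad \|\widehat{\mathbf s}_1-\mathbf s_1\|\le Ch^{d+1}r_N,\quad \|\widehat{\mathbf s}_2-\mathbf s_2\|_{\mathrm{op}}\le Ch^{d+2}r_N
\]
uniformly. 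The two terms in $r_N$ are exactly the variance and range terms of Bernstein, relative to $h^d$. The operator-norm bound for $\mathbf s_2$ follows from a $1/4$-net of the unit sphere in $\R^D$, which contributes another $O(D)$ to the log count.

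The conditional moments are ratios: $\widehat{\boldsymbol\mu}-\mathbf y=\widehat{\mathbf s}_1/\widehat p_h$ and $\widehat{\boldsymbol\Sigma}=\widehat{\mathbf s}_2/\widehat p_h-(\widehat{\mathbf s}_1/\widehat p_h)(\cdot)^{\top}$. Since $r_N\le r_0$ is small, $\widehat p_h\ge c_1h^d/2$, and perturbing the ratios yields $\|\widehat{\boldsymbol\mu}-\boldsymbol\mu_h\|\le Chr_N$ and $\|\widehat{\boldsymbol\Sigma}-\boldsymbol\Sigma_h\|_{\mathrm{op}}\le Ch^2r_N$, using $\|\mathbf s_1/p_h\|\le h$. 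For the spectrum, Lemma~\ref{lem:pop-mean-cov} with admissibility \eqref{eq:admissible} gives $\lambda_d(\boldsymbol\Sigma_h)-\lambda_{d+1}(\boldsymbol\Sigma_h)\ge\sigma^2g_0-C\sigma^3\ge g_0h^2/(2c^2)$ for small $\sigma$. Weyl's inequality then gives an empirical gap of at least $c_{\mathrm{gap}}h^2$ once $r_0$ is small, and Davis--Kahan gives $\|\widehat{\mathbf P}-\mathbf P_h\|_{\mathrm{op}}\le 2\|\widehat{\boldsymbol\Sigma}-\boldsymbol\Sigma_h\|/\mathrm{gap}\le Cr_N$.

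The main obstacle is the uniform concentration step. It needs a localized, variance-sensitive bound, so that the rate scales with $Nh^d$ rather than $N$, while also handling the non-smooth indicator uniformly over a continuum of centres. I would first try Talagrand/Bousquet's inequality with a VC-based bound on the expected supremum, noting that the classes $\{(z-\mathbf y)_k\mathbf 1_{B(\mathbf y,h)}\}$ are VC-subgraph with index $O(D)$. If the constants prove messy, I would fall back on the net plus radius-sandwich argument with a union bound, which produces the $D\log(1/h)$ term directly.
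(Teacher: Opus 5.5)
Your proposal is correct and follows essentially the same route as the paper. Both arguments use a uniform mass bound $p_h\asymp h^d$ on $\Phi_{A\sigma}$, a VC maximal inequality with Bousquet's inequality for the indicator, first-moment and second-moment classes with variance of order $h^d$, ratio perturbation for the conditional moments including the $\widehat m_1\widehat m_1^{\top}$ correction, and Weyl plus Davis--Kahan from the population gap $g_0h^2/(2c^2)$. The only difference is that the paper rescales by $h$ to envelope-one classes and bounds finitely many matrix entries instead of using a sphere net.

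One small caution concerns your fallback net-plus-radius-sandwich argument. It needs mesh $\eta\lesssim hr_N$, so its union bound gives $D\log\{1/(hr_N)\}$ rather than $D\log(1/h)$. That matches $r_N$ only when $N$ grows at most polynomially in $1/h$, as in the paper's regime, so the VC/Bousquet route is the one that gives the stated rate in general.
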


\begin{corollary}[Localization of empirical fixed points]
\label{thm:sample-fixpt-dist}
Under the conditions of Theorem~\ref{thm:contraction}, there are constants
$C,r_{\mathrm{fp}}>0$ such that, if $r_N\le r_{\mathrm{fp}}$, then with
probability at least $1-\delta$, every fixed point
$\widehat{\mathbf y}_N\in\Phi_{A\sigma}$ of
$\widehat{\mathcal T}_{\mathcal Z,h}$, whenever one exists, satisfies
\[
 \dist(\widehat{\mathbf y}_N,\mathcal M)
 \le C\{\sigma^2+h r_N\}.
\]
Consequently, for fixed $d,D,c,A$, $h=c\sigma$ and
\[
 N\ge C\sigma^{-(d+2)}
 \{D\log(1/\sigma)+\log(1/\delta)\}
\]
suffice for the right-hand side to be $O(\sigma^2)$.
\end{corollary}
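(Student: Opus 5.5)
We work on the probability-$(1-\delta)$ event of Lemma~\ref{lem:empirical-approximation}, with $r_N\le r_{\mathrm{fp}}\le r_0$. On this event the empirical operator is well defined throughout $\Phi_{A\sigma}$: every ball contains reference points, and the empirical eigengap is at least $c_{\mathrm{gap}}h^2$. The plan is to compare $\widehat{\mathcal T}_{\mathcal Z,h}$ with $\mathcal T_h$ uniformly on the tube, and then feed the comparison into the one-step recursion of Theorem~\ref{thm:contraction}.

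\textbf{Step 1: uniform operator comparison.} Write
\[
 \widehat{\mathcal T}_{\mathcal Z,h}(\mathbf y)-\mathcal T_h(\mathbf y)
 =(I_D-\widehat{\mathbf P}_{\mathcal Z,h})(\widehat{\boldsymbol\mu}_{\mathcal Z,h}-\boldsymbol\mu_h)
 +(\widehat{\mathbf P}_{\mathcal Z,h}-\mathbf P_h)(\mathbf y-\boldsymbol\mu_h).
\]
The first term is at most $Chr_N$ by the mean bound in Lemma~\ref{lem:empirical-approximation}.

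For the second term, the bound $\|\mathbf y-\boldsymbol\mu_h(\mathbf y)\|\le C\sigma$ alone would only give $C\sigma r_N$, which is also of order $hr_N$ since $h=c\sigma$. That suffices. Indeed, Lemma~\ref{lem:pop-mean-cov} gives $\mathbf y-\boldsymbol\mu_h=\sigma(\mathbf 0_d,a-m_c(a))+O(\sigma^2)$, whose norm is $O(\sigma)$ uniformly over $\|a\|\le A$. Hence
\[
 \sup_{\Phi_{A\sigma}}\|\widehat{\mathcal T}_{\mathcal Z,h}-\mathcal T_h\|\le C'hr_N .
\]

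\textbf{Step 2: perturbed recursion.} Let $\widehat{\mathbf y}_N\in\Phi_{A\sigma}$ be a fixed point. Since $\dist(\cdot,\mathcal M)$ is $1$-Lipschitz, Theorem~\ref{thm:contraction} and Step~1 give
\[
 \dist(\widehat{\mathbf y}_N,\mathcal M)
 =\dist\bigl(\widehat{\mathcal T}_{\mathcal Z,h}(\widehat{\mathbf y}_N),\mathcal M\bigr)
 \le\kappa\,\dist(\widehat{\mathbf y}_N,\mathcal M)+C\sigma^2+C'hr_N .
\]
Rearranging yields $\dist(\widehat{\mathbf y}_N,\mathcal M)\le(1-\kappa)^{-1}(C\sigma^2+C'hr_N)$, which is the claimed bound. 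We need $\sigma\le\sigma_{\mathrm{con}}$, which is covered by the standing small-$\sigma$ regime.

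\textbf{Step 3: sample size.} With $h=c\sigma$ we have $hr_N\lesssim\sigma\sqrt{L/(N\sigma^d)}+\sigma L/(N\sigma^d)$, where $L=D\log(1/\sigma)+\log(1/\delta)$ after absorbing $\log c$ into the constants. Requiring both pieces to be at most $\sigma^2$ gives $N\ge C\sigma^{-(d+2)}L$ for the first piece. For the second piece it gives only $N\ge C\sigma^{-(d+1)}L$, which is weaker. The same condition also ensures $r_N\le r_{\mathrm{fp}}$ for small $\sigma$.

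\textbf{Main obstacle.} The delicate point is Step~1. It requires the empirical and population operators to be defined at the same points, with the projector bound holding uniformly over the whole tube. Both properties are supplied by Lemma~\ref{lem:empirical-approximation}, so the remaining work is the bookkeeping that $\|\mathbf y-\boldsymbol\mu_h\|=O(\sigma)$ uniformly. Everything else is a direct fixed-point substitution.
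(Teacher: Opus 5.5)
Your proposal is correct and takes essentially the paper's route. On the event of Lemma~\ref{lem:empirical-approximation}, you bound $\sup_{\Phi_{A\sigma}}\|\widehat{\mathcal T}_{\mathcal Z,h}-\mathcal T_h\|\le Chr_N$; the paper records this as the last display of its supplementary version of that lemma, obtained as in your Step~1 from the mean and projector bounds and $\|\mathbf y-\boldsymbol\mu_h(\mathbf y)\|=O(h)$. You then substitute the fixed-point identity into the one-step recursion of Theorem~\ref{thm:contraction}, divide by $1-\kappa$, and obtain the sample size from $r_N=O(\sigma)$.
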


\begin{theorem}[Finite-sample orbit localization]
\label{thm:sample-orbit-localization}
For the same fixed $0<A_0<A$, under the conditions of
Theorem~\ref{thm:contraction} and Lemma~\ref{lem:empirical-approximation}, there are
$\sigma_{\mathrm{orb}},r_{\mathrm{orb}},C>0$ such that, if $\sigma\le\sigma_{\mathrm{orb}}$ and
$r_N\le r_{\mathrm{orb}}$, then with probability at least $1-\delta$, every
empirical orbit
\[
 \widehat{\mathbf y}^{(t+1)}
 =\widehat{\mathcal T}_{\mathcal Z,h}(\widehat{\mathbf y}^{(t)}),
 \qquad \widehat{\mathbf y}^{(0)}\in\Phi_{A_0\sigma},
\]
remains in $\Phi_{A\sigma}$ and satisfies, for every $t\ge0$,
\[
 \dist(\widehat{\mathbf y}^{(t)},\mathcal M)
 \le
 \kappa^t\dist(\widehat{\mathbf y}^{(0)},\mathcal M)
 +\frac{C(\sigma^2+h r_N)}{1-\kappa}.
\]
In particular, after the geometric transient the entire empirical orbit
lies in an $O(\sigma^2+h r_N)$ neighbourhood of $\mathcal M$; if
$r_N=O(\sigma)$ and $h=c\sigma$, this neighbourhood has width
$O(\sigma^2)$.
\end{theorem}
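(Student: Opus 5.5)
The plan is to compare the empirical map $\widehat{\mathcal T}_{\mathcal Z,h}$ with the population map $\mathcal T_h$ pointwise on $\Phi_{A\sigma}$, and then repeat the one-step recursion of Theorem~\ref{thm:contraction} with an extra additive error. We work throughout on the event of Lemma~\ref{lem:empirical-approximation}, which has probability at least $1-\delta$. On this event every query in $\Phi_{A\sigma}$ has a nonempty neighbourhood, because $\widehat{\boldsymbol\Sigma}$ is close to $\boldsymbol\Sigma_h$. The eigengap satisfies $\widehat\lambda_d-\widehat\lambda_{d+1}\ge c_{\mathrm{gap}}h^2$. Hence $\widehat{\mathcal T}_{\mathcal Z,h}$ is well defined on the whole tube. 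Since all the bounds below are uniform over $\Phi_{A\sigma}$, the conclusion will hold simultaneously for every orbit and every $t$, with no union bound over iterations. This matters because the iterates depend on $\mathcal Z$.

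The first step is the uniform perturbation bound. We write
\[
\widehat{\mathcal T}_{\mathcal Z,h}(\mathbf y)-\mathcal T_h(\mathbf y)
=(I_D-\widehat{\mathbf P})(\widehat{\boldsymbol\mu}-\boldsymbol\mu_h)
+(\widehat{\mathbf P}-\mathbf P_h)(\mathbf y-\boldsymbol\mu_h).
\]
The first term is at most $Chr_N$. For the second term, Lemma~\ref{lem:pop-mean-cov} gives $\|\mathbf y-\boldsymbol\mu_h(\mathbf y)\|\le \|\boldsymbol\nu\|+\sigma\|m_c(a)\|+O(\sigma^2)\le C\sigma$ on $\Phi_{A\sigma}$. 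Combined with $\|\widehat{\mathbf P}-\mathbf P_h\|_{\mathrm{op}}\le Cr_N$, this gives at most $C\sigma r_N=C' hr_N$, using $h=c\sigma$. Hence $\sup_{\Phi_{A\sigma}}\|\widehat{\mathcal T}_{\mathcal Z,h}-\mathcal T_h\|\le C_1hr_N$. Because $\dist(\cdot,\mathcal M)$ is 1-Lipschitz, Theorem~\ref{thm:contraction} then yields, for $\mathbf y\in\Phi_{A\sigma}$,
\[
\dist(\widehat{\mathcal T}_{\mathcal Z,h}(\mathbf y),\mathcal M)\le\kappa\,\dist(\mathbf y,\mathcal M)+C\sigma^2+C_1hr_N .
\]

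The main obstacle is forward invariance: the recursion only applies while the orbit stays in $\Phi_{A\sigma}$, which is why the orbit starts in the smaller tube $\Phi_{A_0\sigma}$. We argue by induction on $t$. Suppose $\widehat{\mathbf y}^{(s)}\in\Phi_{A\sigma}$ for all $s\le t$. Unrolling the recursion gives
\[
\dist(\widehat{\mathbf y}^{(t+1)},\mathcal M)\le\kappa^{t+1}A_0\sigma+\frac{C\sigma^2+C_1hr_N}{1-\kappa}.
\]
We need the right-hand side to be at most $A\sigma$. Writing $hr_N=c\sigma r_N$, it suffices that
\[
\frac{C\sigma+C_1cr_N}{1-\kappa}\le A-A_0 .
\]
This holds once $\sigma\le\sigma_{\mathrm{orb}}$ and $r_N\le r_{\mathrm{orb}}$ are chosen small enough, with $\sigma_{\mathrm{orb}}\le\sigma_{\mathrm{con}}$ and $r_{\mathrm{orb}}\le r_0$. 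This closes the induction, and unrolling the same recursion gives the stated bound for every $t$.

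For the final claim, suppose $r_N=O(\sigma)$ and $h=c\sigma$. Then $hr_N=O(\sigma^2)$, so the asymptotic width $C(\sigma^2+hr_N)/(1-\kappa)$ is $O(\sigma^2)$.
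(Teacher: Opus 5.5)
Your proposal matches the paper's proof. Both work on the single uniform event of Lemma~\ref{lem:empirical-approximation}, which is why no union bound over iterations is needed, and both use $\sup_{\Phi_{A\sigma}}\|\widehat{\mathcal T}_{\mathcal Z,h}-\mathcal T_h\|\le Chr_N$ (the paper's supplementary version of the lemma records this bound directly, using the same $\|\mathbf y-\boldsymbol\mu_h(\mathbf y)\|=O(h)$ step you use) together with the population recursion, closing forward invariance by induction under $C(\sigma^2+hr_N)/(1-\kappa)\le(A-A_0)\sigma$. One small nit: nonemptiness of every neighbourhood is part of that high-probability event (it comes from the count bound in the lemma's proof), not a consequence of $\widehat{\boldsymbol\Sigma}$ being close to $\boldsymbol\Sigma_h$, since that closeness already presupposes the empirical moments are defined.
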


Theorem~\ref{thm:sample-orbit-localization} is uniform over all initial
queries in $\Phi_{A_0\sigma}$, including data-dependent observations that
fall in this tube. Supplementary Section~S3.2 quantifies the corresponding
fraction of observed initializations. It also gives the theoretical
iteration complexity. In particular, if
\[
 L\ge
 \frac{\left[\log\left\{(1-\kappa)d_0/
       [C(\sigma^2+hr_N)]\right\}\right]_+}{|\log\kappa|},
 \qquad d_0=\dist(\widehat{\mathbf y}^{(0)},\mathcal M),
\]
the transient term is no larger than the statistical error radius. Thus
$L=O\{\log(1/\sigma)\}$ when $d_0=O(\sigma)$ and
$\sigma^2+hr_N=O(\sigma^2)$.

\section{Multi-scale STSP}
\label{sec:multiscale}

{
The radius $h$ controls a bias--variance trade-off: large neighbourhoods
stabilize local moments but increase curvature bias, whereas small
neighbourhoods can give unstable tangent estimates. MS-STSP combines a
fixed finite collection of radii. The numerical studies use three radii;
the theory below allows any fixed finite number of radii.

\subsection{Motivation, algorithm, and localization results}
Proofs of the quadratic bias expansion and the two MS-STSP localization
theorems are given in Supplementary Sections~S4.1, S4.4, and~S4.5;
Supplementary Sections~S4.2--S4.3 provide the supporting weight and
aggregation results.
For a sufficiently small fixed geometric constant $c_0\in(0,1)$ and
$0<h\le c_0r_{\mathrm{geo}}$, define the clean local-average map
\[
 f_h(x)=
 \frac{1}{\mathrm{Area}\{B_{\mathrm E}(x,h)\cap\mathcal M\}}
 \int_{B_{\mathrm E}(x,h)\cap\mathcal M}y\,dA(y),
 \qquad x\in\mathcal M.
\]

\begin{lemma}[Quadratic structure of local averaging bias]
\label{lem:local_avg_expansion}
Under conditions (A2)--(A4), uniformly in $x\in\mathcal M$,
\[
 f_h(x)=x+\tfrac12 h^2\mathbf B(x)+O(h^4).
\]
Here, for any orthonormal basis $\{e_1,\ldots,e_d\}$ of $T_x\mathcal M$,
\begin{equation}
\label{eq:Bx-def}
 \mathbf B(x)=\frac{1}{d+2}\sum_{k=1}^d
 \mathrm{II}_x(e_k,e_k)\in N_x\mathcal M,
\end{equation}
and the remainder is uniform in $x$.
\end{lemma}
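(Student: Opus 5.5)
We work in a local graph chart at $x$ and expand the numerator and denominator of $f_h(x)$ to fourth order. By (A2)--(A4) and $h\le c_0 r_{\mathrm{geo}}$, near $x$ the manifold is the graph
\[
 u\mapsto x+u+\tfrac12\mathrm{II}_x(u,u)+\tfrac16 T_3(u,u,u)+R_4(u),
 \qquad u\in T_x\mathcal M,
\]
with $T_3$ normal-valued, $\|R_4(u)\|\le C\|u\|^4$, and all constants uniform in $x$ by $K_{\mathrm{geo}}$. Reach bounds ensure that $B_{\mathrm E}(x,h)\cap\mathcal M$ is a single graph patch. The volume element is $J(u)=1+O(\|u\|^2)$, and since the first correction comes from $\|\mathrm{II}(u,\cdot)\|^2$, it is even in $u$. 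Writing $y(u)$ for the graph point, the domain is $U_h=\{u:\|y(u)-x\|\le h\}$. Because $\|y(u)-x\|^2=\|u\|^2+\tfrac14\|\mathrm{II}(u,u)\|^2+O(\|u\|^5)$, with the cubic term vanishing since $T_3$ is normal, $U_h$ is a radial perturbation of the ball $\{\|u\|\le h\}$. Its boundary is $\{\|u\|=h(1-\epsilon(u/\|u\|)h^2+O(h^3))\}$ with $\epsilon$ even in the direction.

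We then split $f_h(x)-x=\mathrm{Num}/\mathrm{Den}$, where $\mathrm{Num}=\int_{U_h}\{y(u)-x\}J(u)\,du$ and $\mathrm{Den}=\int_{U_h}J(u)\,du$.

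\textbf{Denominator.} We have $\mathrm{Den}=\omega_d h^d(1+O(h^2))$, and only this leading order is needed.

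\textbf{Tangential part of the numerator.} The integrand is $uJ(u)$ plus the tangential component of $R_4$. The term $u$ is odd, while the domain perturbation and $J$ are even to the relevant order. So the tangential integral over the exact ball vanishes. The even correction of $J$ gives an odd integrand, which also integrates to zero. The domain-perturbation contribution of $u$ integrates over a shell of width $O(h^3)$ with an even profile, so it too vanishes up to $O(h^{d+4})$. What remains is bounded by $O(h^{d+4})$.

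\textbf{Normal part of the numerator.} This part is
\[
 \tfrac12\int_{\|u\|\le h}\mathrm{II}(u,u)\,du+O(h^{d+4}).
\]
The cubic $T_3$ term is odd and integrates to zero. The quartic terms are $O(h^{d+4})$, and so are the products of $\mathrm{II}$ with the $O(\|u\|^2)$ part of $J$. The domain correction contributes $O(h^2)\cdot h^{d-1}\cdot h^3=O(h^{d+4})$. Using the second moment of the ball,
\[
 \int_{\|u\|\le h}u_ku_l\,du=\delta_{kl}\,\frac{\omega_d h^{d+2}}{d+2}.
\]
Dividing by $\mathrm{Den}$ then gives $\tfrac12 h^2\mathbf B(x)+O(h^4)$. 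The $O(h^2)$ relative error in $\mathrm{Den}$ multiplies an $O(h^2)$ numerator ratio, so it only contributes $O(h^4)$.

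\textbf{Main obstacle.} The delicate step is controlling the parity of every correction term. The odd/even cancellation must hold for $J$ and the boundary profile $\epsilon$ through the orders needed, and $T_3$ must be normal so that no $O(h^3)$ term survives. We will verify this by expanding $J$ and $\|y(u)-x\|^2$ with the $C^5$ bounds. An $O(h^3)$ remainder would also suffice for later use, but the parity argument gives $O(h^4)$. Finally, $\mathbf B(x)$ is independent of the orthonormal basis because $\sum_k\mathrm{II}(e_k,e_k)$ is the trace of $\mathrm{II}$, i.e.\ the mean-curvature vector times $d$, and all constants are uniform in $x$ by (A2).
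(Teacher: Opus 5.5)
Your proposal is correct and follows essentially the same route as the paper. Both arguments write $f_h(x)-x$ as a ratio of numerator and denominator integrals. Both treat $B_{\mathrm E}(x,h)\cap\mathcal M$ as a radial perturbation $h+O(h^3)$ of the tangent ball with an even $h^3$ coefficient, kill the odd terms by parity so that everything except $\tfrac12\mathrm{II}_x(u,u)$ contributes $O(h^{d+4})$, and read off $\tfrac12h^2\mathbf B(x)$ from the second moment of the ball.

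The only difference is the chart. You parametrize $\mathcal M$ as a graph over $T_x\mathcal M$, whereas the paper uses geodesic normal coordinates. In your chart the tangential component of $y(u)-x$ is exactly $u$, so $R_4$ has no tangential part to track. In geodesic coordinates, by contrast, the cubic term has a tangential part, which is odd and cancels anyway. The cost is that you must convert the (A2) derivative bounds, which are stated in geodesic charts, into uniform bounds for the graph chart; this conversion is standard.
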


For the same $x$, the vector $\mathbf B(x)$ is independent of $h$.
Consequently, the ordinary least-squares intercept weights for a regression
on $s_j=h_j^2$ satisfy
\[
 \sum_{j=1}^M w_j=1,
 \qquad
 \sum_{j=1}^M w_jh_j^2=0,
\]
and cancel the displayed $h^2$ term in this ideal clean local-average
model. The clean local-average expansion motivates extrapolation in the
squared neighbourhood radius. For noisy STSP branches, the results below
establish that stable aggregation preserves the single-scale localization order.

\begin{algorithm}[H]
\caption{Multi-Scale STSP (MS-STSP)}
\label{alg:multiscale}
\begin{algorithmic}[1]
\Require Original cloud $\mathcal Z=\{\mathbf z_i\}_{i=1}^N$, dimension $d$, radii
$\{h_j\}_{j=1}^M$ with $M\ge2$ and at least two distinct values, and
iteration horizon $L\ge1$.
\Ensure Multi-scale estimates $\{\widehat{\mathbf y}_{i,{\mathrm{MS}}}\}_{i=1}^N$.
\For{$j=1,\ldots,M$}
  \State Apply Algorithm~\ref{alg:stsp} with reference $\mathcal Z$,
  initial queries $\widehat{\mathbf y}_{i,j}^{(0)}=\mathbf z_i$,
  dimension $d$, radius $h_j$, and horizon $L$, obtaining
  $\{\widehat{\mathbf y}_{i,j}^{(L)}\}_{i=1}^N$.
\EndFor
\State Compute the OLS intercept weights $w_1,\ldots,w_M$ for the
regressor $h_j^2$.
\For{$i=1,\ldots,N$}
  \State $\widehat{\mathbf y}_{i,{\mathrm{MS}}}\leftarrow
  \sum_{j=1}^M w_j\widehat{\mathbf y}_{i,j}^{(L)}$.
\EndFor
\State \Return $\{\widehat{\mathbf y}_{i,{\mathrm{MS}}}\}_{i=1}^N$.
\end{algorithmic}
\end{algorithm}

The following theorems establish population and finite-sample localization
for the fixed-reference multi-scale branches in
Algorithm~\ref{alg:multiscale}. The finite-sample result permits
scale-specific horizons $L_j$; Algorithm~\ref{alg:multiscale} is the common
case $L_j=L$.

\begin{theorem}[Population MS-STSP localization]
\label{thm:ms-population}
Let $M$ be fixed and $h_j=c_j\sigma$. Suppose every $(c_j,A)$ is
spectrally admissible, let the common initial point satisfy
$\mathbf y^{(0)}\in\cap_{j=1}^M\mathcal U_{h_j}$, and suppose the weights
satisfy
\[
 \sum_jw_j=1,\qquad \sum_j|w_j|\le C_w,
\]
where $C_w$ is a fixed design bound. Let
$\mathbf y_j^{(\infty)}=\lim_{t\to\infty}\mathcal T_{h_j}^{(t)}(\mathbf y^{(0)})$ and
$\mathbf y_{\mathrm{MS}}^{(\infty)}=\sum_jw_j\mathbf y_j^{(\infty)}$. Then all projected points
$\Pi_{\mathcal M}(\mathbf y_j^{(\infty)})$ lie in one $O(\sigma)$ geodesic patch and
\[
 \dist(\mathbf y_{\mathrm{MS}}^{(\infty)},\mathcal M)\le C C_w\sigma^2.
\]
Thus population MS-STSP retains the $O(\sigma^2)$ localization order of
the single-scale branches.
\end{theorem}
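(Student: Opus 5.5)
We plan to reduce the multi-scale statement to the single-scale population result, Theorem~\ref{thm:orbit-convergence}, applied separately to each radius $h_j=c_j\sigma$. Since each $(c_j,A)$ is spectrally admissible and $\mathbf y^{(0)}\in\mathcal U_{h_j}$, that theorem gives a limit $\mathbf y_j^{(\infty)}\in\mathcal M_{h_j}$ with $\dist(\mathbf y_j^{(\infty)},\mathcal M)\le C_j\sigma^2$. Because $M$ is fixed, we take $C=\max_jC_j$. Write $x_j=\Pi_{\mathcal M}(\mathbf y_j^{(\infty)})$, so that $\mathbf y_j^{(\infty)}=x_j+b_{h_j}(x_j)$ with $\|b_{h_j}(x_j)\|\le C\sigma^2$.

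\textbf{Step 1: the projections lie in a common $O(\sigma)$ patch.} From the convergence bound of Theorem~\ref{thm:orbit-convergence},
\[
\|\mathbf y^{(0)}-\mathbf y_j^{(\infty)}\|\le \frac{C_{\mathrm{step}}}{1-\bar\kappa}\,\rho_0\sigma .
\]
Hence $\|x_j-x_k\|\le C'\sigma$ for all $j,k$. Since $\sigma\ll r_{\mathrm{geo}}$, the positive reach (A3) converts Euclidean distance to geodesic distance: $d_g\le 2\|x_j-x_k\|$ once $\|x_j-x_k\|\le\tau/2$. So all $x_j$ lie in $B_g(x_1,C''\sigma)$. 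Alternatively, we can bound each $\|x_j-\Pi_{\mathcal M}(\mathbf y^{(0)})\|$, using that $\Pi_{\mathcal M}$ is Lipschitz on the tube $\Phi_{A\sigma}$.

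\textbf{Step 2: the affine combination of patch points is $O(\sigma^2)$-close to $\mathcal M$.} Fix $x_1$ and use the local graph representation of $\mathcal M$ over $T_{x_1}\mathcal M$:
\[
x_j=x_1+u_j+g(u_j),\qquad u_j\in T_{x_1}\mathcal M,\quad \|u_j\|\le C\sigma,\quad \|g(u)\|\le K\|u\|^2,
\]
where $K$ depends on $\tau$ (or $K_{\mathrm{geo}}$). Using $\sum_j w_j=1$, set $\bar u=\sum_j w_ju_j$, so $\|\bar u\|\le C_w C\sigma$. Then
\[
\sum_j w_jx_j=x_1+\bar u+\sum_j w_jg(u_j).
\]
The point $x^\ast=x_1+\bar u+g(\bar u)$ lies on $\mathcal M$. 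Therefore
\[
\dist\Bigl(\sum_j w_jx_j,\mathcal M\Bigr)\le\|g(\bar u)\|+\sum_j|w_j|\,\|g(u_j)\|\le K(C_w^2+C_w)C^2\sigma^2 .
\]
This requires $\bar u$ to remain in the chart domain, which holds because $C_w\sigma$ is small.

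\textbf{Step 3: assemble the bound.} We have
\[
\mathbf y_{\mathrm{MS}}^{(\infty)}=\sum_jw_jx_j+\sum_jw_jb_{h_j}(x_j),
\]
and the second term has norm at most $C_wC\sigma^2$. The triangle inequality, with $\dist(\cdot,\mathcal M)$ being $1$-Lipschitz, gives the claimed $O(\sigma^2)$ bound.

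\textbf{Main obstacle.} Step 2 yields a constant of order $C_w^2$, not the linear $CC_w$ in the statement. Since $C_w\ge1$ is a fixed design constant, $C_w^2$ is absorbed into $C$ as long as $C$ is allowed to depend on the design. If strict linearity in $C_w$ is wanted, we would first try centering at $x^\ast$ itself and applying Taylor's theorem to the chart map in the form $\phi(u_j)-\phi(\bar u)-D\phi(\bar u)(u_j-\bar u)=O(\|u_j-\bar u\|^2)$. This reduces the error to $\sum_j|w_j|\,\|u_j-\bar u\|^2$, but $\|u_j-\bar u\|$ still scales with $C_w$. We therefore expect to state the dependence as $C=C(C_w)$ and to note that this is consistent with the statement, since $C_w$ is fixed.
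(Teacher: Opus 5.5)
Your proposal is correct and follows the paper's proof. You apply the single-scale orbit theorem to each branch, use the $t=0$ tail bound to place every $\Pi_{\mathcal M}(\mathbf y_j^{(\infty)})$ in an $O(\sigma)$ patch, and then run the same graph-chart aggregation argument as the paper's stable affine aggregation lemma with $\varepsilon_{\mathrm{agg}}=C\sigma^2$; the paper also gets a $C_w^2$ term and absorbs it into $C$ because the weight cap is fixed, exactly as you propose.
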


\begin{theorem}[Finite-sample MS-STSP localization]
\label{thm:ms-sample}
Let $0<\delta<1$. For $j=1,\ldots,M$, define
\[
 r_{N,j}=\sqrt{\frac{D\log(1/h_j)+\log(M/\delta)}{Nh_j^d}}
 +\frac{D\log(1/h_j)+\log(M/\delta)}{Nh_j^d}.
\]
Assume the conditions of Theorem~\ref{thm:ms-population} and the
initialization, sampling-error, and finite-horizon conditions of
Supplementary Theorem~S5 in Section~S4.5. These conditions keep each branch in its
attraction tube, reduce its transient error to the localization scale,
and bound accumulated tangential drift up to its prescribed horizon $L_j$.
For $t=0,\ldots,L_j$, define the empirical branch orbit by
$\widehat{\mathbf y}_j^{(t)}=
\widehat{\mathcal T}_{\mathcal Z,h_j}^{(t)}(\mathbf y^{(0)})$, and set
$\widehat{\mathbf y}_{\mathrm{MS}}=\sum_jw_j\widehat{\mathbf y}_j^{(L_j)}$.
Then, with probability at least $1-\delta$,
\[
 \dist(\widehat{\mathbf y}_{\mathrm{MS}},\mathcal M)
 \le C C_w
 \left\{\sigma^2+\max_j h_jr_{N,j}\right\}.
\]
In particular, if $h_j=c_j\sigma$ and
\[
 N\ge C\sigma^{-(d+2)}
 \{D\log(1/\sigma)+\log(M/\delta)\},
\]
the right-hand side is $O(\sigma^2)$.
\end{theorem}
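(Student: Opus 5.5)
We reduce the finite-sample multi-scale bound to the single-scale orbit theory plus one geometric lemma on affine combinations of nearby points. First, apply Theorem~\ref{thm:sample-orbit-localization} to each branch $j$ with confidence level $\delta/M$. This is what the $\log(M/\delta)$ in $r_{N,j}$ accounts for. A union bound then gives an event of probability at least $1-\delta$ on which every branch orbit started at $\mathbf y^{(0)}$ stays in $\Phi_{A\sigma}$. On that event,
\[
\dist(\widehat{\mathbf y}_j^{(L_j)},\mathcal M)\le\kappa^{L_j}\dist(\mathbf y^{(0)},\mathcal M)+C(\sigma^2+h_jr_{N,j})/(1-\kappa).
\]
The finite-horizon condition of Supplementary Theorem~S5 makes the transient term at most $C(\sigma^2+h_jr_{N,j})$. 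So each branch endpoint $\widehat{\mathbf y}_j:=\widehat{\mathbf y}_j^{(L_j)}$ satisfies $\widehat{\mathbf y}_j=x_j+\boldsymbol\nu_j$, where $x_j=\Pi_{\mathcal M}(\widehat{\mathbf y}_j)$ and $\|\boldsymbol\nu_j\|\le C\varepsilon$ with $\varepsilon:=\sigma^2+\max_jh_jr_{N,j}$.

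Second, we show that all $x_j$ lie in a common patch of radius $O(\sigma)$ around $x_0=\Pi_{\mathcal M}(\mathbf y^{(0)})$. Per step, Lemma~\ref{lem:pop-mean-cov} gives population tangential motion $\|\mathbf P_x\{\mathcal T_h(\mathbf y)-x\}\|\le C\sigma^2$. Lemma~\ref{lem:empirical-approximation} adds an empirical perturbation of order $h r_N$ plus $r_N\|\mathbf y-\widehat{\boldsymbol\mu}\|=O(\sigma r_N)$. Hence the tangential drift of $\Pi_{\mathcal M}$ per iteration is $O(\sigma^2+\sigma r_N)$. The accumulated drift over $L_j$ steps is $O(L_j(\sigma^2+\sigma r_{N,j}))$, and the horizon condition in Theorem~S5 (``bound accumulated tangential drift'') is exactly what keeps this at most $C\sigma$. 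Together with $\|\mathbf y^{(0)}-x_0\|\le A_0\sigma$, this gives $\|x_j-x_0\|\le C\sigma$ for all $j$, mirroring the patch statement of Theorem~\ref{thm:ms-population}.

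Third, the geometric step, which we expect to be the main obstacle, is a bound on the affine combination $\sum_jw_j\widehat{\mathbf y}_j=\sum_jw_jx_j+\sum_jw_j\boldsymbol\nu_j$. The second sum has norm at most $C_wC\varepsilon$. For the first sum, write $x_j=x_0+u_j+\eta_j$ with $u_j=\mathbf P_{x_0}(x_j-x_0)$ and $\eta_j\in N_{x_0}\mathcal M$. By reach (A3), $\|\eta_j\|\le\|x_j-x_0\|^2/(2\tau)\le C\sigma^2$. Let $u=\sum_jw_ju_j$; since $\sum_jw_j=1$, $\|u\|\le C_wC\sigma$. If $u$ is small enough (this requires $C_w\sigma$ small, which holds as $\sigma\to0$ for a fixed design), the graph representation of $\mathcal M$ over $T_{x_0}\mathcal M$ contains a point $x_0+u+\eta(u)$ with $\|\eta(u)\|\le C\|u\|^2/\tau\le CC_w^2\sigma^2$. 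Then
\[
\dist\Bigl(\sum_jw_jx_j,\mathcal M\Bigr)\le\Bigl\|\sum_jw_j\eta_j-\eta(u)\Bigr\|\le C(C_w+C_w^2)\sigma^2 .
\]
The subtle point is that the stated bound is linear in $C_w$, while this argument gives $C_w^2$. We would absorb the extra factor into the constant, treating $C_w$ as a fixed design bound, or refine the estimate by expanding $\eta$ to second order, since $\eta_j=\tfrac12\mathrm{II}(u_j,u_j)+O(\sigma^3)$. Combining the pieces gives $\dist(\widehat{\mathbf y}_{\mathrm{MS}},\mathcal M)\le CC_w\varepsilon$.

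Finally, substitute $h_j=c_j\sigma$. The condition $N\ge C\sigma^{-(d+2)}\{D\log(1/\sigma)+\log(M/\delta)\}$ gives $r_{N,j}\le C\sigma$, so $h_jr_{N,j}=O(\sigma^2)$ and the whole bound is $O(\sigma^2)$. The same condition also ensures $r_{N,j}\le r_{\mathrm{orb}}$, so Theorem~\ref{thm:sample-orbit-localization} was applicable in the first step.
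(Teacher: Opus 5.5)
Your proof is correct. Its last step, the affine-aggregation bound in one $O(\sigma)$ patch with the extra $C_w^2$ factor absorbed because the weight cap is fixed, is exactly how the paper concludes (Lemma~\ref{app:lem:local-affine-aggregation}), so the second-order refinement you mention is not needed.

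The per-branch analysis follows a different route. The paper works relative to the fixed-point manifold $\mathcal M_{h_j}$:
\begin{itemize}
\item it combines the transverse contraction of Lemma~\ref{app:lem:fixed-manifold}(iii) with the uniform perturbation bound;
\item it uses the initialization buffer and the condition $Ch_jr_{N,j}/(1-\bar\kappa_j)\le\tfrac12\rho_{0,j}\sigma$ to keep each branch in $\mathcal U_{h_j}$;
\item it passes to $\mathcal M$ through $d_H(\mathcal M_{h_j},\mathcal M)\le C\sigma^2$;
\item it gets cross-scale locality by summing the ambient step bound $\|\mathcal T_{h_j}(\mathbf y)-\mathbf y\|\le C_{\mathrm{step},j}\dist(\mathbf y,\mathcal M_{h_j})$. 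The population contribution is then a geometric series of size $O(\sigma)$, and only $L_jh_jr_{N,j}$ accumulates linearly.
\end{itemize}

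You instead work directly relative to $\mathcal M$. You use the normal-distance recursion of Theorem~\ref{thm:sample-orbit-localization} and track the foot point $\Pi_{\mathcal M}$, whose per-step motion is $O(\sigma^2+h_jr_{N,j})$ by the tangential bound of Lemma~\ref{lem:pop-mean-cov}. You should state that this step uses two facts: $\Pi_{\mathcal M}(x+n)=x$ for normal $n$ inside the reach, and $\Pi_{\mathcal M}$ is Lipschitz on the tube.

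Your route is more elementary. It needs neither the transverse contraction toward $\mathcal M_{h_j}$ nor the $\rho_{0,j}$ sampling-error condition. It only needs $\mathbf y^{(0)}$ to lie in some $\Phi_{A_0'\sigma}$ with $A_0'<A$ (which the buffer guarantees) and $r_{N,j}\le r_{\mathrm{orb}}$. The price is that it relies essentially on the $L_j\sigma^2\le C_{\mathrm{hor}}\sigma$ half of the horizon condition, because your $O(\sigma^2)$ per-step population drift is never shown to decay. The paper's locality bound needs only $L_jh_jr_{N,j}\le C\sigma$.

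One conversion should be written out. The transient hypothesis of Theorem~\ref{app:thm:ms-sample} is stated with $\bar\kappa_j$ and $\dist(\mathbf y^{(0)},\mathcal M_{h_j})$, whereas your transient term is $\kappa^{L_j}\dist(\mathbf y^{(0)},\mathcal M)$. To bound it, use $\dist(\mathbf y^{(0)},\mathcal M)\le\dist(\mathbf y^{(0)},\mathcal M_{h_j})+C\sigma^2$ and choose $\kappa\le\bar\kappa_j$. This choice is allowed because the recursion of Theorem~\ref{thm:contraction} holds for every $\kappa$ above the scale-$h_j$ value of $\kappa_0$, and $\bar\kappa_j$ exceeds that value.
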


}

\section{Numerical Studies}\label{Sect5}

{
\raggedbottom
\subsection{Experimental setup}
\label{sec:metrics}
We evaluate STSP and MS-STSP on synthetic manifolds and mouse dentate-gyrus
single-cell data. The synthetic study uses four curves (circle, tennis curve,
trefoil knot, and sixfolds) and two surfaces (sphere and torus), shown in
Supplementary Figure~S1, with ten independently generated noisy clouds per
setting. Let $\mathcal Y=\{\mathbf y_i\}_{i=1}^N$ denote the noisy input and
$\widehat{\mathcal Y}=\{\widehat{\mathbf y}_i\}_{i=1}^N$ a fitted output;
data-generation and implementation details are given in Supplementary
Section~S5. Let
$\mathcal X_{\mathrm{eval}}=\{\mathbf x_{{\mathrm{eval}},\ell}\}_{\ell=1}^{N_{\mathrm{eval}}}
\subset\mathcal M$ be an independent dense evaluation cloud, fixed across
repetitions within a setting. Within each repetition, all methods use the
same $\mathcal Y$ and $\mathcal X_{\mathrm{eval}}$.
\par
\begin{samepage}
\noindent We use the following two metrics to evaluate
$\widehat{\mathcal Y}$.
\[
 e_{\mathrm{fit}}=\frac1N\sum_{i=1}^N
 \dist(\widehat{\mathbf y}_i,\mathcal X_{\mathrm{eval}}),
 \qquad
 e_{\mathrm{cov}}=\frac1{N_{\mathrm{eval}}}\sum_{\ell=1}^{N_{\mathrm{eval}}}
 \dist(\mathbf x_{{\mathrm{eval}},\ell},\widehat{\mathcal Y}).
\]
\end{samepage}
Both metrics are means over all points in their respective clouds.
The first approximates the distance from the fitted cloud to the underlying
manifold using the dense evaluation cloud, whereas the second measures how
well the fitted cloud covers that evaluation cloud. Smaller values are
better.

We compare MS-STSP, STSP, YY-adapted, MF25, local PCA \citep{local_pca}, and
mean shift \citep{mean_shift}, and two SCMS variants \citep{ozertem2011locally,zhang2023linear}.
SCMS-KDE uses the Hessian of a Gaussian kernel density estimate to define
its projection subspace; SCMS-log-KDE uses the Hessian of the log density.
Both project the Gaussian mean-shift vector onto the selected normal
subspace. We refer to the estimator of \citet{yao2023manifold} as the
Yao--Yau (YY) estimator, following the terminology of \citet{yaosu2026}.
YY-adapted denotes our dimension-aware adaptation of its local
mean-displacement construction, using a $d$-dimensional tangent estimate
to construct its averaging neighbourhood. MF25 numerically approximates the residual
zero set of the method described in Section~\ref{sec:implicit-explicit}.
{The curve experiments use $N=5000$ and $\sigma=0.08$ for all four curves,
with an additional circle experiment at
$\sigma\in\{0.1,0.2,0.3\}$. The surface experiments use $N=20000$ and
$\sigma=0.08$ for both surfaces.} Noise is isotropic Gaussian in $\R^3$.
For each method with tunable parameters, the synthetic experiments select
the candidate minimizing
$J_{\mathrm{mean}}=\max(e_{\mathrm{fit}},e_{\mathrm{cov}})$.
The implementation code is available at
\url{https://github.com/zhigang-yao/STSP}.

{\subsection{Fitting and coverage errors on synthetic manifolds}}
\label{sec:one-d-exp}
We first examine how the competing methods balance fitting accuracy and
geometric coverage across the six synthetic manifolds; Table~\ref{tab:comparison-main}
summarises the corresponding errors, while Figure~\ref{fig:two-d-fits}
illustrates the fitted clouds on the torus.

\begin{table}[!htbp]
\centering
\caption{Fitting and coverage errors over ten independent realisations
on six synthetic manifolds (mean $\pm$ standard deviation, in units of
$10^{-3}$). For each
manifold and error measure, the smallest mean is bold and the second
smallest is underlined; rankings use unrounded means.}
\label{tab:comparison-main}
\footnotesize
\setlength{\tabcolsep}{2pt}
\begin{tabular}{llcccccc}
\hline
Method & Error & Circle & Tennis curve & Trefoil knot & Sixfolds & Sphere & Torus \\
\hline
\multirow{2}{*}{Local PCA} & $e_{\mathrm{fit}}$ & $18.90\pm0.60$ & $22.56\pm0.87$ & $15.39\pm0.63$ & $33.11\pm0.84$ & $24.67\pm0.30$ & $32.55\pm0.37$ \\
 & $e_{\mathrm{cov}}$ & $\underline{8.07}\pm0.45$ & $\underline{11.19}\pm0.57$ & $\underline{12.76}\pm0.44$ & $\underline{14.31}\pm0.28$ & $20.41\pm0.19$ & $\underline{21.56}\pm0.20$ \\
\hline
\multirow{2}{*}{Mean shift} & $e_{\mathrm{fit}}$ & $15.05\pm3.99$ & $23.07\pm1.40$ & $24.51\pm6.75$ & $25.61\pm3.44$ & $21.72\pm0.57$ & $\underline{23.31}\pm0.76$ \\
 & $e_{\mathrm{cov}}$ & $24.66\pm2.24$ & $29.54\pm1.74$ & $39.42\pm2.22$ & $31.26\pm2.27$ & $42.13\pm1.07$ & $43.19\pm0.67$ \\
\hline
\multirow{2}{*}{SCMS-KDE} & $e_{\mathrm{fit}}$ & $25.56\pm1.00$ & $30.94\pm1.01$ & $15.91\pm0.96$ & $47.04\pm1.94$ & $29.32\pm0.56$ & $40.03\pm0.75$ \\
 & $e_{\mathrm{cov}}$ & $17.27\pm0.78$ & $22.81\pm1.21$ & $14.87\pm0.71$ & $24.19\pm1.44$ & $22.06\pm0.36$ & $24.86\pm0.38$ \\
\hline
\multirow{2}{*}{SCMS-log-KDE} & $e_{\mathrm{fit}}$ & $10.68\pm0.71$ & $14.11\pm1.21$ & $12.86\pm0.78$ & $22.24\pm1.43$ & $18.33\pm0.46$ & $27.66\pm0.55$ \\
 & $e_{\mathrm{cov}}$ & $10.66\pm0.75$ & $14.19\pm1.22$ & $13.58\pm0.70$ & $20.44\pm1.54$ & $19.71\pm0.35$ & $28.13\pm0.53$ \\
\hline
\multirow{2}{*}{YY-adapted} & $e_{\mathrm{fit}}$ & $\underline{8.91}\pm0.83$ & $\underline{11.44}\pm0.80$ & $15.32\pm0.58$ & $\underline{18.41}\pm0.86$ & $\underline{17.23}\pm0.33$ & $25.86\pm0.47$ \\
 & $e_{\mathrm{cov}}$ & $8.95\pm0.71$ & $11.75\pm0.84$ & $17.33\pm0.58$ & $18.56\pm0.85$ & $23.51\pm0.23$ & $29.85\pm0.46$ \\
\hline
\multirow{2}{*}{MF25} & $e_{\mathrm{fit}}$ & $10.15\pm0.81$ & $13.37\pm1.01$ & $\underline{12.54}\pm0.77$ & $21.50\pm1.45$ & $17.62\pm0.44$ & $30.90\pm0.98$ \\
 & $e_{\mathrm{cov}}$ & $10.12\pm0.78$ & $13.33\pm1.07$ & $13.26\pm0.69$ & $20.76\pm1.48$ & $19.22\pm0.36$ & $23.97\pm0.50$ \\
\hline
\multirow{2}{*}{STSP} & $e_{\mathrm{fit}}$ & $10.27\pm0.80$ & $13.31\pm1.00$ & $12.81\pm0.69$ & $20.48\pm1.13$ & $18.57\pm0.38$ & $24.01\pm0.42$ \\
 & $e_{\mathrm{cov}}$ & $9.85\pm0.81$ & $12.77\pm1.00$ & $13.27\pm0.61$ & $17.53\pm1.53$ & $\underline{19.16}\pm0.25$ & $23.77\pm0.37$ \\
\hline
\multirow{2}{*}{MS-STSP} & $e_{\mathrm{fit}}$ & $\mathbf{5.98}\pm0.69$ & $\mathbf{8.92}\pm0.83$ & $\mathbf{11.56}\pm0.51$ & $\mathbf{15.04}\pm0.40$ & $\mathbf{11.49}\pm0.20$ & $\mathbf{16.67}\pm0.20$ \\
 & $e_{\mathrm{cov}}$ & $\mathbf{5.83}\pm0.67$ & $\mathbf{8.53}\pm0.81$ & $\mathbf{12.09}\pm0.51$ & $\mathbf{12.49}\pm0.43$ & $\mathbf{13.65}\pm0.19$ & $\mathbf{17.33}\pm0.16$ \\
\hline
\end{tabular}
\end{table}

The results reveal a consistent trade-off between fitting and coverage
errors. Local PCA provides particularly small coverage
errors, ranking second on five of the six manifolds, but has substantially
larger fitting errors. Mean shift sometimes fits more closely, including the
second-smallest torus fitting error, but has the largest coverage error on
every manifold. SCMS-log-KDE improves markedly on SCMS-KDE and gives a more
balanced fit, while YY-adapted has the second-smallest fitting error on
four manifolds at the cost of larger coverage errors, especially on the
trefoil knot, sphere, and torus. MF25 is also competitive, attaining the
second-smallest trefoil fitting error and coverage errors on the sphere and
torus close to those of STSP. This broadly similar performance is consistent
with MF25 and STSP targeting related manifold-fitting objects, both defined
as zero sets of local normal residuals, although their constructions and
numerical procedures differ. Across all six manifolds, STSP reduces fitting error by
17--46\% relative to local PCA and coverage error by 44--66\% relative to
mean shift, thereby performing well on both criteria. MS-STSP has the
smallest fitting and coverage errors in every setting, reducing the
corresponding STSP errors by 9--42\%. For example, on the torus its fitting
and coverage errors are 0.01667 and 0.01733, compared with 0.02401 and
0.02377 for STSP.

\begin{figure}[!htbp]
  \centering
  \includegraphics[width=.9\linewidth]{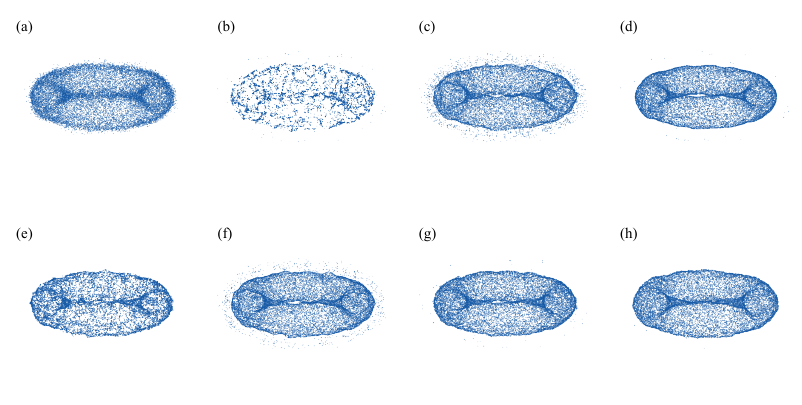}
  \caption{Torus fits for the first of ten independent realisations.
  Panels (a)--(h) show local PCA, mean shift, SCMS-KDE, SCMS-log-KDE,
  YY-adapted, MF25, STSP, and MS-STSP, respectively.}
  \label{fig:two-d-fits}
\end{figure}

Figure~\ref{fig:two-d-fits} uses the torus as a representative example.
The local-PCA fit in panel (a) covers the surface but has a visibly rough,
spiked boundary. The mean-shift and YY-adapted fits in panels (b) and (e)
contain clear gaps, consistent with their larger coverage errors. SCMS-KDE
and MF25 in panels (c) and (f) recover smoother boundaries but retain
scattered points away from the surface. SCMS-log-KDE and STSP in panels
(d) and (g) give coherent, well-covered surfaces with only a small number
of such points. MS-STSP in panel (h) gives the cleanest reconstruction,
with a smooth boundary, few visible outliers, and even coverage. These
comparisons are consistent with the intended effects of successive normal
correction and its multi-scale extension. Corresponding fitted-cloud
comparisons for the other manifolds are provided in Supplementary
Figures~S4--S9.

\subsection{Multi-scale fitting under high noise}
\label{sec:high-noise}
To assess robustness beyond the baseline noise level, Table~\ref{tab:high-noise-main}
tracks the two errors on the circle as $\sigma$ increases. Both errors increase
for every method, but at markedly different rates. Local PCA continues to
favour coverage, attaining the smallest coverage error at $\sigma=0.3$ while
its fitting error becomes large; mean shift shows the converse pattern, with
the second-smallest fitting error at $\sigma=0.2$ and $0.3$ but substantially
larger coverage errors. SCMS-log-KDE remains more stable than SCMS-KDE, while
YY-adapted preserves coverage better than fitting and MF25 deteriorates
sharply at the highest noise level. STSP remains competitive on fitting but
loses coverage as noise increases. MS-STSP provides the most consistent
balance: it has the smallest fitting error at all three noise levels and the
smallest or second-smallest coverage error. At $\sigma=0.3$, its fitting and
coverage errors are 0.03323 and 0.02706, reductions of approximately 57\% and
60\% from the corresponding STSP errors of 0.07739 and 0.06815.

\begin{table}[!htbp]
\centering
\caption{Fitting and coverage errors over ten independent realisations
of the high-noise circle experiment (mean $\pm$ standard deviation, in units
of $10^{-3}$).
Within each noise level and for each error measure separately, the smallest
mean is marked in bold and the second smallest is underlined. Rankings use
unrounded means.}
\label{tab:high-noise-main}

\footnotesize
\setlength{\tabcolsep}{3pt}
\begin{tabular}{llccc}
\hline
Method & Error & $\sigma=0.1$ & $\sigma=0.2$ & $\sigma=0.3$ \\
\hline
\multirow{2}{*}{Local PCA} & $e_{\mathrm{fit}}$ & $26.27\pm0.76$ & $78.86\pm3.16$ & $150.48\pm5.16$ \\
 & $e_{\mathrm{cov}}$ & $\underline{10.27}\pm0.46$ & $\underline{18.74}\pm0.82$ & $\mathbf{26.51}\pm1.28$ \\
\hline
\multirow{2}{*}{Mean shift} & $e_{\mathrm{fit}}$ & $16.94\pm6.26$ & $\underline{37.22}\pm5.81$ & $\underline{72.96}\pm9.49$ \\
 & $e_{\mathrm{cov}}$ & $26.92\pm2.38$ & $46.24\pm2.75$ & $77.62\pm13.68$ \\
\hline
\multirow{2}{*}{SCMS-KDE} & $e_{\mathrm{fit}}$ & $38.13\pm1.08$ & $125.19\pm5.35$ & $230.17\pm10.26$ \\
 & $e_{\mathrm{cov}}$ & $26.13\pm1.10$ & $63.66\pm7.25$ & $66.47\pm12.18$ \\
\hline
\multirow{2}{*}{SCMS-log-KDE} & $e_{\mathrm{fit}}$ & $14.22\pm1.36$ & $43.25\pm3.32$ & $91.09\pm12.33$ \\
 & $e_{\mathrm{cov}}$ & $13.86\pm1.39$ & $39.34\pm3.90$ & $83.21\pm11.31$ \\
\hline
\multirow{2}{*}{YY-adapted} & $e_{\mathrm{fit}}$ & $\underline{11.05}\pm1.39$ & $37.52\pm4.44$ & $96.51\pm6.03$ \\
 & $e_{\mathrm{cov}}$ & $10.84\pm1.20$ & $26.51\pm2.36$ & $30.92\pm1.39$ \\
\hline
\multirow{2}{*}{MF25} & $e_{\mathrm{fit}}$ & $13.51\pm1.31$ & $43.50\pm3.80$ & $154.62\pm13.85$ \\
 & $e_{\mathrm{cov}}$ & $13.27\pm1.34$ & $38.69\pm3.65$ & $78.32\pm11.98$ \\
\hline
\multirow{2}{*}{STSP} & $e_{\mathrm{fit}}$ & $13.34\pm1.00$ & $38.04\pm3.18$ & $77.39\pm12.10$ \\
 & $e_{\mathrm{cov}}$ & $12.55\pm1.22$ & $33.05\pm3.65$ & $68.15\pm10.72$ \\
\hline
\multirow{2}{*}{MS-STSP} & $e_{\mathrm{fit}}$ & $\mathbf{11.03}\pm1.27$ & $\mathbf{17.20}\pm2.35$ & $\mathbf{33.23}\pm3.33$ \\
 & $e_{\mathrm{cov}}$ & $\mathbf{9.36}\pm1.15$ & $\mathbf{14.58}\pm2.19$ & $\underline{27.06}\pm3.83$ \\
\hline
\end{tabular}

\end{table}

\begin{figure}[H]
  \centering
  \includegraphics[width=.9\linewidth]{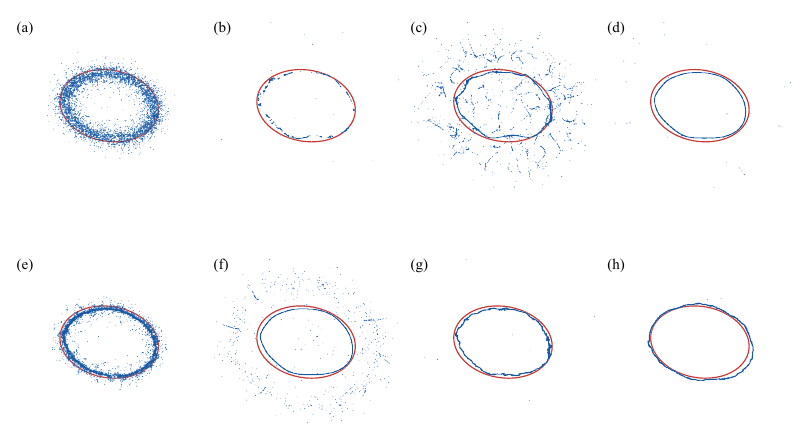}
  \caption{Circle fitting for the first of ten independent realisations
  at $\sigma=0.3$. Panels (a)--(h) show local PCA, mean shift, SCMS-KDE,
  SCMS-log-KDE, YY-adapted, MF25, STSP, and MS-STSP, respectively.
  Blue points are fitted observations and the thin red curve is the
  underlying circle.}
  \label{fig:circle-high-noise}
\end{figure}

Figure~\ref{fig:circle-high-noise} shows the corresponding fits at
$\sigma=0.3$; fits at the lower noise levels are provided in Supplementary
Section~S5. Local PCA
and YY-adapted in panels (a) and (e) form rough, diffuse bands; the
mean-shift fit in panel (b) is sparse and uneven along the circle; and
SCMS-KDE and MF25 in panels (c) and (f) retain many points away from the
underlying curve. More importantly, some apparently smooth fits lie
systematically inside the red reference circle, most visibly for
SCMS-log-KDE and MF25 in panels (d) and (f), illustrating curvature bias
under high noise. STSP in panel (g) removes most outlying points but retains
some irregularity and inward displacement. MS-STSP in panel (h) follows the
reference circle more closely while maintaining nearly complete coverage,
consistent with multi-scale extrapolation reducing curvature bias without
reintroducing substantial fitting variability.

\subsection{Empirical convergence and radius sensitivity}
We next examine agreement between the empirical behaviour of STSP and the
theory in three respects: the scaling of fitting error with noise, geometric
convergence of the iterates to an error floor, and sensitivity to the
neighbourhood radius.

Figure~\ref{fig:rate-vs-sigma} gives a hard-radius finite-sample diagnostic
for the circle and sphere over a wide noise range. The dashed guide has the
theoretical $O(\sigma^2)$ slope; its vertical position is chosen for visual
comparison with the observed errors, rather than calculated from a theoretical
constant.

\begin{figure}[!htbp]
\centering
\includegraphics[width=.78\linewidth]{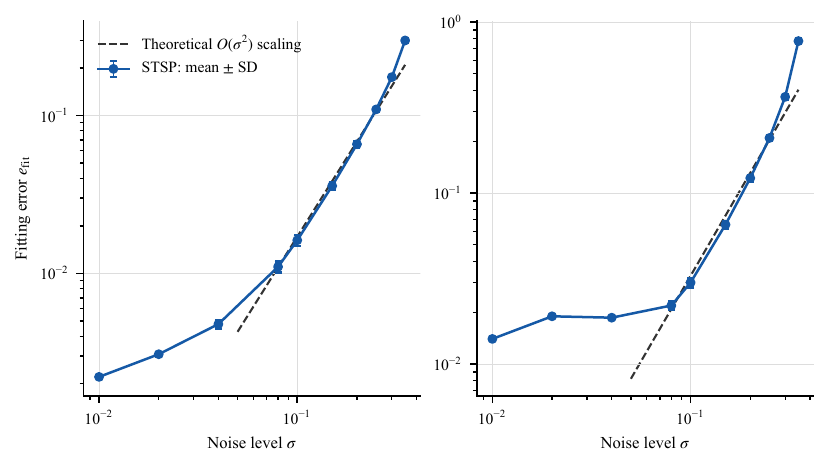}
\vspace{-5pt}
{\makebox[\linewidth]{\makebox[.5\linewidth]{(a)}\makebox[.5\linewidth]{(b)}}}
\caption{STSP fitting error versus noise on log--log axes for
(a) the circle and (b) the sphere. Points show means over ten independent repetitions;
error bars show one standard deviation on either side of the mean.
Dashed lines indicate $O(\sigma^2)$ scaling.}
\label{fig:rate-vs-sigma}
\end{figure}

Figure~\ref{fig:iteration-convergence} shows the first few iterations for
the circle and sphere. The finite-sample bound in
Theorem~\ref{thm:sample-orbit-localization} has the form
$\dist(\widehat{\mathbf y}^{(t)},\mathcal M)
\le\kappa^t\dist(\widehat{\mathbf y}^{(0)},\mathcal M)
+C(\sigma^2+hr_N)/(1-\kappa)$: a geometrically decreasing term followed
by a finite-sample error floor. The observed fitting errors
fall rapidly over the first few iterations and then level off, qualitatively
consistent with this form.

\begin{figure}[!htbp]
\centering
\includegraphics[width=.78\linewidth]{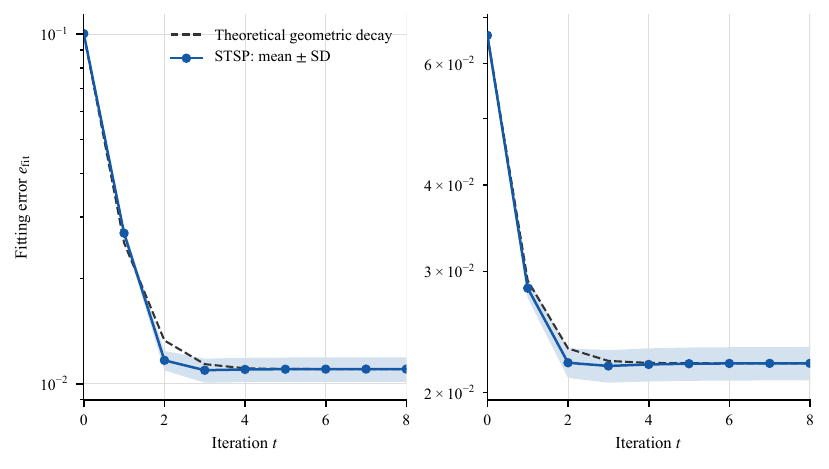}
\vspace{-5pt}
{\makebox[\linewidth]{\makebox[.5\linewidth]{(a)}\makebox[.5\linewidth]{(b)}}}
\caption{STSP fitting error versus iteration for (a) the circle
and (b) the sphere. Solid curves show means over ten independent repetitions;
shaded bands show one standard deviation on either side of the mean.
Dashed curves schematically illustrate the geometric-decay-plus-floor form
of Theorem~\ref{thm:sample-orbit-localization}.}
\label{fig:iteration-convergence}
\end{figure}

Figure~\ref{fig:h-sensitivity} examines sensitivity to the neighbourhood
radius and exhibits the expected bias--variance pattern. Too-small radii
give unstable local moments, whereas too-large radii introduce curvature
bias. The minimizing ratio changes with the noise level.

\begin{figure}[!htbp]
\centering
\includegraphics[width=.62\linewidth]{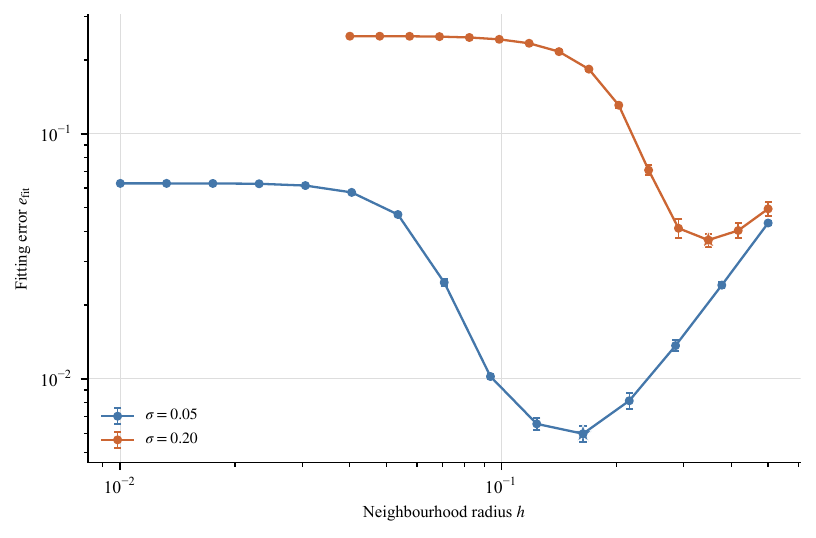}
\caption{STSP fitting error versus neighbourhood radius on a
circle at two noise levels. Curves show means over ten independent
repetitions; error bars show one standard deviation on either side of the mean.
Stars mark minima of the mean errors.}
\label{fig:h-sensitivity}
\end{figure}

\flushbottom
}

\subsection{Dentate gyrus single-cell data}
\label{sec:real-clustering}
\raggedbottom

{We analyse single-cell RNA-sequencing data from the mouse dentate gyrus
\citep{hochgerner2018}.
The data are available at
\url{https://raw.githubusercontent.com/theislab/scvelo_notebooks/master/data/DentateGyrus/10X43_1.h5ad}, which contain 2,930 cells assigned to 14
populations. Five annotations describe granule-cell neurogenesis: radial
glia-like cells, neuronal intermediate progenitor cells (nIPC),
neuroblasts, immature granule cells, and mature granule cells. The remaining
annotations distinguish astrocytes,
oligodendrocyte precursor cells (OPC), oligodendrocytes (OL), microglia,
endothelial cells, and the Cajal--Retzius, Cck--Tox, GABAergic, and mossy
neuronal populations.}

{Counts are library-normalised, log-transformed, and restricted to the
2,000 most variable genes before coordinatewise standardisation and PCA.
All methods use the same unwhitened representation at a given dimension.
The scores are divided by their root-mean-square cloud radius. For each
method, candidate parameters are selected by the average Euclidean
silhouette score (SIL) of the 14 supplied annotations. The principal analysis uses
$p=30$ and $d=1$; candidate sets and iteration settings are given in
Supplementary Section~S5.}

\begin{table}[H]
\centering
\caption{{Results for the dentate gyrus data using 30 principal
components. SIL uses the 14 supplied annotations. Accuracy uses optimal
one-to-one matching of cluster and annotation labels. Clustering uses 14
clusters; entries are means over 10 shared $k$-means seeds, with five
$k$-means replicates per
seed. Larger values are better.}}
\label{tab:dentategyrus-clustering}
\begin{tabular}{lrrrr}
\hline
Method & Average SIL & ARI & Accuracy & NMI \\
\hline
Input & 0.1615 & 0.4439 & 0.5747 & 0.6848 \\
Mean shift & 0.2504 & 0.4575 & 0.6514 & 0.7009 \\
SCMS-KDE & 0.3086 & 0.4701 & 0.6282 & 0.6332 \\
SCMS-log-KDE & 0.3841 & 0.4621 & 0.6174 & 0.7100 \\
Local PCA & 0.3643 & 0.4628 & 0.6230 & 0.7058 \\
YY-adapted & 0.3742 & 0.5153 & 0.6804 & 0.7227 \\
MF25 & 0.3466 & 0.5061 & 0.6788 & 0.7302 \\
STSP & 0.4224 & 0.5048 & 0.6659 & 0.7275 \\
MS-STSP & \textbf{0.4300} & \textbf{0.5474} & \textbf{0.6905} & \textbf{0.7338} \\
\hline
\end{tabular}

\end{table}

{Table~\ref{tab:dentategyrus-clustering} shows that STSP raises average
SIL from 0.1615 in the input and 0.2504 after mean shift to 0.4224.
Its ARI, accuracy, and NMI also exceed those of the input, mean shift,
local PCA, and both SCMS variants. Three-scale MS-STSP achieves the
largest values on all four measures: SIL 0.4300, ARI 0.5474, accuracy
0.6905, and NMI 0.7338.}

{Figure~\ref{fig:dentategyrus-stsp-ms-tsne} shows t-SNE plots computed
separately from the STSP and MS-STSP fitted representations, coloured by
the supplied cell annotations. In each plot,
neuroblasts, immature granule cells, and mature granule cells occupy
neighbouring regions, while the smaller radial glia-like and nIPC
populations remain visible. Several glial, vascular, and other neuronal
populations form separate labelled regions; OPC and OL lie near one another.}

\begin{figure}[!htbp]
\centering
\includegraphics[width=.84\linewidth]{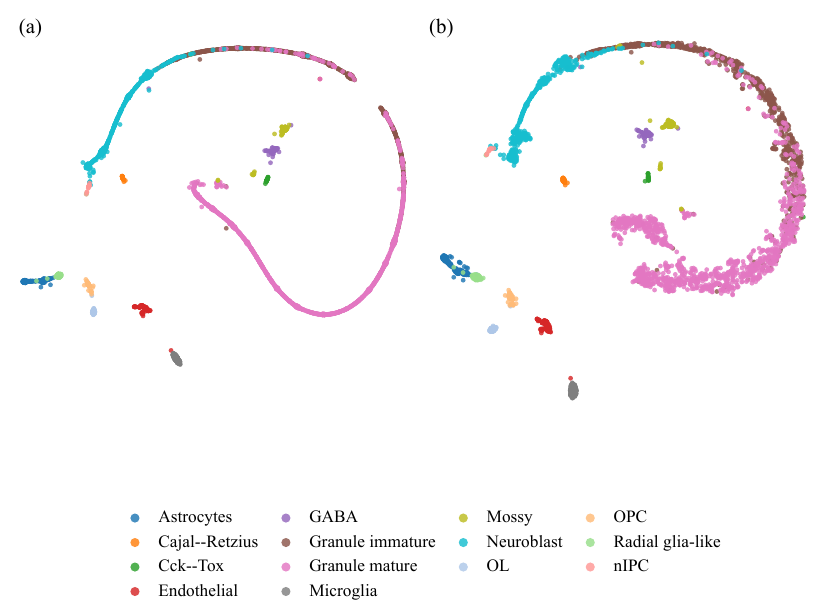}
\caption{Dentate gyrus cell populations in t-SNE plots of (a) the
STSP-fitted and (b) the three-scale MS-STSP-fitted representations using
30 principal components. Colours indicate the 14 supplied annotations.}
\label{fig:dentategyrus-stsp-ms-tsne}
\end{figure}

{Figure~\ref{fig:dentategyrus-pca} compares average SIL across
$p=20,30,40,50$. At each dimension, parameters are selected separately
for each method using the same full-cell annotation-based SIL criterion;
the plot therefore shows selected performance at each $p$, not the
behaviour of one fixed configuration across dimensions. The input SIL
decreases as more components are retained.
All fitted representations improve upon the input at each displayed
dimension. MS-STSP has the highest SIL throughout. STSP remains above
0.40 at $p=20,30,40$ and is second only to MS-STSP at all four
dimensions; local PCA and SCMS-log-KDE also retain substantially higher
SIL than the input as $p$ increases.}

\begin{figure}[H]
\centering
\includegraphics[width=.68\linewidth]{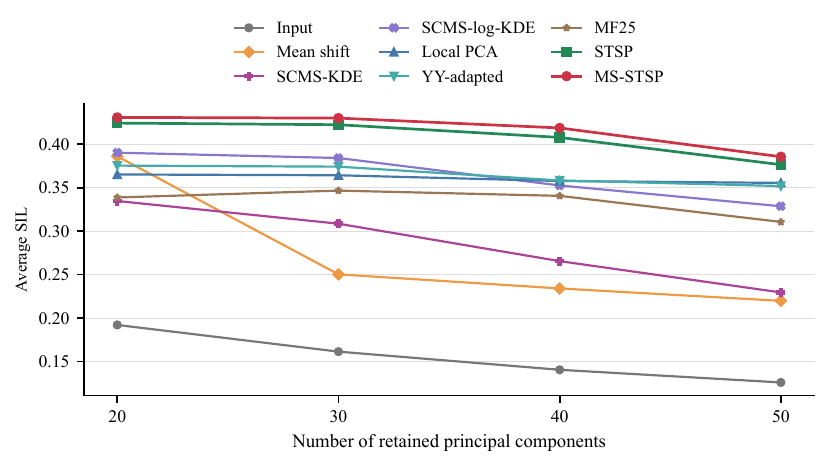}
\caption{Average annotation-based silhouette versus the number of retained
principal components for the dentate gyrus data.}
\label{fig:dentategyrus-pca}
\end{figure}

\section{Conclusion}\label{sec:conclusion}
STSP connects successive local tangent-space corrections to a fitted
manifold through the geometry of its nearby population fixed points. Under
the stated small-noise conditions, this fixed-point set is a $C^1$ normal
graph over the underlying manifold, diffeomorphic to it and within
$O(\sigma^2)$ in Hausdorff distance. Population orbits initialised
sufficiently close to this set converge geometrically to individual fixed
points. With high probability, fixed-reference empirical orbits initialised
within the local tube enter and remain in an $O(\sigma^2+hr_N)$ neighbourhood
of the underlying manifold after a geometric transient.

The experiments assess fitting accuracy and geometric coverage jointly;
at higher noise levels, multi-scale aggregation reduces observed
curvature-induced shrinkage. The theory establishes that stable aggregation
preserves the single-scale localisation order. Further work could establish
empirical-orbit convergence and joint guarantees for coverage and
data-driven scale selection.

\section*{Conflict of interest}
The authors declare no competing interests.

\section*{Acknowledgments}
\begingroup Zhigang Yao has received support from the Singapore Ministry of Education
through Tier~1 grants A-8004146-00-00 and A-8002931-00-00 at the National
University of Singapore. Yuqing Xia was supported by the National Natural
Science Foundation of China (grant No.~12401500). The authors are grateful for the resources and
facilities provided by the Shanghai Institute for Mathematics and
Interdisciplinary Sciences (SIMIS), which were essential for the completion
of this work.\endgroup

\section*{Supplementary material}
The Supplementary Material contains all proofs, supporting theoretical
results, manifold definitions, and detailed parameter settings for the
numerical experiments.

\clearpage
\raggedbottom
\setcounter{topnumber}{3}
\setcounter{totalnumber}{4}
\renewcommand{\topfraction}{0.9}
\renewcommand{\textfraction}{0.05}
\renewcommand{\floatpagefraction}{0.7}
\section*{Supplementary Material: Proofs and Numerical Details}
\setcounter{section}{0}
\setcounter{subsection}{0}
\setcounter{equation}{0}
\setcounter{table}{0}
\setcounter{figure}{0}
\setcounter{theorem}{0}
\setcounter{lemma}{0}
\setcounter{corollary}{0}
\setcounter{proposition}{0}
\setcounter{definition}{0}
\setcounter{assumption}{0}
\setcounter{remark}{0}
\renewcommand{\thesection}{S\arabic{section}}
\renewcommand{\theequation}{S\arabic{equation}}
\renewcommand{\thetable}{S\arabic{table}}
\renewcommand{\thefigure}{S\arabic{figure}}
\renewcommand{\thetheorem}{S\arabic{theorem}}
\renewcommand{\thelemma}{S\arabic{lemma}}
\renewcommand{\thecorollary}{S\arabic{corollary}}
\renewcommand{\theproposition}{S\arabic{proposition}}
\renewcommand{\thedefinition}{S\arabic{definition}}
\renewcommand{\theassumption}{S\arabic{assumption}}
\renewcommand{\theremark}{S\arabic{remark}}
\def\theHsection{appendix.\arabic{section}}
\def\theHsubsection{appendix.\arabic{section}.\arabic{subsection}}
\def\theHequation{appendix.\arabic{equation}}
\def\theHtable{appendix.\arabic{table}}
\def\theHfigure{appendix.\arabic{figure}}
\def\theHtheorem{appendix.\arabic{theorem}}
\def\theHlemma{appendix.\arabic{lemma}}
\def\theHcorollary{appendix.\arabic{corollary}}
\def\theHproposition{appendix.\arabic{proposition}}
\def\theHdefinition{appendix.\arabic{definition}}
\def\theHassumption{appendix.\arabic{assumption}}
\def\theHremark{appendix.\arabic{remark}}
\def\theHalgorithm{appendix.\arabic{algorithm}}
\section{Notation and geometric preliminaries}
\label{app:sec:analysis}

The supplement is organized by the three levels of theory in the main
paper.  Section~S1 records notation and geometric estimates.  Section~S2
establishes the population fixed-point and orbit results for STSP, Section~S3
gives the finite-sample results, and Section~S4 develops the localization
theory for MS-STSP\@. Numerical definitions and implementation details are
collected in Section~S5.

Notation follows the main paper. In particular, $\mathrm{II}_x$ denotes
the second fundamental form at $x$, and $K_{\mathrm{geo}}$ is the geometric
derivative bound in Assumption~1(A2).

Fix $c>0$ and set $h=c\sigma$, with
$c\sigma\le r_{\mathrm{geo}}/2$. Population asymptotics are taken as
$\sigma\to0$; in Section~S3, $N\to\infty$ jointly with $\sigma\to0$.
The normalised curvature remainder in the flat-law expansion is
$O(c^2\sigma/r_{\mathrm{geo}})$, which tends to zero for fixed $c$.

We first record two standard consequences of bounded curvature and positive reach, used repeatedly below.

\begin{lemma}[Tangent approximation]
\label{app:lem:tangent-approx}
For $x\in\mathcal{M}$ and $z\in\mathcal{M}$ with $\|z-x\|\le h\le\tau/2$, write $z-x = \mathbf{P}_x(z-x) + \mathbf{n}(z)$ with $\mathbf{n}(z)\in N_x\mathcal{M}$. Then
\[
  \|\mathbf{n}(z)\| \le \frac{\|z-x\|^2}{2\tau}, \qquad
  {0\le \|z-x\|-\|\mathbf{P}_x(z-x)\| \le \frac{\|z-x\|^3}{4\tau^2}.}
\]
\end{lemma}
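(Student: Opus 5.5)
We will work in a local graph representation of $\mathcal M$ over its tangent space at $x$ and derive both bounds from the reach condition alone. Set $u=z-x$ and $v=\mathbf P_x u$, so that $u=v+\mathbf n(z)$ with $v\perp\mathbf n(z)$.

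\emph{Normal bound.} We use the standard reach characterisation (Federer): for all $x,z\in\mathcal M$,
\[
 \dist\bigl(z-x,\,T_x\mathcal M\bigr)\le \frac{\|z-x\|^2}{2\tau}.
\]
It follows from the fact that every open ball of radius $\tau$ tangent to $\mathcal M$ at $x$ misses $\mathcal M$. Concretely, take $\eta=\mathbf n(z)/\|\mathbf n(z)\|\in N_x\mathcal M$ (assume $\mathbf n(z)\neq\mathbf0$, otherwise there is nothing to prove). The point $x+\tau\eta$ has $x$ as a nearest point in $\mathcal M$, or more carefully $x+s\eta$ does for every $s<\tau$, and we then let $s\uparrow\tau$. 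Hence $\|z-x-\tau\eta\|\ge\tau$. Expanding the square gives
\[
 \|u\|^2-2\tau\langle u,\eta\rangle\ge0,
\]
and $\langle u,\eta\rangle=\|\mathbf n(z)\|$, which yields $\|\mathbf n(z)\|\le\|u\|^2/(2\tau)$. In the paper this can be cited as a standard consequence of (A3). Because the argument is global, the hypothesis $\|z-x\|\le h$ is not even needed for this part.

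\emph{Tangential bound.} The lower inequality $\|v\|\le\|u\|$ is immediate from orthogonality. For the upper inequality we write
\[
 \|u\|-\|v\|=\frac{\|u\|^2-\|v\|^2}{\|u\|+\|v\|}=\frac{\|\mathbf n(z)\|^2}{\|u\|+\|v\|}.
\]
The numerator is at most $\|u\|^4/(4\tau^2)$ by the normal bound. The denominator is at least $\|u\|$, since $\|v\|\ge0$. Therefore
\[
 \|u\|-\|v\|\le\frac{\|u\|^3}{4\tau^2},
\]
which is exactly the claim. The condition $h\le\tau/2$ is not strictly required here either, but it guarantees $\|\mathbf n(z)\|\le\|u\|/4$. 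That gives $\|v\|\ge c\|u\|$ if a sharper constant is wanted.

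\emph{Main obstacle.} The only delicate step is justifying the empty tangent-ball property at radius exactly $\tau$ rather than below it. We handle this by the limiting argument $s\uparrow\tau$. For each $s<\tau$, the point $x+s\eta$ lies within reach and its unique projection is $x$, because normal segments of length less than the reach project back to their foot point. So $\|z-x-s\eta\|\ge s$ for all $z\in\mathcal M$, which gives $\|\mathbf n(z)\|\le\|u\|^2/(2s)$, and letting $s\to\tau$ finishes the argument.
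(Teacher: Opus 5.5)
Your proposal is correct. Its second half, $\|u\|-\|v\|=\|\mathbf n(z)\|^2/(\|u\|+\|v\|)\le\|\mathbf n(z)\|^2/\|u\|$, is exactly the paper's argument; as the paper does, you should note that the case $z=x$ is trivial before dividing by $\|u\|$.

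The two routes differ only for the normal bound. The paper does not prove $\dist(z,x+T_x\mathcal M)\le\|z-x\|^2/(2\tau)$; it quotes it from Proposition~6.3 of Niyogi, Smale and Weinberger. You instead derive it from the empty tangent-ball property, which is the classical argument behind Federer's Theorem~4.18. Your version is self-contained and global, and it makes clear that the hypothesis $\|z-x\|\le h\le\tau/2$ is used in neither inequality. It also removes the dependence on the external citation. That is worth having, because the paper later invokes the same Proposition~6.3 in the volume-comparison lemma for a different statement, the geodesic comparison $d_g(x,z)\le\tau\{1-\sqrt{1-2r/\tau}\}$. The paper's route is simply shorter.

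One correction to your own assessment of the obstacle. The limit $s\uparrow\tau$ is routine. The substantive ingredient is the claim that $\Pi_{\mathcal M}(x+s\eta)=x$ for every unit $\eta\in N_x\mathcal M$ and every $s<\tau$. Under the paper's definition of reach via unique nearest points, this is a genuine theorem about sets of positive reach (see Federer's Theorem~4.8). It is immediate from the equivalent characterisation of reach by embeddedness of the open normal bundle of radius $\tau$. Either way, you should cite it rather than present it as self-evident.
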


\begin{proof}
{By Proposition~6.3 of \citet{Niyogi2008} and the assumption
$\reach(\mathcal{M})\ge\tau$,
$\dist(z,x+T_x\mathcal M)\le \|z-x\|^2/(2\tau)$ for
$z\in\mathcal M$. Since
$\dist(z,x+T_x\mathcal M)=\|(I_D-\mathbf P_x)(z-x)\|
=\|\mathbf n(z)\|$, this gives the first inequality. For the second
inequality, set $r=\|z-x\|$,
$r_{\mathrm T}=\|\mathbf P_x(z-x)\|$, and
$r_\perp=\|\mathbf n(z)\|$. Because the tangent and normal components are
orthogonal, $r^2=r_{\mathrm T}^2+r_\perp^2$, so $r_{\mathrm T}\le r$.
If $r=0$ the claim is trivial. Otherwise,}
\[
{
  0\le r-r_{\mathrm T}
  =\frac{r^2-r_{\mathrm T}^2}{r+r_{\mathrm T}}
  =\frac{r_\perp^2}{r+r_{\mathrm T}}
  \le \frac{r_\perp^2}{r}
  \le \frac{r^3}{4\tau^2},
}
\]
{where the last step uses $r_\perp\le r^2/(2\tau)$ from the first inequality. This proves the stated bound.}
\end{proof}

\begin{lemma}[Volume comparison]
\label{app:lem:volume}
{There exist constants $c_0\in(0,1)$ and $C>0$ such that, for $x\in\mathcal{M}$ and $h\le c_0 r_{\mathrm{geo}}$,}
\[
{
  B_g\bigl(x,h(1-Ch^2)\bigr)
  \subset B_{\mathrm E}(x,h)\cap\mathcal{M}
  \subset B_g\bigl(x,h(1+Ch^2)\bigr).
}
\]
{Moreover,}
\[
{
  \mathrm{Area}\{B_{\mathrm E}(x,h)\cap\mathcal{M}\} = \omega_d h^d\bigl(1 + O(h^2)\bigr),
}
\]
{where $\omega_d$ is the volume of the unit $d$-ball and the $O(h^2)$ term is uniform over $\mathcal{M}$ and controlled by the curvature bound $K_{\mathrm{geo}}$ in (A2).}
\end{lemma}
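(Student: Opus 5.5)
The plan is to work in geodesic normal coordinates at $x$, compare geodesic and Euclidean distances to obtain the two-sided inclusion, and then integrate the Riemannian volume density over the sandwiching geodesic balls to get the area expansion.

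\emph{Coordinates and uniformity.} Fix $x\in\mathcal M$ and $h\le c_0r_{\mathrm{geo}}$. Since $h<\iota_{\mathcal M}$, the map $u\mapsto \exp_x(u)$ is a diffeomorphism from $\{\|u\|<r\}\subset T_x\mathcal M$ onto its image, and $d_g(x,\exp_x u)=\|u\|$. By (A2), the embedding $u\mapsto\exp_x(u)\in\R^D$ has derivatives through order five bounded by $K_{\mathrm{geo}}$. Its Taylor expansion in normal coordinates is
\[
\exp_x(u)=x+u+\tfrac12\mathrm{II}_x(u,u)+O(\|u\|^3),
\]
with the constant uniform in $x$.

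\emph{Distance comparison.} From this expansion,
\[
\|\exp_x(u)-x\|^2=\|u\|^2+O(\|u\|^4),
\]
because the cross term $\langle u,\mathrm{II}_x(u,u)\rangle$ vanishes: $\mathrm{II}_x(u,u)$ is normal while $u$ is tangent. Hence $\|z-x\|=d_g(x,z)\{1+O(d_g(x,z)^2)\}$. The lower bound $\|z-x\|\le d_g(x,z)$ holds trivially, since chord length is at most arc length. This immediately gives the inclusion $B_g(x,h(1-Ch^2))\subset B_{\mathrm E}(x,h)\cap\mathcal M$, once $C$ is chosen to absorb the $O(\cdot)$ constant and $c_0$ is small enough that $h(1-Ch^2)>0$.

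For the other inclusion we need the converse comparison $d_g(x,z)\le \|z-x\|(1+C\|z-x\|^2)$ for \emph{all} $z\in\mathcal M$ with $\|z-x\|\le h$. This includes points that might a priori be geodesically far from $x$. I would use the reach: by Niyogi--Smale--Weinberger (Prop.~6.3 there), for $\|z-x\|\le\tau/2$,
\[
d_g(x,z)\le \tau-\tau\sqrt{1-2\|z-x\|/\tau}=\|z-x\|\{1+O(\|z-x\|/\tau)\}.
\]
This bound is only first order; it serves to show that $z$ lies in the normal chart of radius $2h<\iota_{\mathcal M}$. Inside the chart, the expansion above sharpens it to the cubic error: writing $z=\exp_x(u)$ with $\|u\|\le 2h$, we get $\|u\|=\|z-x\|(1+O(\|u\|^2))$. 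This establishes $B_{\mathrm E}(x,h)\cap\mathcal M\subset B_g(x,h(1+Ch^2))$. Localizing $z$ into the chart is the main obstacle, and invoking the reach-based geodesic bound is how I would handle it. Lemma~\ref{app:lem:tangent-approx} gives a complementary check: the tangential projection has length $\ge \|z-x\|-\|z-x\|^3/(4\tau^2)$.

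\emph{Area expansion.} In normal coordinates the volume density is
\[
\sqrt{\det g(u)}=1+O(\|u\|^2),
\]
uniformly, with the constant determined by the curvature and hence by $K_{\mathrm{geo}}$. Integrating over a ball of radius $r$ gives $\mathrm{Area}(B_g(x,r))=\omega_dr^d(1+O(r^2))$. Applying this to $r=h(1\pm Ch^2)$, and noting $(1\pm Ch^2)^d=1+O(h^2)$, the sandwich yields
\[
\mathrm{Area}\{B_{\mathrm E}(x,h)\cap\mathcal M\}=\omega_dh^d(1+O(h^2)).
\]
All constants are uniform over $\mathcal M$.
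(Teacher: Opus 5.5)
Your proposal is correct and follows essentially the same route as the paper: a normal-coordinate Taylor expansion with the vanishing tangent--normal cross term; the Niyogi--Smale--Weinberger reach bound $d_g(x,z)\le 2\|z-x\|$ to place $z$ in the chart of radius $2h$ before sharpening to the cubic error; and integration of the $1+O(\|u\|^2)$ volume density over the two sandwiching geodesic balls. The only difference is that you obtain the first inclusion directly from chord length $\le$ arc length, whereas the paper derives it from the Taylor comparison $\|\exp_x(u)-x\|\le\|u\|(1+C_0\|u\|^2)$; both arguments work.
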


\begin{proof}
{Because $h\le c_0r_{\mathrm{geo}}\le \iota_{\mathcal M}$, the exponential map at $x$ gives geodesic normal coordinates on the intrinsic balls considered below, while positive reach excludes distant intrinsic sheets from the corresponding Euclidean ball.  A vector $u\in T_x\mathcal{M}$ in the injectivity ball represents $\exp_x(u)$ and $d_g(x,\exp_x(u))=\|u\|$. The standard volume expansion in normal coordinates, valid under a uniform curvature bound \cite[Chapter~3]{doCarmo1992}, says that the area element has density $1+O(\|u\|^2)$ with respect to Lebesgue measure $du$ on $T_x\mathcal{M}$. Hence}
\[
{
  \mathrm{Area}\{B_g(x,r)\}
  = \int_{\|u\|\le r}\{1+O(\|u\|^2)\}\,du
  = \omega_d r^d\{1+O(r^2)\}.
}
\]
{We now compare the geodesic ball with the ambient Euclidean ball $B_{\mathrm E}(x,h)\cap\mathcal{M}$. In the same exponential coordinates, the embedded geodesic has the Taylor expansion $\exp_x(u)-x=u+\tfrac12\mathrm{II}_x(u,u)+O(\|u\|^3)$, where $\mathrm{II}_x$ is the second fundamental form and the quadratic term is normal to $T_x\mathcal{M}$. Hence $\|\exp_x(u)-x\|^2=\|u\|^2+O(\|u\|^4)$. Choose a constant $C_0>0$, independent of $x$ and $h$, such that}
\[
{
  \left|
  \frac{\|\exp_x(u)-x\|}{\|u\|}-1
  \right|
  \le C_0\|u\|^2
  \qquad (0<\|u\|\le r_{\mathrm{geo}}/2).
}
\]
{Equivalently, whenever $\|u\|\le r_{\mathrm{geo}}/2$,}
\[
{
  \|u\|(1-C_0\|u\|^2)
  \le \|\exp_x(u)-x\|
  \le \|u\|(1+C_0\|u\|^2).
}
\]
{Choose}
\[
{
  c_0 \le \min\left\{\frac{1}{4},\,\frac{1}{4r_{\mathrm{geo}}\sqrt{C_0}}\right\}.
}
\]
{Set $C=8C_0$. Then $h\le c_0r_{\mathrm{geo}}$ implies $h\le r_{\mathrm{geo}}/4$, $Ch^2\le1/2$, and $C_0(2h)^2\le1/4<1/2$. To prove the first inclusion, take $z\in B_g\bigl(x,h(1-Ch^2)\bigr)$ and write $z=\exp_x(u)$. Since $d_g(x,z)=\|u\|$, we have $\|u\|\le h(1-Ch^2)\le h\le r_{\mathrm{geo}}/4$, so the distance comparison above is applicable. Moreover,}
\[
{
  \|z-x\|
  = \|\exp_x(u)-x\|
  \le \|u\|(1+C_0\|u\|^2)
  \le h(1-Ch^2)(1+C_0h^2)
  \le h,
}
\]
{where the last inequality uses $C\ge C_0$. Hence $z\in B_{\mathrm E}(x,h)\cap\mathcal M$. Conversely, take $z\in B_{\mathrm E}(x,h)\cap\mathcal M$ and put $r=\|z-x\|$.  Proposition~6.3 of \citet{Niyogi2008}, applied with $\reach(\mathcal M)\ge\tau$ and $r\le h\le\tau/4$, gives the global comparison}
\[
{
 d_g(x,z)
 \le \tau\{1-\sqrt{1-2r/\tau}\}
 \le 2r\le2h.
}
\]
{Thus the reach condition excludes every intrinsically distant sheet from the Euclidean ball.  Since $2h\le r_{\mathrm{geo}}/2\le\iota_{\mathcal M}$, there is now a unique $u$ in the injectivity ball with $z=\exp_x(u)$ and $\|u\|=d_g(x,z)\le2h$.  The local lower bound is therefore applicable and gives}
\[
{
  \|u\|
  \le \frac{h}{1-C_0(2h)^2}
  \le h\{1+2C_0(2h)^2\}
  \le h(1+Ch^2).
}
\]
{Since $d_g(x,\exp_x(u))=\|u\|$, these two implications yield}
\[
{
  B_g\bigl(x,h(1-Ch^2)\bigr)
  \subset B_{\mathrm E}(x,h)\cap\mathcal{M}
  \subset B_g\bigl(x,h(1+Ch^2)\bigr).
}
\]
{Applying the preceding geodesic-ball volume expansion to the two bounding balls yields $\mathrm{Area}\{B_{\mathrm E}(x,h)\cap\mathcal{M}\}=\omega_d h^d\{1+O(h^2)\}$, with constants uniform over the compact manifold.}
\end{proof}

\section{Population theory for STSP}

\subsection{Flat reference model and uniform local expansions}
Recall from Section~2.2 of the main paper that $\mathbf Y$ is a generic
observation, $\boldsymbol\mu_h(\mathbf y)$ and
$\boldsymbol\Sigma_h(\mathbf y)$ are its conditional local moments,
$T_h(\mathbf y)$ is the corresponding leading-$d$ population local
tangent space, and $\mathbf P_h(\mathbf y)$ is its orthogonal projector.
The population update is
\begin{equation}
\label{app:eq:population-update}
  \mathcal{T}_h(\mathbf{y}) = \boldsymbol{\mu}_h(\mathbf{y}) + \mathbf{P}_h(\mathbf{y})\bigl[\mathbf{y} - \boldsymbol{\mu}_h(\mathbf{y})\bigr].
\end{equation}

The fixed-point identity used in the main paper follows immediately from
\eqref{app:eq:population-update}: wherever the leading-$d$ eigenspace is unique,
\[
 \mathcal T_h(\mathbf y)=\mathbf y
 \quad\Longleftrightarrow\quad
 (I_D-\mathbf P_h(\mathbf y))
 \{\mathbf y-\boldsymbol\mu_h(\mathbf y)\}=\mathbf0.
\]

Use the tube $\Phi_\rho$, projection point
$x=\Pi_{\mathcal M}(\mathbf y)$, and normal offset
$\boldsymbol\nu(\mathbf y)=\mathbf y-x\in N_x\mathcal M$ defined in the main paper. When
$h=c\sigma$, the required uniform conditional-moment expansion holds on
the shrinking tube $\Phi_{A\sigma}$, where $A<\infty$ is fixed.

Use $q$, $W=(W_{\mathrm T},W_{\mathrm N})$, $p_{c,a}$,
$\E_{c,a}$, $\cov_{c,a}$, $m_c$, and $S_c$ as defined in the main paper.
For reference, the density is
\begin{equation}
\label{app:eq:flat-law}
 p_{c,a}(w)=
 \frac{\mathbf 1\{\|w\|\le c\}
       \exp\{-\|a+w_{\mathrm N}\|^2/2\}}
      {\int_{\|v\|\le c}\exp\{-\|a+v_{\mathrm N}\|^2/2\}\,dv}.
\end{equation}
The corresponding normal map and full covariance are
\[
 m_c(a)=a+\E_{c,a}W_{\mathrm N},\qquad S_c(a)=\cov_{c,a}(W).
\]
The law in \eqref{app:eq:flat-law} is the exact scaled conditional law for an
affine $d$-plane.  {We call $(c,A)$ \emph{spectrally
admissible} when there is $g_0>0$ such that, for every $\|a\|\le A$,}
\begin{equation}
\label{app:eq:admissible}
 \lambda_{\min}\{\cov_{c,a}(W_{\mathrm T})\}
 -\lambda_{\max}\{\cov_{c,a}(W_{\mathrm N})\}\ge g_0.
\end{equation}
{This finite-dimensional integral condition is sufficient because the
required normal contraction follows from the next lemma.}

\begin{lemma}[Automatic flat-model normal contraction]
\label{app:lem:auto-normal-contraction}
For every fixed $c>0$ and $A<\infty$, the flat-model normal mean satisfies
\[
 \mathrm Dm_c(a)=I_q-S_{\mathrm N}(c,a),
 \qquad S_{\mathrm N}(c,a)=\cov_{c,a}(W_{\mathrm N}).
\]
Moreover, there is $\kappa_0=\kappa_0(c,A,d,q)<1$ such that
\[
 \sup_{\|a\|\le A}\|\mathrm Dm_c(a)\|_{\mathrm{op}}\le\kappa_0.
\]
Thus only the spectral separation in \eqref{app:eq:admissible} needs to be
assumed or verified.
\end{lemma}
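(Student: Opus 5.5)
The plan has two parts: first the score identity by differentiating under the integral, then the uniform bound from a Brascamp--Lieb comparison together with compactness.

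\emph{Score identity.} The plan is to factor $p_{c,a}$ along the tangent variable. Write $Z(a)=\int_{\|v\|\le c}\exp\{-\|a+v_{\mathrm N}\|^2/2\}\,dv$. Then
\[
\E_{c,a}W_{\mathrm N}=Z(a)^{-1}\int_{\|v\|\le c}v_{\mathrm N}\exp\{-\|a+v_{\mathrm N}\|^2/2\}\,dv .
\]
The domain is compact and the integrand is smooth in $a$, so differentiation under the integral sign is justified. The derivative of the exponent in $a$ is $-(a+v_{\mathrm N})^\top$, and this gives
\[
\mathrm D_a\E_{c,a}W_{\mathrm N}=-\E_{c,a}\{W_{\mathrm N}(a+W_{\mathrm N})^\top\}+\E_{c,a}W_{\mathrm N}\,\E_{c,a}(a+W_{\mathrm N})^\top=-\cov_{c,a}(W_{\mathrm N}).
\]
Adding the identity coming from the leading term $a$ in $m_c(a)$ yields $\mathrm Dm_c(a)=I_q-S_{\mathrm N}(c,a)$.

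\emph{Uniform bound.} Since $\mathrm Dm_c(a)$ is symmetric, it suffices to show
\[
0<\lambda_{\min}\{S_{\mathrm N}(c,a)\}\le\lambda_{\max}\{S_{\mathrm N}(c,a)\}<2
\]
uniformly over $\|a\|\le A$; in fact I expect to show $\lambda_{\max}\le1$.

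For the upper bound, I would integrate out the tangent variable. The marginal law of $W_{\mathrm N}$ has density proportional to
\[
\exp\{-\|a+w_{\mathrm N}\|^2/2\}\,(c^2-\|w_{\mathrm N}\|^2)_+^{d/2}\quad\text{on }\|w_{\mathrm N}\|\le c .
\]
The factor $(c^2-\|w_{\mathrm N}\|^2)_+^{d/2}$ is log-concave, because $\|w_{\mathrm N}\|^2<c^2$ is a convex set and $c^2-\|w_{\mathrm N}\|^2$ is concave and positive on it. So the marginal is a Gaussian $\mathcal N(-a,I_q)$ multiplied by a log-concave function. The Brascamp--Lieb variance inequality then gives $\cov(W_{\mathrm N})\preceq I_q$, so every eigenvalue of $I_q-S_{\mathrm N}$ is at least $0$.

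For the lower bound, $S_{\mathrm N}(c,a)$ is the covariance of a law with a positive density on a set of nonempty interior in $\R^q$. Hence it is positive definite, and every eigenvalue of $I_q-S_{\mathrm N}$ is strictly below $1$.

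Combining the two bounds gives $\|\mathrm Dm_c(a)\|_{\mathrm{op}}=1-\lambda_{\min}\{S_{\mathrm N}(c,a)\}<1$ pointwise. The map $a\mapsto S_{\mathrm N}(c,a)$ is continuous, again by dominated convergence on the compact domain, and the ball $\|a\|\le A$ is compact. So $\lambda_{\min}\{S_{\mathrm N}\}$ attains a positive minimum there, and
\[
\kappa_0:=1-\min_{\|a\|\le A}\lambda_{\min}\{S_{\mathrm N}(c,a)\}<1 ,
\]
which depends only on $(c,A,d,q)$.

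\emph{Main obstacle.} The delicate step is applying Brascamp--Lieb to a density with a hard-ball indicator, which is not smooth. I would handle it by one of two routes. The first is to approximate the indicator by smooth log-concave weights $\exp\{-k\,\psi(w)\}$, with $\psi$ convex and vanishing inside the ball, and pass to the limit in the covariance. The second is to invoke the version of Brascamp--Lieb for $\exp(-V)$ with $V$ convex and extended-valued, where $\nabla^2V\succeq I$ holds in the sense of convex functions: here $V=\|a+w\|^2/2$ plus the convex barrier $-\log(\cdot)$. Either route yields $\cov(W_{\mathrm N})\preceq I_q$.
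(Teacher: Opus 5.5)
Your proposal is correct and follows essentially the same route as the paper: the score identity by differentiation under the integral, marginalizing the tangent coordinates to get the normal density $\propto (c^2-\|n\|^2)^{d/2}\exp\{-\|a+n\|^2/2\}$, Brascamp--Lieb for the upper bound, and strict positivity plus continuity and compactness for a uniform $\kappa_0$. The only difference is cosmetic. The paper uses the Hessian of the barrier term to obtain the sharper bound $S_{\mathrm N}(c,a)\preceq \frac{c^2}{c^2+d}I_q$, whereas your $S_{\mathrm N}(c,a)\preceq I_q$ already suffices, because the contraction constant is determined by $\lambda_{\min}\{S_{\mathrm N}(c,a)\}$.
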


\begin{proof}
Differentiating the normalised density \eqref{app:eq:flat-law} under the
integral sign gives the score identity
\[
 \frac{\partial}{\partial a_j}\log p_{c,a}(w)
 =-\{w_{{\mathrm N},j}-\E_{c,a}W_{{\mathrm N},j}\}.
\]
Consequently,
\[
 \frac{\partial}{\partial a_j}\E_{c,a}W_{{\mathrm N},i}
 =-\cov_{c,a}(W_{{\mathrm N},i},W_{{\mathrm N},j}),
\]
which proves $\mathrm Dm_c(a)=I_q-S_{\mathrm N}(c,a)$.

Integrating out the tangent coordinates shows that $W_{\mathrm N}$ has density,
on the open ball $\|n\|<c$,
\[
 p_{\mathrm N}(n)\propto
 (c^2-\|n\|^2)^{d/2}\exp\{-\|a+n\|^2/2\}
 =\exp\{-V_a(n)\},
\]
where
\[
 V_a(n)=\tfrac12\|a+n\|^2
        -\tfrac d2\log(c^2-\|n\|^2).
\]
Its Hessian is
\[
 \nabla^2V_a(n)
 =I_q+\frac{d}{c^2-\|n\|^2}I_q
   +\frac{2d}{(c^2-\|n\|^2)^2}nn^{\top}
 \succeq \left(1+\frac d{c^2}\right)I_q.
\]
The Brascamp--Lieb covariance inequality therefore yields
\[
 0\prec S_{\mathrm N}(c,a)
 \preceq \E\{\nabla^2V_a(W_{\mathrm N})^{-1}\}
 \preceq \frac{c^2}{c^2+d}I_q\prec I_q.
\]
The lower inequality is strict because $p_{\mathrm N}$ is positive on the full
$q$-dimensional open ball.  The matrix $S_{\mathrm N}(c,a)$ depends continuously
on $a$; compactness of $\{\|a\|\le A\}$ therefore gives
\[
 \underline s_{c,A}
 =\inf_{\|a\|\le A}\lambda_{\min}\{S_{\mathrm N}(c,a)\}>0.
\]
All eigenvalues of $\mathrm Dm_c(a)=I_q-S_{\mathrm N}(c,a)$ lie in $(0,1)$, and hence
\[
 \|\mathrm Dm_c(a)\|_{\mathrm{op}}
 =1-\lambda_{\min}\{S_{\mathrm N}(c,a)\}
 \le1-\underline s_{c,A}=: \kappa_0<1.
\]
This proves the uniform contraction.  The covariance inequality on the
open ball follows by the standard smooth convex-domain approximation in
the Brascamp--Lieb theorem~\citep{BrascampLieb1976}.
\end{proof}

Throughout Theorems~\ref{app:thm:contraction}--%
\ref{app:thm:sample-orbit-localization}, $h=c\sigma$, $(c,A)$ is {spectrally
admissible}, and
$c\sigma\le r_{\mathrm{geo}}/2$.

\begin{lemma}[Scaled population moments near $\mathcal M$]
\label{app:lem:pop-mean-cov}
For $\mathbf y\in\Phi_{A\sigma}$, choose orthonormal tangent and normal
frames at $x=\Pi_{\mathcal M}(\mathbf y)$ and write
$\boldsymbol\nu(\mathbf y)=\sigma a$ in the normal frame.  Uniformly over this tube,
\[
 \boldsymbol\mu_h(\mathbf y)
 =x+\sigma(\mathbf0_d,m_c(a))+O(\sigma^2),\qquad
 \boldsymbol\Sigma_h(\mathbf y)
 =\sigma^2 S_c(a)+O(\sigma^3),
\]
where the displayed vectors and matrices are transported to the ambient
frames.  Under the eigengap condition in \eqref{app:eq:admissible},
\[
 \|\mathbf P_h(\mathbf y)-\mathbf P_x\|_{\mathrm{op}}
 \le C\sigma .
\]
Consequently,
\[
 \mathbf Q_x\{\mathcal T_h(\mathbf y)-x\}
 =\sigma m_c(a)+O(\sigma^2),\qquad
 \|\mathbf P_x\{\mathcal T_h(\mathbf y)-x\}\|\le C\sigma^2.
\]
{Moreover, after writing
$\mathbf y=x+\sigma a$ in any smooth local orthonormal frame of the normal
bundle, the displayed expansions and the projector bound hold uniformly
in the $C^1$ norm with respect to $(x,a)$ on compact sub-tubes
$\|a\|\le A_{\mathrm{sub}}<A$.  {The $O(D-d)$-equivariance of
$m_c$ and $S_c$ makes these local expansions consistent on overlapping
normal-bundle charts.}}
\end{lemma}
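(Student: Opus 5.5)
The plan is to write the conditional moments as ratios of integrals over the manifold and reduce them, after rescaling by $\sigma$, to the flat law \eqref{app:eq:flat-law} plus a curvature perturbation of relative size $O(\sigma)$.

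\emph{Representing the moments.} For $\mathbf y=x+\sigma a$ (normal frame at $x$), the density of $\mathbf Y$ is
\[
 \mathrm{Area}(\mathcal M)^{-1}\int_{\mathcal M}(2\pi\sigma^2)^{-D/2}\exp\{-\|\mathbf w-z\|^2/(2\sigma^2)\}\,dA(z).
\]
So $p_h$, $p_h\boldsymbol\mu_h$ and $p_h\E[\mathbf Y\mathbf Y^\top\mid\cdot]$ are integrals over $\mathbf w\in B_{\mathrm E}(\mathbf y,h)$ and $z\in\mathcal M$. Because the Gaussian factor is superexponentially small unless $\|z-\mathbf w\|\lesssim\sigma\sqrt{\log(1/\sigma)}$, the integral over $z$ may be restricted to $z=\exp_x(u)$ with $\|u\|\le C\sigma\sqrt{\log(1/\sigma)}$. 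This uses Lemma~\ref{app:lem:volume} and the reach to exclude distant sheets, at a cost $O(\sigma^k)$ for any $k$.

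\emph{Rescaling.} Substitute $\mathbf w=\mathbf y+\sigma w$ with $\|w\|\le c$, and $u=\sigma v$. Using $\exp_x(\sigma v)=x+\sigma v+\tfrac12\sigma^2\mathrm{II}_x(v,v)+O(\sigma^3\|v\|^3)$ together with area density $1+O(\sigma^2\|v\|^2)$, the exponent becomes
\[
 -\tfrac12\|a+w_{\mathrm N}\|^2-\tfrac12\|w_{\mathrm T}-v\|^2+\sigma\,\epsilon(v,w,a),
\]
where $\epsilon$ is a polynomial-growth remainder, bounded uniformly in $x$ by $K_{\mathrm{geo}}$. Integrating the Gaussian in $v$ over $\R^d$ yields a constant at leading order, which recovers exactly $p_{c,a}(w)$. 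The perturbation changes the normalised density by a factor $1+O(\sigma)$ that is integrable against the Gaussian tails. Hence, in rescaled coordinates $\mathbf Y=\mathbf y+\sigma W'$,
\[
 \E W'=\E_{c,a}W+O(\sigma),\qquad \cov W'=S_c(a)+O(\sigma),
\]
with the normaliser bounded below uniformly over $\|a\|\le A$ by compactness. Translating back gives
\[
 \boldsymbol\mu_h=\mathbf y+\sigma\E_{c,a}W+O(\sigma^2)=x+\sigma(\mathbf 0_d,m_c(a))+O(\sigma^2).
\]
Here the tangent mean $\E_{c,a}W_{\mathrm T}=0$ by the symmetry $w_{\mathrm T}\mapsto-w_{\mathrm T}$ of $p_{c,a}$. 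Likewise $\boldsymbol\Sigma_h=\sigma^2S_c(a)+O(\sigma^3)$. The same symmetry makes $S_c(a)$ block diagonal.

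\emph{Projector and update.} Block diagonality and \eqref{app:eq:admissible} give a gap of at least $g_0\sigma^2$ between the tangent block and the normal block, with the top-$d$ eigenspace of $S_c(a)$ equal to the tangent coordinates. Davis--Kahan applied to the $O(\sigma^3)$ perturbation then gives $\|\mathbf P_h-\mathbf P_x\|_{\mathrm{op}}\le C\sigma^3/(g_0\sigma^2)$. Next, write
\[
 \mathcal T_h(\mathbf y)-x=\boldsymbol\mu_h-x+\mathbf P_h(\mathbf y-\boldsymbol\mu_h),
\]
and note $\mathbf y-\boldsymbol\mu_h=O(\sigma)$ with $\mathbf P_x(\mathbf y-\boldsymbol\mu_h)=O(\sigma^2)$. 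Then $\mathbf P_h(\mathbf y-\boldsymbol\mu_h)=O(\sigma^2)$, and projecting with $\mathbf Q_x$ and $\mathbf P_x$ gives the two displayed consequences.

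\emph{$C^1$ statement and main obstacle.} For the $C^1$ claim, differentiate the rescaled integral representations in $(x,a)$. The $a$-derivative only hits the smooth Gaussian factor, since the ball boundary depends on $w$ and not on $a$ after rescaling. The $x$-derivative passes through the $C^5$ chart of (A2) and the smooth normal frame. Both derivatives therefore obey the same remainder structure, and restricting to $\|a\|\le A_{\mathrm{sub}}$ absorbs boundary effects. Equivariance follows because $p_{c,Oa}(O\cdot)=p_{c,a}$ for $O\in O(q)$.

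I expect the main obstacle to be making the $O(\sigma)$ remainder uniform in $x$, including its derivatives. This requires controlling the curvature Taylor remainder against Gaussian tails over the whole $\sqrt{\log}$-sized patch, and showing that the hard-ball indicator introduces no non-smoothness in the $C^1$ bounds. I would try first to handle the indicator by keeping it in the rescaled variable $w$, where it is independent of $(x,a)$.
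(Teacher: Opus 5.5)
Your proposal is correct and follows essentially the same route as the paper. Both arguments use the Gaussian-convolution representation in a geodesic chart truncated at $\|u\|\lesssim\sigma\sqrt{\log(1/\sigma)}$ and the $\mathrm{II}_x$ expansion of the exponent. Both integrate out the tangential Gaussian variable to recover $p_{c,a}$ on the fixed query-centred ball, which is also how the paper avoids moving-boundary terms in the $C^1$ claim. Both then use tangent reflection symmetry and Davis--Kahan with an $O(\sigma^3)$ perturbation against an $O(\sigma^2)$ gap. The only packaging differences are that the paper first isolates the ambient-density expansion $\sigma^{D-d}p_\sigma(x+\sigma b)=C_{\mathrm{dens}}\exp(-\|b_{\mathrm N}\|^2/2)+O_{C^1}(\sigma)$, and that it states the step you leave implicit: the $C^1$ projector bound follows from the smoothness of the spectral projector on gapped matrices (the resolvent formula).
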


\begin{proof}
{Let $p_\sigma$ be the ambient density of one observation.
For $z=x+\sigma b$, the Gaussian-convolution formula and uniform
sampling on $\mathcal M$ give
\[
 p_\sigma(z)
 =\frac{(2\pi\sigma^2)^{-D/2}}{\mathrm{Area}(\mathcal M)}
   \int_{\mathcal M}
   \exp\{-\|z-\xi\|^2/(2\sigma^2)\}\,dA(\xi).
\]
Use a smooth normal-bundle chart at $x$ and put
$\xi=\exp_x(\sigma s)$.  Uniformly for bounded $b$,
\[
 \frac{\|x+\sigma b-\exp_x(\sigma s)\|^2}{\sigma^2}
 =\|b_{\mathrm T}-s\|^2+\|b_{\mathrm N}\|^2
   -\sigma\langle b_{\mathrm N},\mathrm{II}_x(s,s)\rangle
   +O\{\sigma^2(1+\|s\|^6)\},
\]
on a fixed chart, and
$dA\{\exp_x(\sigma s)\}=\sigma^d
\{1+O(\sigma^2\|s\|^2)\}\,ds$.  After restricting first to
$\|s\|\le C\sqrt{\log(1/\sigma)}$, Taylor expansion under the integral
is dominated by a polynomial times a Gaussian.  The remaining part of
that chart, and the part of $\mathcal M$ outside it, are respectively
Gaussian-tail and $O\{\exp(-c_{\mathrm{tail}}/\sigma^2)\}$ terms for some
$c_{\mathrm{tail}}>0$.  The same domination
holds after one derivative in $b$ or in the chart centre and frame.
Consequently, uniformly on every bounded bundle chart,
\[
 \sigma^{D-d}p_\sigma(x+\sigma b)
 =C_{\mathrm{dens}}\exp(-\|b_{\mathrm N}\|^2/2)+O_{C^1}(\sigma).
\]
Here the leading integral over $s\in\R^d$ removes $b_{\mathrm T}$, and
$C_{\mathrm{dens}}=(2\pi)^{-(D-d)/2}/\mathrm{Area}(\mathcal M)$.

Now write $\mathbf y=x+\sigma a$, $a\in N_x\mathcal M$, and
$w=(\mathbf Y-\mathbf y)/\sigma$. This query-centred change of variables
turns every hard neighbourhood into the fixed ball $\{\|w\|\le c\}$.
Thus differentiation with respect to the query changes the integrand but
introduces no moving-boundary term. On this fixed ball, the preceding
expansion is applicable with $b=a+w$ and shows that the unnormalized
conditional density is
\[
 C_{\mathrm{dens}}\exp\{-\|a+w_{\mathrm N}\|^2/2\}+O_{C^1}(\sigma).
\]
Its integral is uniformly bounded away from zero for $\|a\|\le A$.
Division by this integral therefore gives
\[
 p_{c,a}(w)+O_{C^1}(\sigma)
 \quad\text{on }\{\|w\|\le c\}.
\]
The leading law is exactly \eqref{app:eq:flat-law}; tangent reflection gives
$\E_{c,a}W_{\mathrm T}=\mathbf0_d$.  Since
$\mathbf Y=\mathbf y+\sigma w$, integration of $w$ and $ww^{\top}$ now yields
\[
 \boldsymbol\mu_h(\mathbf y)
 =x+\sigma\{\mathbf0_d,a+\E_{c,a}W_{\mathrm N}\}+O_{C^1}(\sigma^2),
 \qquad
 \boldsymbol\Sigma_h(\mathbf y)=\sigma^2S_c(a)+O_{C^1}(\sigma^3).
\]
This proves both moment expansions, including their asserted uniform
$C^1$ form on compact sub-tubes.  By \eqref{app:eq:admissible}, the
normalised leading covariance $S_c(a)$ has eigengap at least $g_0$.
For small $\sigma$, the normalised population covariance has gap at
least $g_0/2$.  The spectral-projector map is $C^1$ with uniformly
bounded derivative on matrices with this gap, so its standard resolvent
formula (equivalently, the $C^1$ Davis--Kahan perturbation bound) gives
$\|\mathbf P_h(\mathbf y)-\mathbf P_x\|_{C^1}\le C\sigma$.

Finally, the leading term of
$\mathbf y-\boldsymbol\mu_h(\mathbf y)$ is normal and has size
$O(\sigma)$. Substitution in \eqref{app:eq:population-update} shows that the
$O(\sigma)$ projector error contributes only $O(\sigma^2)$. Projection
onto $N_x\mathcal M$ and $T_x\mathcal M$ gives the two stated
consequences.}
\end{proof}

{
\begin{lemma}[Flat-model eigengap criterion]
\label{app:lem:gap}
For the law \eqref{app:eq:flat-law}, define
\[
 \alpha_{\mathrm T}(c,\|a\|)
 =d^{-1}\operatorname{tr}\{\cov_{c,a}(W_{\mathrm T})\}.
\]
Then
\[
 S_c(a)=
 \begin{pmatrix}
   \alpha_{\mathrm T}(c,\|a\|)I_d&0\\
   0&S_{\mathrm N}(c,a)
 \end{pmatrix}.
\]
Hence the leading $d$ eigenspace is tangent uniformly on $\|a\|\le A$ if
and only if
\[
 \inf_{\|a\|\le A}
 \{\alpha_{\mathrm T}(c,\|a\|)
 -\lambda_{\max}(S_{\mathrm N}(c,a))\}>0.
\]
This is exactly the spectral admissibility condition used above; normal
contraction is automatic by Lemma~\ref{app:lem:auto-normal-contraction}. Both
covariance terms are continuous and are given by bounded integrals over the
fixed ball $\{\|w\|\le c\}$.
\end{lemma}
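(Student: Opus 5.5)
The plan is to exploit the symmetry of the flat law \eqref{app:eq:flat-law} under the product group $O(d)\times O(q)_a$, where $O(q)_a$ is the stabiliser of $a$ in $O(q)$. The block structure of $S_c(a)$ should then follow without computing any integral explicitly.

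First, the off-diagonal block $\cov_{c,a}(W_{\mathrm T},W_{\mathrm N})$ vanishes. The density $p_{c,a}(w)$ depends on $w_{\mathrm T}$ only through the indicator $\mathbf 1\{\|w\|\le c\}$, which is invariant under the reflection $w_{\mathrm T}\mapsto -w_{\mathrm T}$ with $w_{\mathrm N}$ fixed. This reflection therefore preserves $p_{c,a}$. Two consequences follow: $\E_{c,a}W_{\mathrm T}=\mathbf 0_d$, and $\E_{c,a}\{W_{\mathrm T}W_{\mathrm N}^{\top}\}=-\E_{c,a}\{W_{\mathrm T}W_{\mathrm N}^{\top}\}=0$. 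Hence the cross-covariance is zero.

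Second, the tangent block is a multiple of $I_d$. For any $U\in O(d)$, the map $(w_{\mathrm T},w_{\mathrm N})\mapsto(Uw_{\mathrm T},w_{\mathrm N})$ preserves both $\|w\|$ and $\|a+w_{\mathrm N}\|$, so it preserves $p_{c,a}$. Consequently $\cov_{c,a}(W_{\mathrm T})$ commutes with every orthogonal $U$. Schur's lemma, or simply testing against coordinate permutations and sign flips, then forces $\cov_{c,a}(W_{\mathrm T})=\alpha I_d$. Taking traces gives $\alpha=\alpha_{\mathrm T}$ as defined in the statement.

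Third, $\alpha_{\mathrm T}$ depends on $a$ only through $\|a\|$. Let $R\in O(q)$ with $Ra=a'$. The change of variables $w_{\mathrm N}\mapsto Rw_{\mathrm N}$ shows that the law of $W_{\mathrm T}$ under $p_{c,a}$ equals its law under $p_{c,a'}$, since $\|w\|$ is preserved and $\|a+w_{\mathrm N}\|=\|Ra+Rw_{\mathrm N}\|$. So the tangent block is unchanged whenever $\|a\|=\|a'\|$. By the same argument, $S_{\mathrm N}(c,Ra)=RS_{\mathrm N}(c,a)R^{\top}$; this is the $O(D-d)$-equivariance cited in Lemma~\ref{app:lem:pop-mean-cov}.

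With $S_c(a)=\mathrm{diag}(\alpha_{\mathrm T}I_d,S_{\mathrm N})$ established, the spectrum of $S_c(a)$ is $\{\alpha_{\mathrm T}\}$ with multiplicity $d$, together with the eigenvalues of $S_{\mathrm N}$. The leading-$d$ eigenspace is uniquely the tangent coordinate space exactly when $\alpha_{\mathrm T}>\lambda_{\max}(S_{\mathrm N})$.

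For the converse, suppose $\alpha_{\mathrm T}\le\lambda_{\max}(S_{\mathrm N})$. Then either the $d$th and $(d+1)$st eigenvalues tie, so the eigenspace is not unique, or some normal direction enters the top $d$. In both cases the leading space is not the tangent space.

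Uniformity over $\|a\|\le A$ needs continuity of the gap function. Both blocks are ratios of integrals over the fixed compact ball $\{\|w\|\le c\}$ of bounded integrands that are continuous in $a$, with denominators bounded below. Dominated convergence therefore gives continuity in $a$. Compactness of $\{\|a\|\le A\}$ then turns pointwise positivity into a positive infimum, which is the $g_0$ of \eqref{app:eq:admissible}. The equivalence with $\lambda_{\min}\{\cov_{c,a}(W_{\mathrm T})\}=\alpha_{\mathrm T}$ identifies this infimum condition with \eqref{app:eq:admissible}.

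The final assertion, that normal contraction is automatic, is quoted directly from Lemma~\ref{app:lem:auto-normal-contraction}.

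The only step needing any care is the converse direction of the equivalence. There one must state precisely that a tie at the $d$th eigenvalue counts as failure, because $\mathcal D_h$ requires a strict eigengap; beyond that, everything is symmetry bookkeeping.
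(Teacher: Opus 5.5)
Your proposal is correct and follows essentially the same symmetry argument as the paper: tangent reflection kills the cross block, tangent rotation invariance makes the tangent block scalar, and continuity of bounded integrals over the fixed ball gives the uniform statement. Your extra $O(q)$-equivariance step, showing that $\alpha_{\mathrm T}$ depends on $a$ only through $\|a\|$, is detail the paper leaves implicit. Likewise, your use of dominated convergence (where the paper differentiates under the integral sign) and your explicit handling of the tie case in the converse only add detail, not a different route.
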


\begin{proof}
Reflection in each tangent coordinate makes every tangent--normal
cross-moment zero, and rotation invariance in the tangent coordinates
makes the tangent block a scalar multiple of $I_d$. The spectral statement
follows. Continuity follows by differentiation under the integral sign on
the bounded domain.
\end{proof}

\begin{remark}[Eigenspace separation]
\label{app:rem:honest}
Conditioning on the neighbourhood event couples signal and noise, so the
conditional eigengap is determined by the limiting integrals in
Lemma~\ref{app:lem:gap}, evaluated for the chosen $d,D,c$ and $A$.
As $c\downarrow0$ the conditioned law approaches the uniform law on a
small ball, the two covariance blocks become isotropic, and their gap
vanishes at $O(c^4)$. The localization results assume spectral
admissibility at every constituent scale.
\end{remark}

\begin{table}[ht]
\centering
\caption{Numerically estimated minimum normalised spectral gaps for the
flat reference law at normalised tube width $A=3$.}
\label{app:tab:flat-spectral-gaps}
\begin{tabular}{c@{\qquad}cc}
\hline
$c=h/\sigma$ & $d=1,\ q=2$ & $d=2,\ q=1$ \\
\hline
1.0 & 0.0207 & 0.0267 \\
1.25 & 0.0448 & 0.0628 \\
1.5 & 0.0824 & 0.1243 \\
2.0 & 0.2106 & 0.3496 \\
2.5 & 0.4331 & 0.6030 \\
3.0 & 0.7855 & 0.9267 \\
3.5 & 1.3115 & 1.3578 \\
4.0 & 2.0575 & 1.9269 \\
5.0 & 4.3429 & 3.5847 \\
6.0 & 7.6804 & 5.9915 \\
8.0 & 16.8168 & 12.7253 \\
\hline
\end{tabular}
\end{table}

For each row, the reported quantity is
\[
 \inf_{\|a\|\le3}
 \left[\lambda_{\min}\{\cov_{c,a}(W_{\mathrm T})\}
 -\lambda_{\max}\{\cov_{c,a}(W_{\mathrm N})\}\right].
\]
The positive values provide a numerical check of spectral separation for
these dimension pairs and radius ratios. The theoretical results assume
spectral admissibility on $\Phi_{3\sigma}$.
}

\begin{remark}[Effective expansion parameter]
{The normalised remainder has effective size
$c^2\sigma/r_{\mathrm{geo}}$, as is visible from the curvature term in
the preceding exponent. Thus a large calibrated $c$ requires
$c^2\sigma\ll r_{\mathrm{geo}}$. Under the uniform sampling condition,
the leading local law has no additional tangential density-score term.}
\end{remark}

\subsection{Population fixed-point manifold and attraction}

\begin{theorem}[Local normal-distance recursion]
\label{app:thm:contraction}
Under Assumption~1 of the main paper and the {spectral
admissibility condition}
above, there are $\sigma_{\mathrm{con}}>0$, $C<\infty$, and
$\kappa\in(\kappa_0,1)$ such that, for $\sigma\le\sigma_{\mathrm{con}}$ and
$\mathbf y\in\Phi_{A\sigma}$,
\[
 \dist\bigl(\mathcal T_h(\mathbf y),\mathcal M\bigr)
 \le \kappa\,\dist(\mathbf y,\mathcal M)+C\sigma^2 .
\]
For sufficiently small $\sigma$, $\Phi_{A\sigma}$ is forward invariant,
and any population orbit starting there obeys
\[
 \dist\bigl(\mathcal T_h^{(t)}(\mathbf y),\mathcal M\bigr)
 \le\kappa^t\dist(\mathbf y,\mathcal M)
 +\frac{C\sigma^2}{1-\kappa}.
\]
{This result controls normal localization; complete ambient
convergence and fixed-point existence are established below.}
\end{theorem}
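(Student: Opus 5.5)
The plan is to combine the normal-component expansion of Lemma~\ref{app:lem:pop-mean-cov} with the flat-model contraction of Lemma~\ref{app:lem:auto-normal-contraction}. Fix $\mathbf y\in\Phi_{A\sigma}$ and set $x=\Pi_{\mathcal M}(\mathbf y)$. Write $\boldsymbol\nu(\mathbf y)=\sigma a$ with $\|a\|=\dist(\mathbf y,\mathcal M)/\sigma\le A$, and let $\mathbf y'=\mathcal T_h(\mathbf y)$. Lemma~\ref{app:lem:pop-mean-cov} gives
\[
\mathbf y'-x=\sigma(\mathbf0_d,m_c(a))+\mathbf e,\qquad \|\mathbf e\|\le C\sigma^2,
\]
uniformly over the tube. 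Because $x\in\mathcal M$, we have $\dist(\mathbf y',\mathcal M)\le\|\mathbf y'-x\|\le\sigma\|m_c(a)\|+C\sigma^2$, so we never need to locate the projection of $\mathbf y'$. It is therefore enough to bound $\|m_c(a)\|$ by $\kappa_0\|a\|$.

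\textbf{Step 1: $m_c(0)=0$ and $m_c$ is Lipschitz with constant $\kappa_0$.} When $a=0$, the density $p_{c,0}$ is invariant under $w_{\mathrm N}\mapsto-w_{\mathrm N}$. Hence $\E_{c,0}W_{\mathrm N}=0$ and $m_c(0)=0$. The ball $\{\|a\|\le A\}$ is convex, so the mean value inequality together with Lemma~\ref{app:lem:auto-normal-contraction} gives
\[
\|m_c(a)\|=\|m_c(a)-m_c(0)\|\le\sup_{\|b\|\le A}\|\mathrm Dm_c(b)\|_{\mathrm{op}}\,\|a\|\le\kappa_0\|a\|.
\]

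\textbf{Step 2: the one-step recursion.} Substituting Step 1 gives
\[
\dist(\mathbf y',\mathcal M)\le\kappa_0\,\dist(\mathbf y,\mathcal M)+C\sigma^2\le\kappa\,\dist(\mathbf y,\mathcal M)+C\sigma^2
\]
for any $\kappa\in(\kappa_0,1)$. The constant $\sigma_{\mathrm{con}}$ is chosen small enough that the expansion of Lemma~\ref{app:lem:pop-mean-cov} and its eigengap conclusion hold on $\Phi_{A\sigma}$. The fact that $\mathcal T_h$ is well defined there, meaning $\mathbf y\in\mathcal D_h$, comes from the same lemma: for small $\sigma$ the normalised covariance has gap at least $g_0/2$, and $p_h>0$ holds because the Gaussian density is positive.

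\textbf{Step 3: forward invariance and iteration.} If $\dist(\mathbf y,\mathcal M)\le A\sigma$, Step 2 gives $\dist(\mathbf y',\mathcal M)\le\kappa A\sigma+C\sigma^2$. This is at most $A\sigma$ once $\sigma\le(1-\kappa)A/C$, so $\Phi_{A\sigma}$ is forward invariant. Every iterate then stays in the region where the recursion applies. Unrolling $d_{t+1}\le\kappa d_t+C\sigma^2$ by induction and summing the geometric series $\sum_{s<t}\kappa^s\le1/(1-\kappa)$ yields the stated orbit bound.

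The main obstacle I anticipate is uniformity of the $O(\sigma^2)$ remainder in Lemma~\ref{app:lem:pop-mean-cov} all the way up to the boundary $\|a\|=A$. The lemma asserts this uniformity over $\Phi_{A\sigma}$, so I will take it as given. The one point to check carefully is that the $O(\sigma)$ projector error enters only multiplied by $\|\mathbf y-\boldsymbol\mu_h(\mathbf y)\|=O(\sigma)$. This holds because $\|a\|\le A$ and $\|\E W_{\mathrm N}\|\le c$, so the product is $O(\sigma^2)$ uniformly.
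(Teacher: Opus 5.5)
Your proposal is correct and follows essentially the same route as the paper. Both arguments use the expansion of Lemma~\ref{app:lem:pop-mean-cov}, the reflection identity $m_c(\mathbf 0_q)=\mathbf 0_q$, and the mean-value bound $\|m_c(a)\|\le\kappa_0\|a\|$ from Lemma~\ref{app:lem:auto-normal-contraction}; they then unroll the affine recursion and shrink $\sigma$ until $\kappa A\sigma+C\sigma^2\le A\sigma$. The only difference is cosmetic: you bound $\dist(\mathcal T_h(\mathbf y),\mathcal M)\le\|\mathcal T_h(\mathbf y)-x\|$ directly, whereas the paper introduces $\mathbf v_0=x+\sigma(\mathbf 0_d,m_c(a))$, uses the reach to get $\dist(\mathbf v_0,\mathcal M)=\sigma\|m_c(a)\|$, and applies the one-Lipschitz property of the distance; your version is slightly more economical because only the upper bound is needed.
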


\begin{proof}
With the notation of Lemma~\ref{app:lem:pop-mean-cov},
$\mathbf y-x=\sigma(\mathbf0_d,a)$ and
\[
 \mathcal T_h(\mathbf y)-x
 =\sigma(\mathbf0_d,m_c(a))+O(\sigma^2).
\]
Indeed, the leading displacement
$\mathbf y-\boldsymbol\mu_h(\mathbf y)$ is normal, the leading projection
is tangent, and multiplying its $O(\sigma)$ projection error by an
$O(\sigma)$ displacement contributes only $O(\sigma^2)$.  Put
$\mathbf v_0=x+\sigma(\mathbf0_d,m_c(a))$.  It lies on the normal fibre inside the
reach tube, so $\dist(\mathbf v_0,\mathcal M)=\sigma\|m_c(a)\|$.  Since distance
to a closed set is one-Lipschitz and
$\|\mathcal T_h(\mathbf y)-\mathbf v_0\|=O(\sigma^2)$, we have
\[
 \dist\bigl(\mathcal T_h(\mathbf y),\mathcal M\bigr)
 \le \sigma\|m_c(a)\|+C\sigma^2.
\]
Tangent and normal reflection symmetry gives
$m_c(\mathbf0_q)=\mathbf0_q$; the mean-value
theorem and Lemma~\ref{app:lem:auto-normal-contraction} give
$\|m_c(a)\|\le\kappa_0\|a\|$.  {The preceding displays give
$\dist\bigl(\mathcal T_h(\mathbf y),\mathcal M\bigr)
\le\kappa_0\dist(\mathbf y,\mathcal M)+C\sigma^2$, which immediately implies
the stated inequality for every $\kappa>\kappa_0$.}  Iteration yields the affine
recursion.  Forward invariance follows from
$\kappa A\sigma+C\sigma^2\le A\sigma$ after decreasing $\sigma_{\mathrm{con}}$.
\end{proof}

{
Ambient convergence additionally requires control of tangential motion.
We obtain it by identifying the complete nearby zero set of the update
residual and controlling the map transverse to that set. Recall the
population residual $\mathbf F_h$ and its zero set
$\mathcal S_{h,{\mathrm{pop}}}$ from Section~2.2 of the main paper.

\begin{lemma}[Fixed-point manifold and transverse contraction]
\label{app:lem:fixed-manifold}
Fix $0<A_0<A$.  Under the conditions of
Theorem~\ref{app:thm:contraction}, there are
$\sigma_{\mathrm{fix}},\rho_0,C,C_{\mathrm{step}}>0$ and $\bar\kappa\in(0,1)$ such that, for
$\sigma\le\sigma_{\mathrm{fix}}$:
\begin{enumerate}
\item[(i)] $\mathcal M_h:=\mathcal S_{h,{\mathrm{pop}}}\cap\Phi_{A_0\sigma}$
is a compact $C^1$ $d$-dimensional normal graph
\[
 \mathcal M_h=\{x+b_h(x):x\in\mathcal M,\ b_h(x)\in N_x\mathcal M\},
 \qquad \|b_h\|_{C^1}\le C\sigma^2,
\]
and
$\Pi_{\mathcal M}:\mathcal M_h\to\mathcal M$ is a $C^1$
diffeomorphism;
\item[(ii)]
\[
 d_H(\mathcal M_h,\mathcal M)\le C\sigma^2;
\]
\item[(iii)] on
$\mathcal U_h=\{\mathbf y:\dist(\mathbf y,\mathcal M_h)<\rho_0\sigma\}$
contained in $\Phi_{A\sigma}$,
\[
 \dist\bigl(\mathcal T_h(\mathbf y),\mathcal M_h\bigr)
 \le\bar\kappa\,\dist(\mathbf y,\mathcal M_h),
 \qquad
 \|\mathcal T_h(\mathbf y)-\mathbf y\|
 \le C_{\mathrm{step}}\,\dist(\mathbf y,\mathcal M_h).
\]
\end{enumerate}
\end{lemma}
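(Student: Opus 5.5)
I would work in normal-bundle coordinates. Write $\mathbf y=x+\sigma\,\mathbf E_x a$, where $x\in\mathcal M$, $\mathbf E_x$ is a smooth local orthonormal normal frame, and $\|a\|\le A_{\mathrm{sub}}<A$. I would rescale the residual as
\[
 G(x,a):=\sigma^{-1}\mathbf Q_x\mathbf F_h(x+\sigma\mathbf E_x a)\in\R^q .
\]
By Lemma~\ref{app:lem:pop-mean-cov}, $\mathbf Q_x\{\mathcal T_h(\mathbf y)-x\}=\sigma m_c(a)+O_{C^1}(\sigma^2)$. Since $\mathbf F_h=\mathbf y-\mathcal T_h(\mathbf y)$, this gives
\[
 G(x,a)=a-m_c(a)+O_{C^1}(\sigma)=-\E_{c,a}W_{\mathrm N}+O_{C^1}(\sigma).
\]
The tangential part satisfies $\|\mathbf P_x\mathbf F_h\|\le C\sigma^2$ and is $O_{C^1}(\sigma^2)$. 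Its $C^1$ control needs the $C^1$ clause of Lemma~\ref{app:lem:pop-mean-cov} together with the projector estimate. Any zero of $\mathbf F_h$ has $\mathbf Q_x\mathbf F_h=0$, and I will show below that vanishing of the normal part forces the tangential part to vanish too.

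\textbf{Step 1 (normal graph).} By Lemma~\ref{app:lem:auto-normal-contraction}, $\partial_a\{a-m_c(a)\}=S_{\mathrm N}(c,a)\succeq \underline s_{c,A}I_q$. The limit $a\mapsto a-m_c(a)$ vanishes only at $a=0$: by reflection symmetry $m_c(0)=0$, and the map $a\mapsto a-m_c(a)$ is the gradient of a strictly convex function (a log-partition), so it is injective. Its Jacobian is uniformly invertible. The implicit function theorem, applied uniformly over $x$ via compactness and the $C^1$ remainder, then gives a unique solution $a=\beta_h(x)$ of $G(x,a)=0$ in $\|a\|\le A_0$, with $\|\beta_h\|\le C\sigma$ and $C^1$ dependence on $x$. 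Equivariance under $O(q)$ makes these local solutions patch together into a global section.

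Next I must show that the tangential residual also vanishes there, i.e.\ that $\mathcal S_{h,\mathrm{pop}}\cap\Phi_{A_0\sigma}$ equals this graph. The key identity is that for $\mathbf y$ with $x=\Pi_{\mathcal M}(\mathbf y)$, $\mathbf F_h(\mathbf y)=\mathbf Q_h(\mathbf y)\{\mathbf y-\boldsymbol\mu_h\}$ lies in the range of $\mathbf Q_h(\mathbf y)$, and $\|\mathbf Q_h-\mathbf Q_x\|\le C\sigma$. Hence
\[
 \|\mathbf F_h\|\le (1+C\sigma)\|\mathbf Q_x\mathbf F_h\| ,
\]
so $\mathbf Q_x\mathbf F_h=0$ if and only if $\mathbf F_h=0$. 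Set $b_h(x)=\sigma\mathbf E_x\beta_h(x)$. Then $\|b_h\|\le C\sigma^2$, and differentiating $G(x,\beta_h(x))=0$ gives $\|Db_h\|\le C\sigma^2$, since the $x$-derivative of $G$ is $O(\sigma)$ (the leading term depends on $a$ only). Because $b_h$ is a small $C^1$ normal section, $\Pi_{\mathcal M}$ restricted to the graph is a $C^1$ diffeomorphism. The graph of a section of size $C\sigma^2$ gives (ii) immediately.

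\textbf{Step 2 (transverse contraction).} For $\mathbf y\in\mathcal U_h$ I will compare $\mathbf y$ with $\mathbf y^*=x+b_h(x)$. Linearising, $\mathcal T_h(\mathbf y)-\mathcal T_h(\mathbf y^*)=D\mathcal T_h\,(\mathbf y-\mathbf y^*)+\text{higher order}$, with $\mathcal T_h(\mathbf y^*)=\mathbf y^*$. By the $C^1$ expansions, in the normal direction $D\mathcal T_h$ acts as $Dm_c(a)+O(\sigma)$, which has norm at most $\kappa_0+O(\sigma)$. Its tangential component is $O(\sigma)$ times the displacement. Thus the new point lies within $\bar\kappa\,\|\mathbf y-\mathbf y^*\|$ of the graph point $\Pi_{\mathcal M}(\mathcal T_h\mathbf y)+b_h(\cdot)$, with $\bar\kappa=(1+\kappa_0)/2$. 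This uses that $b_h$ has $C^1$ norm $O(\sigma^2)$, so a tangential shift of size $O(\sigma)\|\mathbf y-\mathbf y^*\|$ changes the graph offset negligibly.

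It remains to compare $\|\mathbf y-\mathbf y^*\|$ with $\dist(\mathbf y,\mathcal M_h)$. Because the graph is $C^1$-close to $\mathcal M$ with small slope, these agree up to a factor $1+O(\sigma)$. The step bound follows from $\mathcal T_h(\mathbf y)-\mathbf y=-\mathbf F_h(\mathbf y)$ together with $\mathbf F_h(\mathbf y^*)=0$ and the Lipschitz bound $\|D\mathbf F_h\|\le C$ from Lemma~\ref{app:lem:pop-mean-cov}. Choosing $\rho_0$ small makes $\mathcal U_h\subset\Phi_{A_{\mathrm{sub}}\sigma}$.

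\textbf{Main obstacle.} I expect the hardest part to be securing genuinely uniform $C^1$ remainders, both in the base point $x$ and along chart overlaps. The implicit-function and contraction arguments need $D_x$ and $D_a$ of the $O(\sigma)$ remainder to be $O(\sigma)$ in the rescaled variables, not merely $O(1)$. I would first lean on the $C^1$ clause of Lemma~\ref{app:lem:pop-mean-cov} and the fixed-ball change of variables, which has no moving-boundary terms.
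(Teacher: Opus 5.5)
Your proposal is correct and follows essentially the paper's own route: the rescaled residual $\sigma^{-1}\mathbf Q_x\mathbf F_h$, injectivity of $\mathbf Q_x$ on the estimated normal space, and the expansion $a-m_c(a)+O_{C^1}(\sigma)$ with uniformly invertible Jacobian $\cov_{c,a}(W_{\mathrm N})$. It then uses an implicit-function argument for the $C^1$ section with $\|b_h\|_{C^1}\le C\sigma^2$, and a mean-value transverse contraction governed by $\|\mathrm Dm_c\|_{\mathrm{op}}\le\kappa_0$, closed by the step bound from $\mathbf F_h(\mathbf y^*)=\mathbf0$. The differences are cosmetic: the paper gets existence and uniqueness on the whole $A_0$-ball from the Banach contraction $a\mapsto m_c(a)-\mathbf r_\sigma(x,a)$ and measures the image's transverse coordinate against $a_h$ at the image's base point, whereas your global uniqueness (which the implicit function theorem alone gives only locally) is best made explicit via $\langle G(x,a)-G(x,a'),a-a'\rangle\ge(\underline s_{c,A}-C\sigma)\|a-a'\|^2$.
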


\begin{proof}
Work first in one smooth normal-bundle chart and write
$\mathbf y=x+\sigma a$, with $x\in\mathcal M$ and $\|a\|\le A_0$.
The residual $\mathbf F_h(\mathbf y)$ lies in the estimated normal space
$T_h(\mathbf y)^\perp$.  By
Lemma~\ref{app:lem:pop-mean-cov}, this estimated normal space makes angle
$O(\sigma)$ with $N_x\mathcal M$; hence orthogonal projection from it
onto $N_x\mathcal M$ is invertible for small $\sigma$.  Use this
projection to identify the residual with a vector in $N_x\mathcal M$ and
define its scaled coordinate representation $\mathbf H_{\sigma,x}(a)$.
More explicitly, using $\mathbf Q_x=I_D-\mathbf P_x$, set
\[
 \mathbf H_{\sigma,x}(a)
 =\sigma^{-1}\mathbf Q_x\mathbf F_h(x+\sigma a).
\]
The restriction of $\mathbf Q_x$ to
$T_h(x+\sigma a)^\perp$ is one-to-one, so
$\mathbf H_{\sigma,x}(a)=\mathbf0$ is equivalent to
$\mathbf F_h(x+\sigma a)=\mathbf0$.
The uniform $C^1$ expansion in Lemma~\ref{app:lem:pop-mean-cov} gives
\[
 \mathbf H_{\sigma,x}(a)
 =a-m_c(a)+\mathbf r_\sigma(x,a),
 \qquad
 \|\mathbf r_\sigma\|_{C^1}\le C\sigma .
\]
In particular,
\[
 \mathrm D_a\mathbf H_{\sigma,x}(a)
 =I_q-\mathrm Dm_c(a)+O(\sigma)
 =\cov_{c,a}(W_{\mathrm N})+O(\sigma).
\]
The smallest eigenvalue of $\cov_{c,a}(W_{\mathrm N})$ is bounded away
from zero uniformly on $\|a\|\le A_0$; hence the normal Jacobian is
uniformly invertible for sufficiently small $\sigma$.
Thus $\mathbf F_h(x+\sigma a)=\mathbf0$ is equivalent to
\[
 a=m_c(a)-\mathbf r_\sigma(x,a).
\]
The right-hand side maps the closed $A_0$-ball into itself and is a
contraction in $a$, uniformly in $x$, because
$\|\mathrm Dm_c(a)\|_{\mathrm{op}}\le\kappa_0<1$ and
$\|\mathrm D_a\mathbf r_\sigma\|\le C\sigma$; on the boundary its norm is at
most $\kappa_0A_0+C\sigma<A_0$.  It has a unique solution
$a_h(x)$ for every $x$, and $a_h(x)=O(\sigma)$ because
$m_c(\mathbf0_q)=\mathbf0_q$ and
$\mathbf r_\sigma(x,\mathbf0_q)=O(\sigma)$.  The parameter-dependent contraction
theorem, equivalently the implicit-function theorem applied to
$\mathbf H_{\sigma,x}$, makes $a_h$ a $C^1$ section of the normal
bundle, with $\|a_h\|_{C^1}\le C\sigma$.  On overlaps uniqueness makes
these local sections agree.
Set $b_h(x)=\sigma a_h(x)$. Consequently
\[
 \mathcal M_h=\{x+b_h(x):x\in\mathcal M\},
 \qquad \|b_h\|_{C^1}\le C\sigma^2,
\]
is a compact $C^1$ manifold of fixed points, its nearest-point projection
onto $\mathcal M$ is a diffeomorphism, and its graph representation gives
$d_H(\mathcal M_h,\mathcal M)\le C\sigma^2$. The same contraction
argument excludes any other residual zero in $\Phi_{A_0\sigma}$.
Choose $\rho_0<(A-A_0)/2$. After decreasing $\sigma_{\mathrm{fix}}$ so
that the graph displacement is smaller than
$(A-A_0)\sigma/2$, the resulting attraction neighbourhood
$\mathcal U_h$ is contained in $\Phi_{A\sigma}$.

It remains to control points off $\mathcal M_h$.  In bundle coordinates
centred at the graph $a_h(x)$, write
\[
 \mathcal T_h(x+\sigma a)
 =X_\sigma(x,a)+\sigma A_\sigma(x,a),
 \qquad
 A_\sigma(x,a)\in N_{X_\sigma(x,a)}\mathcal M .
\]
The $C^1$ moment and projector expansions, together with the fact that
the leading displacement is normal, give
\[
 X_\sigma(x,a)=x+O_{C^1}(\sigma^2),\qquad
 A_\sigma(x,a)=m_c(a)+O_{C^1}(\sigma).
\]
Because the graph consists of pointwise fixed points,
\[
 X_\sigma\{x,a_h(x)\}=x,\qquad
 A_\sigma\{x,a_h(x)\}=a_h(x).
\]
For the transverse coordinate $\boldsymbol\Delta=a-a_h(x)$, the mean-value theorem
therefore gives
\[
 A_\sigma(x,a)-a_h\{X_\sigma(x,a)\}
 =
 \left[\mathrm Dm_c\{a_h(x)\}+O(\sigma)+O(\|\boldsymbol\Delta\|)\right]
 \boldsymbol\Delta .
\]
The term from differentiating $a_h\{X_\sigma(x,a)\}$ is included in
$O(\sigma)$ because $\|a_h\|_{C^1}=O(\sigma)$ and
$\mathrm D_aX_\sigma=O(\sigma^2)$.  Choose
$\kappa_0<\kappa_1<\bar\kappa<1$. If $C_\ast$ bounds the remainder
derivatives and coordinate distortion, choose
$C_\ast(\sigma_{\mathrm{fix}}+\rho_0)\le
\tfrac12\min\{\kappa_1-\kappa_0,\bar\kappa-\kappa_1\}$.
Then the bracket has norm at most $\kappa_1$; in particular, the
attraction radius deteriorates as $\kappa_0\uparrow1$. Uniformly in this tube,
\[
 \dist(x+\sigma a,\mathcal M_h)
 =\sigma\|\boldsymbol\Delta\|
 \{1+O(\sigma)+O(\|\boldsymbol\Delta\|)\}.
\]
Shrinking $\sigma_{\mathrm{fix}}$ and $\rho_0$ once more absorbs this coordinate
distortion and gives the first inequality in (iii).  Finally,
$\mathbf F_h$ is $C^1$, vanishes on $\mathcal M_h$, and has uniformly
bounded derivative there.  Another application of the mean-value theorem
gives
\[
 \|\mathcal T_h(\mathbf y)-\mathbf y\|
 =\|\mathbf F_h(\mathbf y)\|
 \le C_{\mathrm{step}}\,\dist(\mathbf y,\mathcal M_h),
\]
which completes the proof.
\end{proof}

\begin{theorem}[Population fixed-point manifold and local orbit convergence]
\label{app:thm:orbit-convergence}
Under the conditions of Lemma~\ref{app:lem:fixed-manifold}, the fixed-point
set in $\Phi_{A_0\sigma}$ is the normal graph $\mathcal M_h$ described
there, with $\|b_h\|_{C^1}\le C\sigma^2$ and
$d_H(\mathcal M_h,\mathcal M)\le C\sigma^2$. Every
$\mathbf y^{(0)}\in\mathcal U_h$ generates an orbit
$\mathbf y^{(t+1)}=\mathcal T_h(\mathbf y^{(t)})$ that converges to a
fixed point $\mathbf y^{(\infty)}\in\mathcal M_h$.  More precisely,
\[
 \dist(\mathbf y^{(t)},\mathcal M_h)
 \le\bar\kappa^t\dist(\mathbf y^{(0)},\mathcal M_h)
\]
and
\[
 \|\mathbf y^{(t)}-\mathbf y^{(\infty)}\|
 \le
 \frac{C_{\mathrm{step}}\bar\kappa^t}{1-\bar\kappa}
 \dist(\mathbf y^{(0)},\mathcal M_h).
\]
In particular,
$\dist(\mathbf y^{(\infty)},\mathcal M)\le C\sigma^2$.
\end{theorem}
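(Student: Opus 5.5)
The statement is essentially a consequence of Lemma~\ref{app:lem:fixed-manifold}, so the plan is to import its structural conclusions and then run a standard contraction-plus-summable-steps argument.

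\textbf{Structure of the fixed-point set.} Parts (i)--(ii) of Lemma~\ref{app:lem:fixed-manifold} give directly that the fixed-point set in $\Phi_{A_0\sigma}$ is the normal graph $\mathcal M_h$. They also give $\|b_h\|_{C^1}\le C\sigma^2$ and $d_H(\mathcal M_h,\mathcal M)\le C\sigma^2$. Here we use $\operatorname{Fix}(\mathcal T_h)=\mathcal S_{h,\mathrm{pop}}$, since $\mathcal T_h=\mathbf y-\mathbf F_h$.

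\textbf{Invariance of $\mathcal U_h$ and decay of the distance.} Take $\mathbf y^{(0)}\in\mathcal U_h\subset\Phi_{A\sigma}$. By the first inequality in (iii), $\dist(\mathcal T_h(\mathbf y),\mathcal M_h)\le\bar\kappa\,\dist(\mathbf y,\mathcal M_h)<\rho_0\sigma$, so $\mathcal T_h$ maps $\mathcal U_h$ into itself. Each iterate therefore stays in the domain where (iii) applies. Induction on $t$ then yields
\[
\dist(\mathbf y^{(t)},\mathcal M_h)\le\bar\kappa^t\dist(\mathbf y^{(0)},\mathcal M_h).
\]

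\textbf{Convergence of the orbit.} The second inequality in (iii) gives
\[
\|\mathbf y^{(t+1)}-\mathbf y^{(t)}\|\le C_{\mathrm{step}}\bar\kappa^t\dist(\mathbf y^{(0)},\mathcal M_h).
\]
These step lengths are summable, so the orbit is Cauchy and has a limit $\mathbf y^{(\infty)}$. Summing the geometric tail gives
\[
\|\mathbf y^{(t)}-\mathbf y^{(\infty)}\|\le C_{\mathrm{step}}\bar\kappa^t(1-\bar\kappa)^{-1}\dist(\mathbf y^{(0)},\mathcal M_h).
\]
Since $\dist(\cdot,\mathcal M_h)$ is continuous and tends to zero along the orbit, and $\mathcal M_h$ is compact, we get $\mathbf y^{(\infty)}\in\mathcal M_h$. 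Continuity of $\mathcal T_h$ near $\mathcal M_h$ (it is $C^1$ by Lemma~\ref{app:lem:pop-mean-cov}) is a second route to the same conclusion: passing to the limit in $\mathbf y^{(t+1)}=\mathcal T_h(\mathbf y^{(t)})$ shows $\mathbf y^{(\infty)}$ is a fixed point.

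\textbf{Final bound and main obstacle.} Because $\mathbf y^{(\infty)}\in\mathcal M_h$, the Hausdorff bound gives $\dist(\mathbf y^{(\infty)},\mathcal M)\le C\sigma^2$. The main obstacle is the forward invariance of $\mathcal U_h$. It is needed to apply (iii) at every step, and it relies on $\mathcal U_h\subset\Phi_{A\sigma}$, which Lemma~\ref{app:lem:fixed-manifold} secures through the choice $\rho_0<(A-A_0)/2$. The rest is routine.
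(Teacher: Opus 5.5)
Your proposal is correct and follows the paper's proof essentially step for step: forward invariance of $\mathcal U_h$ and the geometric decay come from part (iii) of Lemma~\ref{app:lem:fixed-manifold} by induction, the summable step bound makes the orbit Cauchy and gives the tail estimate, closedness of $\mathcal M_h$ places the limit on the fixed-point graph, and part (ii) gives the final $O(\sigma^2)$ bound. Your alternative route through continuity of $\mathcal T_h$ is unnecessary; on its own it only shows the limit is a fixed point, so membership in $\mathcal M_h$ still requires the distance argument.
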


\begin{proof}
Lemma~\ref{app:lem:fixed-manifold}(iii) makes $\mathcal U_h$ forward
invariant and gives the first display by induction.  It also gives
\[
 \|\mathbf y^{(t+1)}-\mathbf y^{(t)}\|
 \le
 C_{\mathrm{step}}\bar\kappa^t\dist(\mathbf y^{(0)},\mathcal M_h).
\]
The right-hand side is summable.  Hence the orbit is Cauchy and converges
in $\R^D$ to a point $\mathbf y^{(\infty)}$; summing the tail from $t$
onward gives the stated error bound.  Since the distance to the closed
set $\mathcal M_h$ tends to zero,
$\mathbf y^{(\infty)}\in\mathcal M_h$, and every point of
$\mathcal M_h$ is fixed.  Part (ii) of the lemma yields the final
distance bound.
\end{proof}
}

\section{Finite-sample theory for STSP}
\subsection{Empirical fixed points and orbit localization}
\label{app:sec:sample-fixpt}
We analyse the empirical operator $\widehat{\mathcal T}_{\mathcal Z,h}$
constructed from the fixed i.i.d.\ reference sample $\mathcal Z$.
For reference, the empirical neighbourhood and local moments are
\[
 \mathcal I_{\mathcal Z,h}(\mathbf y)
 =\{j:\|\mathbf z_j-\mathbf y\|\le h\},
 \qquad n_{\mathcal Z,h}(\mathbf y)
 =|\mathcal I_{\mathcal Z,h}(\mathbf y)|,
\]
and, whenever $n_{\mathcal Z,h}(\mathbf y)>0$,
\[
 \widehat{\boldsymbol\mu}_{\mathcal Z,h}(\mathbf y)
 =\frac{1}{n_{\mathcal Z,h}(\mathbf y)}
   \sum_{j\in\mathcal I_{\mathcal Z,h}(\mathbf y)}\mathbf z_j,
\]
\[
 \widehat{\boldsymbol\Sigma}_{\mathcal Z,h}(\mathbf y)
 =\frac{1}{n_{\mathcal Z,h}(\mathbf y)}
   \sum_{j\in\mathcal I_{\mathcal Z,h}(\mathbf y)}
   \{\mathbf z_j-\widehat{\boldsymbol\mu}_{\mathcal Z,h}(\mathbf y)\}
   \{\mathbf z_j-\widehat{\boldsymbol\mu}_{\mathcal Z,h}(\mathbf y)\}^{\top}.
\]
Let $\widehat{\mathbf P}_{\mathcal Z,h}(\mathbf y)$ be its leading-$d$
spectral projector whenever the $d$th and $(d+1)$st eigenvalues are
strictly separated, and define
\[
 \widehat{\mathcal T}_{\mathcal Z,h}(\mathbf y)
 =\widehat{\boldsymbol\mu}_{\mathcal Z,h}(\mathbf y)
  +\widehat{\mathbf P}_{\mathcal Z,h}(\mathbf y)
   \{\mathbf y-\widehat{\boldsymbol\mu}_{\mathcal Z,h}(\mathbf y)\}.
\]

\begin{lemma}[Uniform empirical approximation]
\label{app:lem:empirical-approximation}
Under the conditions of Theorem~\ref{app:thm:contraction}, let $0<\delta<1$ and
\[
 r_N=
 \sqrt{\frac{D\log(1/h)+\log(1/\delta)}{Nh^d}}
 +\frac{D\log(1/h)+\log(1/\delta)}{Nh^d}.
\]
There are constants $C,r_0,c_{\mathrm{gap}}>0$ such that, if
$r_N\le r_0$, then with probability at least $1-\delta$, every empirical
neighbourhood in $\Phi_{A\sigma}$ is nonempty and
\[
 \sup_{\mathbf y\in\Phi_{A\sigma}}
 \|\widehat{\boldsymbol\mu}_{\mathcal Z,h}(\mathbf y)
       -\boldsymbol\mu_h(\mathbf y)\|\le Chr_N,
\]
\[
 \sup_{\mathbf y\in\Phi_{A\sigma}}
 \|\widehat{\boldsymbol\Sigma}_{\mathcal Z,h}(\mathbf y)
       -\boldsymbol\Sigma_h(\mathbf y)\|_{\mathrm{op}}\le Ch^2r_N,
\]
\[
 \sup_{\mathbf y\in\Phi_{A\sigma}}
 \|\widehat{\mathbf P}_{\mathcal Z,h}(\mathbf y)
       -\mathbf P_h(\mathbf y)\|_{\mathrm{op}}\le Cr_N,
 \qquad
 \inf_{\mathbf y\in\Phi_{A\sigma}}
 \{\widehat\lambda_d(\mathbf y)-\widehat\lambda_{d+1}(\mathbf y)\}
 \ge c_{\mathrm{gap}}h^2.
\]
On the same event,
\[
 \sup_{\mathbf y\in\Phi_{A\sigma}}
 \|\widehat{\mathcal T}_{\mathcal Z,h}(\mathbf y)
       -\mathcal T_h(\mathbf y)\|\le Chr_N.
\]
\end{lemma}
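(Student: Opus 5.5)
The plan is to reduce everything to uniform deviation bounds for three centred empirical processes indexed by balls, and then pass to the moment, projector and operator bounds by deterministic perturbation arguments. For $\mathbf y\in\Phi_{A\sigma}$ put $w_j=(\mathbf z_j-\mathbf y)/h$ and define the normalised local sums
\[
 \widehat p(\mathbf y)=\frac1N\sum_j\mathbf 1\{\|w_j\|\le1\},\quad
 \widehat{\mathbf g}(\mathbf y)=\frac1N\sum_j w_j\mathbf 1\{\|w_j\|\le1\},\quad
 \widehat{\mathbf G}(\mathbf y)=\frac1N\sum_j w_jw_j^{\top}\mathbf 1\{\|w_j\|\le1\},
\]
together with their expectations $p_h$, $\mathbf g$ and $\mathbf G$. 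These are the ingredients of the estimators: $\widehat{\boldsymbol\mu}_{\mathcal Z,h}=\mathbf y+h\widehat{\mathbf g}/\widehat p$ and $\widehat{\boldsymbol\Sigma}_{\mathcal Z,h}=h^2\{\widehat{\mathbf G}/\widehat p-(\widehat{\mathbf g}/\widehat p)(\widehat{\mathbf g}/\widehat p)^{\top}\}$, and the population moments have the same form.

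\textbf{Step 1 (small-ball probability).} The density expansion in the proof of Lemma~\ref{app:lem:pop-mean-cov} gives $p_h(\mathbf y)\asymp \sigma^{-(D-d)}h^D=c^{D-d}h^d$ uniformly on $\Phi_{A\sigma}$, with constants bounded above and below. Each summand is bounded, and its variance is at most $Cp_h\asymp h^d$.

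\textbf{Step 2 (relative concentration).} Cover $\Phi_{A\sigma}$ by a net of mesh $h^{k}$ for a suitable fixed $k$. Since $\Phi_{A\sigma}$ lies within a compact set of $\R^D$, the net has cardinality at most $(C/h)^{kD}$; this is where the $D\log(1/h)$ term comes from. At each net point, Bernstein's inequality applied coordinatewise (or to matrix entries, with $D$ fixed) gives
\[
 |\widehat p-p_h|+\|\widehat{\mathbf g}-\mathbf g\|+\|\widehat{\mathbf G}-\mathbf G\|\le C p_h r_N,
\]
with the union bound absorbed into $\log(1/\delta)$. To pass from the net to all of $\Phi_{A\sigma}$, sandwich each ball $B_{\mathrm E}(\mathbf y,h)$ between the balls of radius $h\pm h^{k}$ centred at the nearest net point. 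The population mass of the shell between them is $O(h^{k-1}p_h)$, which is negligible. The empirical count in the shell is handled by applying the same Bernstein and union bound to the shell indicators. The weighted sums $\widehat{\mathbf g}$ and $\widehat{\mathbf G}$ change by the shell count plus an $O(h^{k-1})$ Lipschitz change in $w_j$.

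A cleaner alternative is to note that balls form a VC class of dimension $D+1$ and to apply a relative-deviation (Talagrand/Bousquet) inequality. I would use this as a fallback if the shell bookkeeping becomes messy.

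\textbf{Step 3 (moments).} On this event, $r_N\le r_0$ forces $\widehat p\ge p_h/2>0$, so every empirical neighbourhood is nonempty. Division then gives $\|\widehat{\mathbf g}/\widehat p-\mathbf g/p_h\|\le Cr_N$ and the matching bound for the second-moment ratio, hence $\|\widehat{\boldsymbol\mu}-\boldsymbol\mu_h\|\le Chr_N$ and $\|\widehat{\boldsymbol\Sigma}-\boldsymbol\Sigma_h\|_{\mathrm{op}}\le Ch^2r_N$.

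\textbf{Step 4 (projector and eigengap).} By Lemma~\ref{app:lem:pop-mean-cov} and spectral admissibility, $\boldsymbol\Sigma_h$ has eigengap at least $g_0\sigma^2/2=(g_0/2c^2)h^2$. Weyl's inequality then gives $\widehat\lambda_d-\widehat\lambda_{d+1}\ge (g_0/2c^2-2Cr_0)h^2=:c_{\mathrm{gap}}h^2$ once $r_0$ is small. Davis--Kahan, with the covariance error divided by the gap, gives $\|\widehat{\mathbf P}-\mathbf P_h\|_{\mathrm{op}}\le Ch^2r_N/h^2=Cr_N$.

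\textbf{Step 5 (operator).} Write
\[
 \widehat{\mathcal T}-\mathcal T_h=(I-\widehat{\mathbf P})(\widehat{\boldsymbol\mu}-\boldsymbol\mu_h)+(\widehat{\mathbf P}-\mathbf P_h)(\mathbf y-\boldsymbol\mu_h).
\]
The first term is at most $Chr_N$. For the second, $\|\mathbf y-\boldsymbol\mu_h(\mathbf y)\|\le h$ because $\boldsymbol\mu_h$ is an average over $B_{\mathrm E}(\mathbf y,h)$, so it is at most $Cr_N h$.

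The main obstacle is Step 2: obtaining the \emph{relative} rate $p_h r_N$ uniformly over a continuum of hard-boundary balls. The indicators are discontinuous, so the net argument has to control boundary-shell counts, and these must remain at relative size $r_N$ even though the shells have tiny probability. I would first try the sandwich-and-Bernstein route with $k=2$, which makes the shell mass $O(hp_h)\le p_hr_N$ only when $h\lesssim r_N$. If that inequality is not available, I would take $k$ large enough (using $\log(1/h^k)=k\log(1/h)$) so that the shell mass is negligible relative to $p_h r_N$.
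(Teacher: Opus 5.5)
Your Steps 1 and 3--5 match the paper's proof:
\begin{itemize}
\item the density expansion from Lemma~\ref{app:lem:pop-mean-cov} gives $p_h\asymp h^d$ uniformly on $\Phi_{A\sigma}$;
\item you divide by $\widehat p\ge p_h/2$ using $\|\mathbf g\|,\|\mathbf G\|_{\mathrm{op}}\le p_h$;
\item you pass to the covariance through the query-centred identity;
\item you apply Weyl and Davis--Kahan against the population gap $g_0h^2/(2c^2)$;
\item you bound the operator using $\|\mathbf y-\boldsymbol\mu_h(\mathbf y)\|\le h$.
\end{itemize}

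The problem is the primary route in Step~2, and the patch you propose does not work. With a net of fixed mesh $h^k$, the sandwich step leaves a relative error of order $h^{k-1}$. The population mass of the shell is of exact order $h^{k-1}p_h$. In a one-step discretization this mass enters uncentred: the discrepancy between the empirical-minus-population masses of $B_{\mathrm E}(\mathbf y,h)$ and $B_{\mathrm E}(\mathbf y_0,h)$ can only be bounded by the empirical plus population mass of the shell. That error does not decrease with $N$.

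The hypothesis $r_N\le r_0$ is only a lower bound on $N$. For fixed $h$ and $N\to\infty$, $r_N\to0$ while $h^{k-1}$ stays fixed. So no fixed $k$ gives $h^{k-1}\lesssim r_N$ with constants independent of $N$. Once $N\gg\sigma^{-(d+2)}\log(1/\sigma)$ one has $r_N\ll h$, so even $k=2$ fails in exactly the regime the paper targets. If you let the mesh depend on $N$, say $\varepsilon\asymp hr_N$, the union bound costs $D\log(1/\varepsilon)\approx D\log(1/h)+\tfrac D2\log(Nh^d)$. That is a $\log N$ term, which $r_N$ does not contain.

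A one-step net therefore proves only a weaker lemma, with $r_N$ replaced by $r_N+h^{k-1}$. With $k=2$ that weaker version would actually suffice downstream. The extra term in $\|\widehat{\mathcal T}_{\mathcal Z,h}-\mathcal T_h\|$ is then $O(h^2)=O(\sigma^2)$, which is absorbed by the $\sigma^2$ already present in Corollary~\ref{app:thm:sample-fixpt-dist} and Theorem~\ref{app:thm:sample-orbit-localization}. It is not the lemma as stated, however.

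For the stated rate you need what you list only as a fallback, which is exactly the paper's argument:
\begin{itemize}
\item Treat $\mathbf 1\{\|w\|\le1\}$, $w_k\mathbf 1\{\|w\|\le1\}$ and $w_kw_\ell\mathbf 1\{\|w\|\le1\}$, indexed by $\mathbf y\in\Phi_{A\sigma}$, as envelope-one VC-type classes with second moments at most $Ch^d$.
\item Bound the expected supremum by the VC maximal inequality. The uniform-entropy bound evaluated at the variance scale $h^{d/2}$ is what produces $\log(1/h)$ rather than $\log N$.
\item Add Bousquet's inequality for the $\log(1/\delta)$ tail.
\end{itemize}
This requires VC-type entropy control for the coordinate-weighted classes too, not just for balls. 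The paper obtains it by treating them as bounded semialgebraic VC-subgraph classes.
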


\begin{proof}
The ambient-density expansion in Lemma~\ref{app:lem:pop-mean-cov} is uniform
for scaled normal and tangential coordinates in bounded sets. Hence, for
$\mathbf y\in\Phi_{A\sigma}$ and $\mathbf z\in
B_{\mathrm E}(\mathbf y,h)$, it gives
$p_\sigma(\mathbf z)\ge c_1\sigma^{-(D-d)}$ for all sufficiently small
$\sigma$. Therefore
\[
 \inf_{\mathbf y\in\Phi_{A\sigma}}
 \Pr\{\mathbf Y\in B_{\mathrm E}(\mathbf y,h)\}
 \ge c_1\sigma^{-(D-d)}\omega_Dh^D
 =c_{\mathrm{mass}}h^d,
\]
where $h=c\sigma$ and $c_{\mathrm{mass}}>0$ depends only on the fixed
parameters. The same local density expansion gives the upper bound
$\sup_{\Phi_{A\sigma}}p_h(\mathbf y)\le C_{\mathrm{mass}}h^d$ for
the neighbourhood mass $p_h$ defined in the main paper.

Write $\mathbf v_{\mathbf y}(\mathbf z)=(\mathbf z-\mathbf y)/h$ and
$I_{\mathbf y}(\mathbf z)=\mathbf1_{\{\|\mathbf z-\mathbf y\|\le h\}}$.
For $k,\ell=1,\ldots,D$, consider the classes consisting of
$I_{\mathbf y}$, $v_{\mathbf y,k}I_{\mathbf y}$, and
$v_{\mathbf y,k}v_{\mathbf y,\ell}I_{\mathbf y}$ as $\mathbf y$ ranges
over $\Phi_{A\sigma}$. Each has envelope one, and every member has
second moment at most $C_{\mathrm{mass}}h^d$. The ball class is VC, and
the coordinate-weighted classes are bounded semialgebraic
VC-subgraph classes with complexity depending only on $D$.
In particular, each class has a uniform $L_2(Q)$ covering bound of the
form $N(\varepsilon,\mathcal F,L_2(Q))\le(A_D/\varepsilon)^{V_D}$ for
every probability measure $Q$, with finite constants $A_D,V_D$ depending
only on the fixed ambient dimension. Applying the VC maximal inequality
of \citet{vdvwellner1996} to these bounded classes, followed by the
bounded empirical-process tail inequality of \citet{bousquet2002}, uses
their variance bound $C_{\mathrm{mass}}h^d$ and yields, simultaneously for
the finitely many component classes, that
\[
 \sup_{\mathbf y\in\Phi_{A\sigma}}
 \left|\widehat Q_j(\mathbf y)-Q_j(\mathbf y)\right|
 \le C\left\{\sqrt{\frac{h^d u}{N}}+\frac{u}{N}\right\},
 \qquad u=D\log(1/h)+\log(1/\delta),
\]
where $Q_j$ denotes any scalar component of the three unnormalized
moments and $\widehat Q_j$ its empirical counterpart. At the
$h^{d/2}$ variance scale, the covering bound contributes a constant
multiple of $\log(1/h)$; because $D$ is fixed, its dimension-dependent
factor and the finite number of components are absorbed into $C$ in the
displayed bound with $u=D\log(1/h)+\log(1/\delta)$.
In particular, $Q_0=p_h\ge c_{\mathrm{mass}}h^d$, and, when $r_N$ is
small enough, $\widehat Q_0\ge c_{\mathrm{mass}}h^d/2$. Dividing the
first- and second-moment bounds by these denominators and using
$\|Q_1\|,\|Q_2\|\le Q_0$ gives
\[
 \sup_{\Phi_{A\sigma}}
 \left\|\frac{\widehat Q_1}{\widehat Q_0}
       -\frac{Q_1}{Q_0}\right\|\le Cr_N,
 \qquad
 \sup_{\Phi_{A\sigma}}
 \left\|\frac{\widehat Q_2}{\widehat Q_0}
       -\frac{Q_2}{Q_0}\right\|_{\mathrm{op}}\le Cr_N .
\]
Here $Q_1$ and $Q_2$ denote the vector and matrix unnormalized moments,
respectively. Rescaling them by $h$ and $h^2$ gives the conditional
query-centred moment errors. To make the covariance conversion explicit,
define the empirical
query-centred conditional moments
\[
 \widehat m_1(\mathbf y)=n_{\mathcal Z,h}(\mathbf y)^{-1}
 \sum_{j\in\mathcal I_{\mathcal Z,h}(\mathbf y)}(\mathbf z_j-\mathbf y),
\quad
 \widehat M_2(\mathbf y)=n_{\mathcal Z,h}(\mathbf y)^{-1}
 \sum_{j\in\mathcal I_{\mathcal Z,h}(\mathbf y)}
 (\mathbf z_j-\mathbf y)(\mathbf z_j-\mathbf y)^\top.
\]
Then, exactly,
\[
 \widehat{\boldsymbol\mu}_{\mathcal Z,h}(\mathbf y)
 =\mathbf y+\widehat m_1(\mathbf y),\qquad
 \widehat{\boldsymbol\Sigma}_{\mathcal Z,h}(\mathbf y)
 =\widehat M_2(\mathbf y)-\widehat m_1(\mathbf y)
  \widehat m_1(\mathbf y)^\top,
\]
with the same identities for the population conditional moments. Since
$\|\widehat m_1\|$ and its population counterpart are at most $h$, the
uniform query-centred first- and second-moment bounds give the displayed
centred covariance bound. The factors $h$ and $h^2$ arise because every
observation in a ball is within $h$ of its query. The count bound also
makes every empirical count at least a fixed multiple of $Nh^d$.

The population gap is at least
$g_0\sigma^2/2=g_0h^2/(2c^2)$. Decrease $r_0$ so that the covariance
perturbation is at most $g_0h^2/(8c^2)$. Weyl's inequality, applied to
both eigenvalues defining the gap, gives the
empirical gap with $c_{\mathrm{gap}}=g_0/(4c^2)$, and Davis--Kahan gives
$\sup\|\widehat{\mathbf P}_{\mathcal Z,h}-\mathbf P_h\|_{\mathrm{op}}
\le Cr_N$. Since
$\|\mathbf y-\boldsymbol\mu_h(\mathbf y)\|=O(h)$ on the shrinking tube,
\[
 \sup_{\mathbf y\in\Phi_{A\sigma}}
 \|\widehat{\mathcal T}_{\mathcal Z,h}(\mathbf y)-\mathcal T_h(\mathbf y)\|
 \le Ch r_N .
\]
This proves the lemma.
\end{proof}

\begin{corollary}[Localization of empirical fixed points]
\label{app:thm:sample-fixpt-dist}
Under the conditions of Lemma~\ref{app:lem:empirical-approximation}, there
are constants $C,r_{\mathrm{fp}}>0$ such that, if
$r_N\le r_{\mathrm{fp}}$, then with probability at least $1-\delta$,
every fixed point $\widehat{\mathbf y}_N\in\Phi_{A\sigma}$ of
$\widehat{\mathcal T}_{\mathcal Z,h}$, whenever one exists, satisfies
\[
 \dist(\widehat{\mathbf y}_N,\mathcal M)
 \le C\{\sigma^2+h r_N\}.
\]
Consequently, for fixed $d,D,c,A$, $h=c\sigma$ and
\[
 N\ge C\sigma^{-(d+2)}
 \{D\log(1/\sigma)+\log(1/\delta)\}
\]
suffice for the right-hand side to be $O(\sigma^2)$.
\end{corollary}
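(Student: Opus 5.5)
We work on the event of Lemma~\ref{app:lem:empirical-approximation}, which has probability at least $1-\delta$ when $r_N\le r_0$. On this event the empirical operator is well defined throughout $\Phi_{A\sigma}$: every neighbourhood is nonempty and the empirical eigengap is at least $c_{\mathrm{gap}}h^2$. In addition,
\[
 \sup_{\mathbf y\in\Phi_{A\sigma}}\|\widehat{\mathcal T}_{\mathcal Z,h}(\mathbf y)-\mathcal T_h(\mathbf y)\|\le Chr_N .
\]
The idea is to transfer the population normal-distance recursion of Theorem~\ref{app:thm:contraction} to the empirical map through this uniform perturbation bound, and then read off the fixed-point inequality.

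Let $\widehat{\mathbf y}_N\in\Phi_{A\sigma}$ be any fixed point of $\widehat{\mathcal T}_{\mathcal Z,h}$. Since $\dist(\cdot,\mathcal M)$ is one-Lipschitz,
\[
 \dist(\widehat{\mathbf y}_N,\mathcal M)
 =\dist\bigl(\widehat{\mathcal T}_{\mathcal Z,h}(\widehat{\mathbf y}_N),\mathcal M\bigr)
 \le \dist\bigl(\mathcal T_h(\widehat{\mathbf y}_N),\mathcal M\bigr)+Chr_N .
\]
Theorem~\ref{app:thm:contraction} applies because $\widehat{\mathbf y}_N\in\Phi_{A\sigma}$ and $\sigma\le\sigma_{\mathrm{con}}$. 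It bounds the first term by $\kappa\,\dist(\widehat{\mathbf y}_N,\mathcal M)+C\sigma^2$. Rearranging gives
\[
 (1-\kappa)\,\dist(\widehat{\mathbf y}_N,\mathcal M)\le C\sigma^2+Chr_N,
\]
which is the claimed bound with constant $C/(1-\kappa)$. We take $r_{\mathrm{fp}}=r_0$, or smaller if a later step needs it. No existence claim is required, since the statement is conditional on a fixed point lying in the tube.

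For the sample-size consequence, substitute $h=c\sigma$. Then $\log(1/h)=\log(1/\sigma)+O(1)$, and since $\sigma<1$ is small the $O(1)$ term is absorbed into $C\log(1/\sigma)$. Write $u=D\log(1/\sigma)+\log(1/\delta)$.
\begin{itemize}
\item The condition $N\ge C\sigma^{-(d+2)}u$ gives $u/(Nh^d)\le C'\sigma^2/C$.
\item Hence the square-root term of $r_N$ is $O(\sigma)$, and the linear term is $O(\sigma^2)$.
\item So $hr_N=O(\sigma^2)$.
\item Choosing the constant large enough also enforces $r_N\le r_{\mathrm{fp}}$.
\end{itemize}
The right-hand side is therefore $O(\sigma^2)$.

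The argument has no serious obstacle. The only point needing care is that the uniform bound on $\|\widehat{\mathcal T}_{\mathcal Z,h}-\mathcal T_h\|$ and the population recursion must hold on the same tube $\Phi_{A\sigma}$ that contains the fixed point. Both are stated on $\Phi_{A\sigma}$, so no forward-invariance argument is needed here.
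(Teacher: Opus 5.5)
Your proposal is correct and follows the paper's proof essentially line for line. On the event of Lemma~\ref{app:lem:empirical-approximation}, you combine the one-Lipschitz property of $\dist(\cdot,\mathcal M)$, the uniform bound $\|\widehat{\mathcal T}_{\mathcal Z,h}-\mathcal T_h\|\le Chr_N$ and the population recursion of Theorem~\ref{app:thm:contraction}, then divide by $1-\kappa$; your sample-size computation spells out the paper's remark that the condition on $N$ forces $r_N=O(\sigma)$, hence $hr_N=O(\sigma^2)$.
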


\begin{proof}
On the event in Lemma~\ref{app:lem:empirical-approximation}, if
$\widehat{\mathbf y}_N=\widehat{\mathcal T}_{\mathcal Z,h}
(\widehat{\mathbf y}_N)$, then
\[
 \begin{split}
 \dist(\widehat{\mathbf y}_N,\mathcal M)
 &\le \dist\bigl(\mathcal T_h(\widehat{\mathbf y}_N),\mathcal M\bigr)
       +\|\widehat{\mathcal T}_{\mathcal Z,h}(\widehat{\mathbf y}_N)
                -\mathcal T_h(\widehat{\mathbf y}_N)\|\\
 &\le \kappa\dist(\widehat{\mathbf y}_N,\mathcal M)+C\sigma^2+Chr_N.
 \end{split}
\]
Division by $1-\kappa$ proves the assertion. The
sample-size statement follows by requiring $r_N=O(\sigma)$.
\end{proof}

{
\begin{theorem}[Finite-sample orbit localization]
\label{app:thm:sample-orbit-localization}
For the same fixed $0<A_0<A$, under the conditions of
Corollary~\ref{app:thm:sample-fixpt-dist}, there are
$\sigma_{\mathrm{orb}},r_{\mathrm{orb}},C>0$ such that, if $\sigma\le\sigma_{\mathrm{orb}}$ and
$r_N\le r_{\mathrm{orb}}$, then with probability at least $1-\delta$, every
empirical orbit
\[
 \widehat{\mathbf y}^{(t+1)}
 =\widehat{\mathcal T}_{\mathcal Z,h}(\widehat{\mathbf y}^{(t)}),
 \qquad \widehat{\mathbf y}^{(0)}\in\Phi_{A_0\sigma},
\]
remains in $\Phi_{A\sigma}$ and satisfies, for every $t\ge0$,
\[
 \dist(\widehat{\mathbf y}^{(t)},\mathcal M)
 \le
 \kappa^t\dist(\widehat{\mathbf y}^{(0)},\mathcal M)
 +\frac{C(\sigma^2+h r_N)}{1-\kappa}.
\]
In particular, after the geometric transient the entire empirical orbit
lies in an $O(\sigma^2+h r_N)$ neighbourhood of $\mathcal M$; if
$r_N=O(\sigma)$ and $h=c\sigma$, this neighbourhood has width
$O(\sigma^2)$.
\end{theorem}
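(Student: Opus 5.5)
The plan is to work on the high-probability event of Lemma~\ref{app:lem:empirical-approximation} and compare the empirical map with the population map one step at a time. On that event every neighbourhood centred in $\Phi_{A\sigma}$ is nonempty and has an empirical eigengap of at least $c_{\mathrm{gap}}h^2$. Hence $\widehat{\mathcal T}_{\mathcal Z,h}$ is well defined on all of $\Phi_{A\sigma}$, and
\[
 \sup_{\Phi_{A\sigma}}\|\widehat{\mathcal T}_{\mathcal Z,h}-\mathcal T_h\|\le C_1hr_N .
\]
No further probabilistic input is needed. The event is a single event of probability at least $1-\delta$ and does not depend on the initial point, so the conclusion holds simultaneously for all orbits, including data-dependent initializations.

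\textbf{Step 1: one-step recursion.} For any $\mathbf y\in\Phi_{A\sigma}$, the $1$-Lipschitz property of $\dist(\cdot,\mathcal M)$ and Theorem~\ref{app:thm:contraction} give
\[
 \dist\bigl(\widehat{\mathcal T}_{\mathcal Z,h}(\mathbf y),\mathcal M\bigr)
 \le \dist\bigl(\mathcal T_h(\mathbf y),\mathcal M\bigr)+C_1hr_N
 \le \kappa\,\dist(\mathbf y,\mathcal M)+C\sigma^2+C_1hr_N .
\]
This is exactly the argument used for Corollary~\ref{app:thm:sample-fixpt-dist}, applied to a general point rather than a fixed point.

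\textbf{Step 2: forward invariance.} This is the step that needs care, because the recursion is valid only while the iterate stays in $\Phi_{A\sigma}$. Suppose $\dist(\mathbf y,\mathcal M)\le A\sigma$. Then
\[
 \dist\bigl(\widehat{\mathcal T}_{\mathcal Z,h}(\mathbf y),\mathcal M\bigr)
 \le \kappa A\sigma+C\sigma^2+C_1c\sigma r_N .
\]
The right-hand side is at most $A\sigma$ provided
\[
 C\sigma+C_1cr_N\le(1-\kappa)A .
\]
This holds once $\sigma\le\sigma_{\mathrm{orb}}$ and $r_N\le r_{\mathrm{orb}}$ are chosen small enough. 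Since $h=c\sigma$, the sampling perturbation is a small multiple of $\sigma$, so it is absorbed alongside the $O(\sigma^2)$ curvature term. By induction on $t$, every iterate started in $\Phi_{A_0\sigma}\subset\Phi_{A\sigma}$ stays in $\Phi_{A\sigma}$, and Step~1 applies at every step. The only point to watch is that $r_{\mathrm{orb}}$ must also be at most $r_0$, so that the event of Lemma~\ref{app:lem:empirical-approximation} is available. The margin $A-A_0$ is not strictly needed for this argument; it only matches the population attraction setup.

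\textbf{Step 3: unrolling the recursion.} Write $e_t=\dist(\widehat{\mathbf y}^{(t)},\mathcal M)$ and $\beta=C(\sigma^2+hr_N)$. Then $e_{t+1}\le\kappa e_t+\beta$, and unrolling gives
\[
 e_t\le\kappa^te_0+\beta\sum_{s<t}\kappa^s\le\kappa^te_0+\frac{\beta}{1-\kappa}.
\]
This is the displayed bound. For the final assertion, if $r_N=O(\sigma)$ and $h=c\sigma$, then $hr_N=O(\sigma^2)$, so the neighbourhood has width $O(\sigma^2)$.
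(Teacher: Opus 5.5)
Your proposal is correct and follows essentially the same route as the paper. You work on the single uniform event of Lemma~\ref{app:lem:empirical-approximation}, combine its $Chr_N$ operator perturbation bound with Theorem~\ref{app:thm:contraction} to obtain the affine one-step recursion, and unroll it. The only difference is in forward invariance: the paper imposes $\beta_N/(1-\kappa)\le(A-A_0)\sigma$ on the unrolled bound, whereas you show directly that $\Phi_{A\sigma}$ is mapped into itself, which indeed makes the margin $A-A_0$ unnecessary.
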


\begin{proof}
On the uniform event in Lemma~\ref{app:lem:empirical-approximation}, for every
$\mathbf y\in\Phi_{A\sigma}$,
\[
 \begin{split}
 \dist\bigl(\widehat{\mathcal T}_{\mathcal Z,h}(\mathbf y),\mathcal M\bigr)
 &\le \dist\bigl(\mathcal T_h(\mathbf y),\mathcal M\bigr)
       +\|\widehat{\mathcal T}_{\mathcal Z,h}(\mathbf y)-\mathcal T_h(\mathbf y)\|\\
 &\le \kappa\dist(\mathbf y,\mathcal M)+C\sigma^2+Ch r_N.
 \end{split}
\]
Let $\beta_N=C(\sigma^2+h r_N)$.  Iterating this affine recursion, as long
as the orbit remains in $\Phi_{A\sigma}$, gives
\[
 \dist(\widehat{\mathbf y}^{(t)},\mathcal M)
 \le\kappa^t\dist(\widehat{\mathbf y}^{(0)},\mathcal M)
     +\beta_N\frac{1-\kappa^t}{1-\kappa}.
\]
Because $A_0<A$ and $h=c\sigma$, decrease $\sigma_{\mathrm{orb}}$ and $r_{\mathrm{orb}}$ so that
$\beta_N/(1-\kappa)\le(A-A_0)\sigma$.  The last display then proves by
induction that every iterate remains in $\Phi_{A\sigma}$, so the
recursion is valid for all $t$.  Dropping the factor
$1-\kappa^t$ proves the stated bound.  If $r_N=O(\sigma)$, then
$\sigma^2+h r_N=O(\sigma^2)$.
The event in Lemma~\ref{app:lem:empirical-approximation} is uniform over the
whole tube. Thus the induction uses the same event for every $t\ge0$ and
requires no derivative or continuity of the hard-radius empirical
operator.
\end{proof}

\subsection{Observed initialization and iteration complexity}

\begin{proposition}[Coverage of observed initializations]
\label{app:prop:observed-initialization}
Fix $0<\eta<1$ and choose constants $A_{\mathrm{init}}<A_0<A$ such that
$\Pr\{\|\boldsymbol\xi\|\le A_{\mathrm{init}}\}\ge1-\eta$ for
$\boldsymbol\xi\sim\mathcal N(\mathbf0,I_D)$. Then a generic observation
satisfies
\[
 \Pr\{\dist(\mathbf Y,\mathcal M)\le A_{\mathrm{init}}\sigma\}
 \ge1-\eta.
\]
Moreover, with probability at least $1-\delta_{\mathrm{init}}$, the
fraction of the $N$ observed initializations in
$\Phi_{A_{\mathrm{init}}\sigma}$ is at least
\[
 1-\eta-\sqrt{\frac{\log(1/\delta_{\mathrm{init}})}{2N}}.
\]
Every such initialization falls within the fixed tube covered uniformly
by Theorem~\ref{app:thm:sample-orbit-localization}.
\end{proposition}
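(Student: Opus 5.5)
The proof splits into a deterministic single-observation bound and a concentration bound for the empirical fraction.

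\emph{Single observation.} Write a generic observation as $\mathbf Y=\mathbf X+\sigma\boldsymbol\xi$ with $\mathbf X\in\mathcal M$. Since $\mathbf X$ lies on $\mathcal M$,
\[
 \dist(\mathbf Y,\mathcal M)\le\|\mathbf Y-\mathbf X\|=\sigma\|\boldsymbol\xi\|.
\]
Hence the event $\{\|\boldsymbol\xi\|\le A_{\mathrm{init}}\}$ is contained in the event $\{\dist(\mathbf Y,\mathcal M)\le A_{\mathrm{init}}\sigma\}$. The first claim then follows from the choice of $A_{\mathrm{init}}$. This step does not use independence of $\mathbf X$ and $\boldsymbol\xi$ or any geometric property of $\mathcal M$; it uses only the Gaussian tail.

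\emph{Empirical fraction.} Define the indicators $B_i=\mathbf 1\{\dist(\mathbf y_i,\mathcal M)\le A_{\mathrm{init}}\sigma\}$ for $i=1,\dots,N$. Under (A1) and the i.i.d.\ noise model, the pairs $(\mathbf x_i,\boldsymbol\xi_i)$ are i.i.d., so the $B_i$ are i.i.d.\ Bernoulli variables with mean $p\ge1-\eta$. Measurability of $B_i$ is clear because $\dist(\cdot,\mathcal M)$ is continuous. Hoeffding's inequality for bounded variables gives
\[
 \Pr\Bigl\{N^{-1}\textstyle\sum_i B_i\le p-s\Bigr\}\le\exp(-2Ns^2).
\]
Choosing $s=\sqrt{\log(1/\delta_{\mathrm{init}})/(2N)}$ makes the right-hand side equal to $\delta_{\mathrm{init}}$. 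Therefore, with probability at least $1-\delta_{\mathrm{init}}$, the fraction is at least $p-s\ge1-\eta-s$, which is the stated bound.

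\emph{Link to the orbit theorem.} Since $A_{\mathrm{init}}<A_0$, we have $\Phi_{A_{\mathrm{init}}\sigma}\subset\Phi_{A_0\sigma}$. So every observation counted above is an admissible initial point for Theorem~\ref{app:thm:sample-orbit-localization}. That theorem holds on a single event, uniformly over all starting points in $\Phi_{A_0\sigma}$. It therefore applies to these data-dependent starting points even though they are drawn from the reference sample $\mathcal Z$ itself, and no further conditioning argument is needed.

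None of these steps is substantive. The only point that needs care is exactly this dependence between initializations and the reference sample, and the uniformity of the orbit theorem over the whole tube is what disposes of it.
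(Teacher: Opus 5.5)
Your proposal is correct and follows the paper's proof essentially step for step: the pointwise bound $\dist(\mathbf Y,\mathcal M)\le\sigma\|\boldsymbol\xi\|$, Hoeffding's inequality applied to the i.i.d.\ membership indicators with $s=\sqrt{\log(1/\delta_{\mathrm{init}})/(2N)}$, and the inclusion $\Phi_{A_{\mathrm{init}}\sigma}\subset\Phi_{A_0\sigma}$. Your remark that the uniformity of the orbit theorem over the whole tube takes care of the dependence between the initial queries and the reference sample also matches the paper's own comment following that theorem.
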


\begin{proof}
Write $\mathbf y_i=\mathbf x_i+\sigma\boldsymbol\xi_i$ with
$\boldsymbol\xi_i\sim\mathcal N(\mathbf0,I_D)$.  Then
$\dist(\mathbf y_i,\mathcal M)\le\sigma\|\boldsymbol\xi_i\|$, which proves
the single-observation bound. The indicators of membership in
$\Phi_{A_{\mathrm{init}}\sigma}$ are independent Bernoulli variables with
mean at least $1-\eta$. Hoeffding's inequality gives the displayed sample
fraction. Since $A_{\mathrm{init}}<A_0$, every counted observation meets
the initialization condition of the finite-sample orbit theorem.
\end{proof}

Theorem~\ref{app:thm:sample-orbit-localization} also gives a finite-iteration
interpretation. Put
$d_0=\dist(\widehat{\mathbf y}^{(0)},\mathcal M)$. If
\[
 L\ge
 \frac{\left[\log\left\{(1-\kappa)d_0/
       [C(\sigma^2+hr_N)]\right\}\right]_+}{|\log\kappa|},
\]
then the $L$th output is within twice the statistical error radius of
$\mathcal M$. If $d_0=O(\sigma)$ and
$\sigma^2+hr_N=O(\sigma^2)$, this requires
$L=O\{\log(1/\sigma)\}$.

}

\section{Theory for MS-STSP}
\label{app:sec:multiscale}

{We first derive the quadratic expansion of the clean local
average that motivates squared-scale extrapolation. We then use stable
geometric aggregation to establish localization for the noisy
fixed-reference MS-STSP output.}

\subsection{Quadratic structure of the clean local average}

For $0<h\le c_0r_{\mathrm{geo}}$, recall
\[
 f_h(x)=
 \frac{1}{\mathrm{Area}\{B_{\mathrm E}(x,h)\cap\mathcal M\}}
 \int_{B_{\mathrm E}(x,h)\cap\mathcal M}y\,dA(y).
\]

\begin{lemma}[Quadratic structure of local averaging bias]
\label{app:lem:local_avg_expansion}
Under conditions (A2)--(A4) of the main paper, uniformly in
$x\in\mathcal M$,
\[
 f_h(x)=x+\tfrac12h^2\mathbf B(x)+O(h^4),
\]
where, for any orthonormal basis $\{e_1,\ldots,e_d\}$ of
$T_x\mathcal M$,
\begin{equation}
\label{app:eq:Bx-def}
 \mathbf B(x)=\frac{1}{d+2}\sum_{k=1}^d
 \mathrm{II}_x(e_k,e_k)\in N_x\mathcal M.
\end{equation}
The remainder is uniform in $x$.
\end{lemma}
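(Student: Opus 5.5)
We work in geodesic normal coordinates centred at $x$. Write $y=\exp_x(u)$ with $u\in T_x\mathcal M$, $\|u\|<\iota_{\mathcal M}$. Under (A2), the embedding has the Taylor expansion
\[
 \exp_x(u)-x=u+\tfrac12\mathrm{II}_x(u,u)+\mathbf c_3(u)+\mathbf c_4(u)+O(\|u\|^5),
\]
where $\mathbf c_3$ and $\mathbf c_4$ are homogeneous of degrees three and four, with coefficients bounded by $K_{\mathrm{geo}}$. In the same coordinates the Riemannian area element is $\{1+q_2(u)+O(\|u\|^3)\}\,du$, where $q_2$ is a quadratic form built from the Ricci curvature; crucially it has no linear term. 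By Lemma~\ref{app:lem:volume}, the integration domain $U_h=\exp_x^{-1}\{B_{\mathrm E}(x,h)\cap\mathcal M\}$ lies between the balls of radii $h(1\mp Ch^2)$. Rescaling $u=hv$ then gives
\[
 f_h(x)-x=\frac{\int_{V_h}\{hv+\tfrac{h^2}2\mathrm{II}_x(v,v)+h^3\mathbf c_3(v)+O(h^4)\}\{1+h^2q_2(v)+O(h^3)\}\,dv}{\int_{V_h}\{1+h^2q_2(v)+O(h^3)\}\,dv},
\]
where $V_h=h^{-1}U_h$.

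The plan is to expand numerator and denominator in powers of $h$ and use parity. The key geometric fact is a sharper description of $V_h$. Since $\|\exp_x(u)-x\|^2=\|u\|^2+\tfrac14\|\mathrm{II}_x(u,u)\|^2+2\langle u,\mathbf c_3(u)\rangle+O(\|u\|^5)$, and the $\mathbf c_3$ term is tangential and quartic, the boundary of $V_h$ has the form $\{\|v\|=1+h^2\rho(v/\|v\|)+O(h^3)\}$. Here $\rho$ is an \emph{even} function on the sphere: it is built from quartic terms in $v$, and the $O(h^3)$ correction comes from the quintic remainder. So $V_h$ is the unit ball perturbed by an even $O(h^2)$ deformation plus $O(h^3)$.

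With this description, each term is handled as follows.
\begin{itemize}
\item The tangential first moment $h\int_{V_h}v\,(1+h^2q_2)\,dv$ vanishes on the unit ball and on the even deformation, because the integrand is odd. It therefore comes only from the $O(h^3)$ boundary defect, contributing $h\cdot O(h^3)=O(h^4)$.
\item The cross terms $h^3\int v\,q_2$ and $h^3\int\mathbf c_3$ are odd, so they also vanish up to order $h^4$.
\item The quadratic term gives $\tfrac{h^2}{2}\int_{B_1}\mathrm{II}_x(v,v)\,dv/\omega_d+O(h^4)$. The boundary and density corrections to it enter at relative order $h^2$, so they contribute $O(h^4)$.
\end{itemize}
Finally, $\int_{B_1}v_kv_l\,dv=\delta_{kl}\,\omega_d/(d+2)$, so the quadratic term equals $\tfrac12h^2\cdot\frac1{d+2}\sum_k\mathrm{II}_x(e_k,e_k)$. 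This is exactly $\tfrac12h^2\mathbf B(x)$, which is normal and independent of the choice of basis.

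The main obstacle is this odd-order bookkeeping. A naive bound leaves $O(h^3)$ errors, both from the domain perturbation and from the third-order Taylor term. Reaching $O(h^4)$ requires checking that every order-$h^3$ contribution is an odd moment over a centrally symmetric domain up to $O(h^3)$ relative error. So I would state carefully that the first asymmetric correction to $V_h$ is $O(h^3)$, which uses the $C^5$ bound to control the quintic remainder. Uniformity in $x$ follows because all constants depend only on $K_{\mathrm{geo}}$, $\tau$ and $\iota_{\mathcal M}$ through Lemma~\ref{app:lem:volume} and (A2).
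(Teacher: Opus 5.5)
Your proposal is correct and follows essentially the same route as the paper's proof: geodesic normal coordinates, Taylor expansions of the embedding and of the area density, and a radial description of the integration domain whose leading correction is even (the paper's $h^3\alpha_x(\theta)$, your $h^2\rho$ after rescaling), so that parity removes every odd-order contribution down to $O(h^4)$. One phrase is loose but does not affect the argument: $\mathbf c_3(u)$ is not tangential in general; only its tangential component enters the quartic term, through $\langle u,\mathbf c_3(u)\rangle$.
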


\begin{proof}
Fix $x\in\mathcal M$ and use geodesic normal coordinates
$u\in T_x\mathcal M$. Let $\mathcal J_x(u)$ be the area density and
$D_h(x)=\{u:\exp_x(u)\in B_{\mathrm E}(x,h)\cap\mathcal M\}$.
Then
\[
 f_h(x)-x=
 \frac{\int_{D_h(x)}\{\exp_x(u)-x\}\mathcal J_x(u)\,du}
      {\int_{D_h(x)}\mathcal J_x(u)\,du}.
\]
Uniform Taylor expansions give
\begin{align*}
 \exp_x(u)-x
 &=u+\tfrac12\mathrm{II}_x(u,u)+F_{3,x}(u)+F_{4,x}(u)
   +O(\|u\|^5),\\
 \mathcal J_x(u)&=1+\mathcal J_{2,x}(u)+\mathcal J_{3,x}(u)+O(\|u\|^4),
\end{align*}
where the indexed terms are homogeneous of the indicated degree.

It remains to account for the fact that $D_h(x)$ is not exactly a
symmetric tangent ball.  Write $u=r\theta$.  Orthogonality of
$u$ and $\mathrm{II}_x(u,u)$ gives
\[
 \|\exp_x(r\theta)-x\|^2
 =r^2+r^4H_{4,x}(\theta)+r^5H_{5,x}(\theta)+O(r^6),
\]
where $H_{4,x}$ is even in $\theta$.  Solving the boundary equation
$\|\exp_x\{\rho_x(\theta,h)\theta\}-x\|=h$ yields
\[
 \rho_x(\theta,h)=h+h^3\alpha_x(\theta)+h^4\beta_x(\theta)
 +O(h^5),
 \qquad \alpha_x(-\theta)=\alpha_x(\theta).
\]
Thus the order-$h^3$ boundary perturbation has zero pairing with the
odd linear term $u$.  Its first possible nonzero boundary contribution
to that term is $O(h^{d+4})$.  All degree-one and degree-three interior
terms also integrate to zero on the leading symmetric ball; the remaining
numerator corrections are $O(h^{d+4})$.  Moreover,
\[
 \int_{D_h(x)}\mathcal J_x(u)\,du=\omega_dh^d\{1+O(h^2)\}.
\]
Consequently,
\[
 f_h(x)-x=
 \frac{1}{\omega_dh^d}
 \int_{\|u\|\le h}\tfrac12\mathrm{II}_x(u,u)\,du+O(h^4).
\]
Since
\[
 \int_{\|u\|\le h}u_au_b\,du
 =\mathbf 1\{a=b\}\frac{\omega_d}{d+2}h^{d+2},
\]
the leading term is exactly $\tfrac12h^2\mathbf B(x)$.
The trace $\sum_a\mathrm{II}_x(e_a,e_a)$ is invariant under an
orthogonal change of tangent basis, so \eqref{app:eq:Bx-def} is intrinsic.
Compactness and the uniform bounds in (A2)--(A4) make the stated
$O(h^4)$ remainder uniform over chart centres.
\end{proof}

\subsection{Squared-scale intercept weights}

\begin{lemma}[OLS intercept identities]
\label{app:lem:ms-weights}
Let $s_j=h_j^2$, $\bar s=M^{-1}\sum_js_j$,
$\bar{\mathbf v}=M^{-1}\sum_j\mathbf v_j$, and
$S_{ss}=\sum_j(s_j-\bar s)^2>0$. The intercept in the ordinary
least-squares regression of vectors $\mathbf v_j\in\R^D$ on
$(1,s_j)$ is $\sum_jw_j\mathbf v_j$, where
\[
 w_j=\frac1M-\frac{\bar s(s_j-\bar s)}{S_{ss}}.
\]
These weights satisfy
\[
 \sum_jw_j=1,
 \qquad
 \sum_jw_js_j=0.
\]
If $h_j=c_j\sigma$ with fixed positive $c_j$ that are not all equal,
then the weights do not depend on $\sigma$ and
$\Lambda_w:=\sum_j|w_j|<\infty$.
\end{lemma}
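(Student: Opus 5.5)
The plan is to treat this as an ordinary least-squares computation and then verify the two moment identities directly. I would write the design matrix with rows $(1,s_j)$ and solve the normal equations componentwise. This suffices because the regression of vectors $\mathbf v_j\in\R^D$ on $(1,s_j)$ decouples into $D$ scalar regressions sharing the same design.

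For a single coordinate, the standard simple-regression formulas give slope $\widehat\beta=\sum_j(s_j-\bar s)(v_j-\bar v)/S_{ss}$ and intercept $\widehat\alpha=\bar v-\widehat\beta\bar s$. The assumption $S_{ss}>0$ makes the design full rank, so these are the unique minimizers. Since $\sum_j(s_j-\bar s)=0$, the slope can be written as $\sum_j(s_j-\bar s)v_j/S_{ss}$. Substituting gives
\[
\widehat\alpha=\sum_j\Bigl\{\frac1M-\frac{\bar s(s_j-\bar s)}{S_{ss}}\Bigr\}v_j .
\]
Applying this to each coordinate yields the vector intercept $\sum_jw_j\mathbf v_j$ with the stated weights.

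For the identities, $\sum_jw_j=1-\bar s\sum_j(s_j-\bar s)/S_{ss}=1$. For the second,
\[
\sum_jw_js_j=\bar s-\bar s\sum_j(s_j-\bar s)s_j/S_{ss}=\bar s-\bar s=0,
\]
which uses $\sum_j(s_j-\bar s)s_j=\sum_j(s_j-\bar s)^2=S_{ss}$. An alternative check is that OLS reproduces any exact linear fit: taking $v_j\equiv1$ gives intercept $1$, and taking $v_j=s_j$ gives intercept $0$.

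For the scale statement, set $s_j=c_j^2\sigma^2$. Then $\bar s$, $s_j-\bar s$ and $S_{ss}$ scale as $\sigma^2$, $\sigma^2$ and $\sigma^4$, so each $w_j$ equals its value computed with $s_j=c_j^2$. Hence the weights do not depend on $\sigma$. Because the $c_j$ are positive and not all equal, the $c_j^2$ are not all equal, so $S_{ss}>0$ and each weight is finite. Thus $\Lambda_w=\sum_j|w_j|<\infty$ is a fixed design constant; this is the bound $C_w$ used in Theorem~\ref{thm:ms-population}.

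There is no real obstacle here. The only point requiring care is positivity of $S_{ss}$, which rests on the fact that $c\mapsto c^2$ is injective on positive reals.
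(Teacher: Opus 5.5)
Your proposal is correct and follows the paper's proof essentially line for line: the simple-regression slope and intercept formulas give the weights, $\sum_j(s_j-\bar s)=0$ and $\sum_j(s_j-\bar s)s_j=S_{ss}$ give the two identities, and $\sigma^2$ cancels from $\bar s(s_j-\bar s)/S_{ss}$. Your additions (the coordinatewise decoupling, the exact-linear-fit check, and the remark that $S_{ss}>0$ because squaring is injective on positive reals) are fine but not needed.
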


\begin{proof}
The OLS slope is
$\widehat{\mathbf b}=S_{ss}^{-1}\sum_j(s_j-\bar s)\mathbf v_j$ and the
intercept is $\bar{\mathbf v}-\bar s\widehat{\mathbf b}$, which gives the
displayed formula. Since
$\sum_j(s_j-\bar s)=0$,
\[
 \sum_jw_j=1.
\]
Furthermore,
\[
 \sum_j(s_j-\bar s)s_j
 =\sum_j(s_j-\bar s)^2=S_{ss},
\]
so $\sum_jw_js_j=\bar s-\bar s=0$. If $s_j=c_j^2\sigma^2$, the common
factor $\sigma^2$ cancels from $\bar s(s_j-\bar s)/S_{ss}$. Thus every
$w_j$ is a fixed finite number and so is its finite $\ell_1$ norm.
\end{proof}

For the clean local-average map in Lemma~\ref{app:lem:local_avg_expansion},
the two weight identities give, uniformly in $x\in\mathcal M$,
\[
 \sum_{j=1}^Mw_j f_{h_j}(x)
 =x+O\!\left(\Lambda_w h_{\max}^4\right),
 \qquad h_{\max}=\max_j h_j.
\]
This follows by cancelling the common $h_j^2\mathbf B(x)$ term and
bounding the weighted remainders by
$C\sum_j|w_j|h_j^4\le C\Lambda_w h_{\max}^4$.
The noisy-output result below instead uses geometric aggregation.

\subsection{A geometric aggregation lemma}

\begin{lemma}[Stable affine aggregation in one manifold patch]
\label{app:lem:local-affine-aggregation}
Fix $a_{\mathrm{patch}}<\infty$ and an a priori weight cap $C_w<\infty$. There are
$\sigma_{\mathrm{agg}},C<\infty$, depending only on $a_{\mathrm{patch}},C_w$ and the uniform geometric
constants, with the following property. Let $0<\sigma\le\sigma_{\mathrm{agg}}$,
$x_0\in\mathcal M$, and $\mathbf v_1,\ldots,\mathbf v_M$ lie in the reach
tube. Put $x_j=\Pi_{\mathcal M}(\mathbf v_j)$ and suppose
\[
 d_g(x_j,x_0)\le a_{\mathrm{patch}}\sigma,
 \qquad
 \dist(\mathbf v_j,\mathcal M)\le\varepsilon_{\mathrm{agg}}.
\]
If $\sum_jw_j=1$ and $\sum_j|w_j|\le C_w$, then
\[
 \dist\left(\sum_jw_j\mathbf v_j,\mathcal M\right)
 \le C C_w(\varepsilon_{\mathrm{agg}}+\sigma^2).
\]
The constant $C$ may depend on the fixed cap $C_w$, but not on
$\sigma,\varepsilon_{\mathrm{agg}},N$, or the particular points.
\end{lemma}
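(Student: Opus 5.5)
We work in geodesic normal coordinates at $x_0$, which are legitimate because $a_{\mathrm{patch}}\sigma\ll r_{\mathrm{geo}}$ once $\sigma\le\sigma_{\mathrm{agg}}$. For each $j$ write $x_j=\exp_{x_0}(u_j)$ with $u_j\in T_{x_0}\mathcal M$, $\|u_j\|=d_g(x_j,x_0)\le a_{\mathrm{patch}}\sigma$. Put $\mathbf n_j=\mathbf v_j-x_j\in N_{x_j}\mathcal M$; then $\|\mathbf n_j\|=\dist(\mathbf v_j,\mathcal M)\le\varepsilon_{\mathrm{agg}}$. Using $\sum_jw_j=1$, the aggregate decomposes as
\[
 \sum_jw_j\mathbf v_j=\sum_jw_jx_j+\sum_jw_j\mathbf n_j .
\]
Distance to $\mathcal M$ is one-Lipschitz, and $\|\sum_jw_j\mathbf n_j\|\le C_w\varepsilon_{\mathrm{agg}}$. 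It therefore suffices to show
\[
 \dist\Bigl(\sum_jw_jx_j,\mathcal M\Bigr)\le CC_w\sigma^2 .
\]

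For the manifold points, use the Taylor expansion from the proof of Lemma~\ref{app:lem:volume}:
\[
 x_j-x_0=u_j+\mathbf e_j,\qquad \mathbf e_j=\tfrac12\mathrm{II}_{x_0}(u_j,u_j)+O(\|u_j\|^3),
\]
so $\|\mathbf e_j\|\le C\|u_j\|^2\le Ca_{\mathrm{patch}}^2\sigma^2$ uniformly, with the constant controlled by $K_{\mathrm{geo}}$. Set $\bar u=\sum_jw_ju_j\in T_{x_0}\mathcal M$. Then $\|\bar u\|\le C_wa_{\mathrm{patch}}\sigma$, which is still small relative to $r_{\mathrm{geo}}$ after shrinking $\sigma_{\mathrm{agg}}$ depending on $C_w$. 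Hence $\bar x=\exp_{x_0}(\bar u)\in\mathcal M$ is defined and satisfies $\bar x-x_0=\bar u+O(\|\bar u\|^2)$. Using $\sum_jw_j=1$ once more,
\[
 \sum_jw_jx_j-\bar x=\sum_jw_j\mathbf e_j-O(\|\bar u\|^2),
\]
and both terms are $O(C_w\sigma^2)$ or $O(C_w^2\sigma^2)$. Since $C_w$ is a fixed cap, its powers are absorbed into $C$, as the statement allows. Therefore $\dist(\sum_jw_jx_j,\mathcal M)\le\|\sum_jw_jx_j-\bar x\|\le CC_w\sigma^2$.

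The main point requiring care is that the weights may be negative, so the aggregate is an affine rather than a convex combination. One cannot argue that it stays inside a chord of the patch. The argument instead relies on exact cancellation of the affine parts $x_0$ and $u_j$ through $\sum_jw_j=1$, combined with linearity of the tangent-coordinate representation; only the quadratic remainders are amplified by $\sum_j|w_j|$.

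One also has to check that $\bar u$ lies in the region where the uniform expansion holds. This is where $\sigma_{\mathrm{agg}}$ depends on $C_w$ and $a_{\mathrm{patch}}$: require $C_wa_{\mathrm{patch}}\sigma_{\mathrm{agg}}\le c_0r_{\mathrm{geo}}$.

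Finally, all $\mathbf v_j$ lie in the reach tube, so the projections $x_j$ are well defined. No further condition on $\varepsilon_{\mathrm{agg}}$ is needed, because the normal parts enter only through the triangle inequality.
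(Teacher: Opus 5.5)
Your proof is correct and follows the paper's argument: both cancel the affine parts $x_0+u_j$ via $\sum_j w_j=1$, compare the aggregate with the manifold point at the weighted tangent coordinate $\sum_j w_j u_j$, and bound the quadratic remainders by $C(C_w+C_w^2)a_{\mathrm{patch}}^2\sigma^2$ and the normal parts by $C_w\varepsilon_{\mathrm{agg}}$. The only difference is cosmetic. You parametrize the patch by $\exp_{x_0}$, which gives $\|u_j\|=d_g(x_j,x_0)$ exactly, whereas the paper uses an orthogonal graph chart $x_0+u+\psi(u)$ and needs geodesic--Euclidean comparability to bound $\|u_j\|$.
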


\begin{proof}
Use an orthogonal chart centred at $x_0$ in which the manifold is the
graph
\[
 x_0+u+\psi(u),\qquad u\in T_{x_0}\mathcal M,
\]
with $\psi(\mathbf0)=\mathbf0$, $\mathrm D\psi(\mathbf0)=0$, and
$\|\psi(u)\|\le C\|u\|^2$ on a fixed ball. Positive reach and the derivative
bounds give this chart uniformly in $x_0$. For sufficiently small
$\sigma$, geodesic--Euclidean comparability puts every $x_j$ in this
chart, so write $x_j=x_0+u_j+\psi(u_j)$ with
$\|u_j\|\le C a_{\mathrm{patch}}\sigma$. Also write
$\mathbf v_j=x_j+n_j$, where
$\|n_j\|\le\varepsilon_{\mathrm{agg}}$.

Set $u_w=\sum_jw_ju_j$. Then
$\|u_w\|\le C C_wa_{\mathrm{patch}}\sigma$, so after reducing $\sigma_{\mathrm{agg}}$ the point
$x_w=x_0+u_w+\psi(u_w)$ belongs to $\mathcal M$ in the same chart. Because
$\sum_jw_j=1$,
\[
 \sum_jw_j\mathbf v_j-x_w
 =\sum_jw_j\psi(u_j)-\psi(u_w)+\sum_jw_jn_j.
\]
The normal-error term has norm at most $C_w\varepsilon_{\mathrm{agg}}$. The graph bounds
give
\[
 \left\|\sum_jw_j\psi(u_j)-\psi(u_w)\right\|
 \le C\left\{C_w a_{\mathrm{patch}}^2+C_w^2a_{\mathrm{patch}}^2\right\}\sigma^2.
\]
Since the weight cap is fixed and $C_w\ge|\sum_jw_j|=1$, this is bounded
by $C C_w\sigma^2$ after absorbing the fixed cap into $C$. Taking
distance to the particular manifold point $x_w$ proves the result.
\end{proof}

\subsection{Population MS-STSP}

\begin{theorem}[Population MS-STSP localization]
\label{app:thm:ms-population}
Let $M$ be fixed, let $h_j=c_j\sigma$, and suppose every $(c_j,A)$ is
spectrally admissible. For each scale, let $\mathcal M_{h_j}$ and
$\mathcal U_{h_j}$ be supplied by Lemma~\ref{app:lem:fixed-manifold}, and
write $\rho_{0,j}$, $\bar\kappa_j$, and $C_{{\mathrm{step}},j}$ for its
scale-$h_j$ constants. Assume
\[
 \mathbf y^{(0)}\in\bigcap_{j=1}^M\mathcal U_{h_j}.
\]
Let
\[
 \mathbf y_j^{(\infty)}=\lim_{t\to\infty}\mathcal T_{h_j}^{(t)}(\mathbf y^{(0)}),
 \qquad
 \mathbf y_{\mathrm{MS}}^{(\infty)}=\sum_{j=1}^Mw_j\mathbf y_j^{(\infty)},
\]
where $\sum_jw_j=1$ and $\sum_j|w_j|\le C_w$ for a fixed cap. Then, for
all sufficiently small $\sigma$,
\[
 \dist(\mathbf y_j^{(\infty)},\mathcal M)\le C\sigma^2,
 \qquad
 d_g\{\Pi_{\mathcal M}(\mathbf y_j^{(\infty)}),
       \Pi_{\mathcal M}(\mathbf y^{(0)})\}\le C\sigma,
\]
uniformly over $j$, and
\[
 \dist(\mathbf y_{\mathrm{MS}}^{(\infty)},\mathcal M)\le C C_w\sigma^2.
\]
Here and below $C$ may depend on the fixed finite scale design and its
weight cap, but not on $\sigma$.
\end{theorem}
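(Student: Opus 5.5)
The plan is to reduce the statement to Lemma~\ref{app:lem:local-affine-aggregation}, with base point $x_0=\Pi_{\mathcal M}(\mathbf y^{(0)})$ and aggregated points $\mathbf v_j=\mathbf y_j^{(\infty)}$. Two per-scale ingredients are needed: a distance bound $\varepsilon_{\mathrm{agg}}=O(\sigma^2)$, and a patch bound $d_g(x_j,x_0)\le a_{\mathrm{patch}}\sigma$ with $x_j=\Pi_{\mathcal M}(\mathbf y_j^{(\infty)})$.

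First, fix $j$. Since $\mathbf y^{(0)}\in\mathcal U_{h_j}$, Theorem~\ref{app:thm:orbit-convergence} at scale $h_j=c_j\sigma$ gives convergence to $\mathbf y_j^{(\infty)}\in\mathcal M_{h_j}$ and $\dist(\mathbf y_j^{(\infty)},\mathcal M)\le C_j\sigma^2$. Taking the maximum over the finitely many scales gives the first display with one constant $C$, provided $\sigma$ is below the minimum of the finitely many scale thresholds.

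Second, the patch bound follows from the orbit displacement estimate in the same theorem with $t=0$:
\[
 \|\mathbf y_j^{(\infty)}-\mathbf y^{(0)}\|\le \frac{C_{\mathrm{step},j}}{1-\bar\kappa_j}\dist(\mathbf y^{(0)},\mathcal M_{h_j})
 \le \frac{C_{\mathrm{step},j}\rho_{0,j}}{1-\bar\kappa_j}\sigma .
\]
Both $\mathbf y^{(0)}$ and $\mathbf y_j^{(\infty)}$ lie in $\Phi_{A\sigma}$, which for small $\sigma$ sits well inside the reach tube. The nearest-point projection $\Pi_{\mathcal M}$ is Lipschitz there, with constant at most $\tau/(\tau-A\sigma)\le2$. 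This gives $\|x_j-x_0\|\le C\sigma$. The reach comparison from the proof of Lemma~\ref{app:lem:volume}, $d_g\le\tau\{1-\sqrt{1-2r/\tau}\}\le2r$, then converts this to $d_g(x_j,x_0)\le C\sigma$. So all projections lie in one $O(\sigma)$ geodesic patch, with $a_{\mathrm{patch}}=2\max_j C_{\mathrm{step},j}\rho_{0,j}/(1-\bar\kappa_j)$ times the Lipschitz factor.

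Finally, apply Lemma~\ref{app:lem:local-affine-aggregation} with this $a_{\mathrm{patch}}$, $\varepsilon_{\mathrm{agg}}=C\sigma^2$ and the weight cap $C_w$. This yields
\[
 \dist(\mathbf y_{\mathrm{MS}}^{(\infty)},\mathcal M)\le CC_w(C\sigma^2+\sigma^2)=CC_w\sigma^2 .
\]

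The main obstacle is the patch step. It requires the constants $\rho_{0,j}$, $C_{\mathrm{step},j}$ and $\bar\kappa_j$ to be $\sigma$-independent, which Lemma~\ref{app:lem:fixed-manifold} supplies. It also requires the Lipschitz bound for $\Pi_{\mathcal M}$ on a tube of width $O(\sigma)$, a standard reach fact (Federer). One must also check that $\sigma_{\mathrm{agg}}$ is compatible with the thresholds at every scale; this holds because only finitely many scales enter.
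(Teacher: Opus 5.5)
Your proposal is correct and follows the paper's proof essentially step for step. The steps are the same: apply Theorem~\ref{app:thm:orbit-convergence} at each scale and take the maximum of the finitely many constants; use the $t=0$ tail bound to get $\|\mathbf y_j^{(\infty)}-\mathbf y^{(0)}\|\le C\sigma$; combine the Lipschitz nearest-point projection with geodesic--Euclidean comparability to place all $\Pi_{\mathcal M}(\mathbf y_j^{(\infty)})$ in one $O(\sigma)$ patch; and finish with Lemma~\ref{app:lem:local-affine-aggregation} and $\varepsilon_{\mathrm{agg}}=C\sigma^2$. The only difference is that you make the constants explicit, namely Federer's $\tau/(\tau-A\sigma)$ Lipschitz bound and the reach comparison $d_g\le 2r$, where the paper simply invokes these facts.
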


\begin{proof}
Apply Theorem~\ref{app:thm:orbit-convergence} separately at every scale.
Because $M$ is fixed, the maxima of the finitely many constants are still
finite. Thus every limit exists, belongs to $\mathcal M_{h_j}$, and
$\dist(\mathbf y_j^{(\infty)},\mathcal M)\le C\sigma^2$.

The tail estimate in that theorem at $t=0$ gives
\[
 \|\mathbf y_j^{(\infty)}-\mathbf y^{(0)}\|
 \le\frac{C_{{\mathrm{step}},j}}{1-\bar\kappa_j}
      \dist(\mathbf y^{(0)},\mathcal M_{h_j})
 \le C\sigma,
\]
because $\mathbf y^{(0)}\in\mathcal U_{h_j}$. Let
$x_0=\Pi_{\mathcal M}(\mathbf y^{(0)})$ and
$x_j=\Pi_{\mathcal M}(\mathbf y_j^{(\infty)})$. Nearest-point projection is uniformly
Lipschitz on a smaller reach tube, so
$\|x_j-x_0\|\le C\sigma$. Local geodesic--Euclidean comparability then
gives $d_g(x_j,x_0)\le C\sigma$.
Lemma~\ref{app:lem:local-affine-aggregation}, with
$\varepsilon_{\mathrm{agg}}=C\sigma^2$, proves the
last display.
\end{proof}

\subsection{Finite-sample MS-STSP}

\begin{theorem}[Finite-sample MS-STSP localization]
\label{app:thm:ms-sample}
Retain the finite scale design of Theorem~\ref{app:thm:ms-population}. Let
$0<\delta<1$ and assume
the common initialization has a uniform attraction-tube buffer,
\[
 \dist(\mathbf y^{(0)},\mathcal M_{h_j})\le\tfrac12\rho_{0,j}\sigma,
 \qquad j=1,\ldots,M.
\]
Define
\[
 r_{N,j}=\sqrt{\frac{D\log(1/h_j)+\log(M/\delta)}{Nh_j^d}}
 +\frac{D\log(1/h_j)+\log(M/\delta)}{Nh_j^d}.
\]
Suppose $\max_j r_{N,j}$ is small enough that the joint empirical event
below holds and
\[
 \frac{C h_jr_{N,j}}{1-\bar\kappa_j}
 \le\tfrac12\rho_{0,j}\sigma
 \quad\text{for every }j.
\]
For predetermined finite iteration horizons $L_j$ and a fixed
stopping-horizon bound $C_{\mathrm{hor}}<\infty$, assume
\[
 \bar\kappa_j^{L_j}\dist(\mathbf y^{(0)},\mathcal M_{h_j})
 \le C\{\sigma^2+h_jr_{N,j}\},
 \qquad
 L_j\{\sigma^2+h_jr_{N,j}\}\le C_{\mathrm{hor}}\sigma.
\]
For $t=0,\ldots,L_j$, define the empirical branch orbit and its aggregate by
\[
 \widehat{\mathbf y}_j^{(t)}
 =\widehat{\mathcal T}_{\mathcal Z,h_j}^{(t)}(\mathbf y^{(0)}),
 \qquad
 \widehat{\mathbf y}_{\mathrm{MS}}=\sum_{j=1}^Mw_j\widehat{\mathbf y}_j^{(L_j)},
\]
with $\sum_jw_j=1$ and $\sum_j|w_j|\le C_w$. Then, with probability at
least $1-\delta$,
\[
 \dist(\widehat{\mathbf y}_{\mathrm{MS}},\mathcal M)
 \le C C_w\left\{\sigma^2+\max_j h_jr_{N,j}\right\}.
\]
If $h_j=c_j\sigma$ and
\[
 N\ge C\sigma^{-(d+2)}
 \{D\log(1/\sigma)+\log(M/\delta)\},
\]
then $\max_j r_{N,j}=O(\sigma)$ and the right-hand side is
$O(\sigma^2)$.
\end{theorem}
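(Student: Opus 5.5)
The plan is to reduce the statement to the geometric aggregation result, Lemma~\ref{app:lem:local-affine-aggregation}. For that I need two facts about each empirical branch endpoint $\widehat{\mathbf y}_j^{(L_j)}$: it lies within $C\{\sigma^2+h_jr_{N,j}\}$ of $\mathcal M$, and its projection lies within $O(\sigma)$ geodesic distance of $x_0=\Pi_{\mathcal M}(\mathbf y^{(0)})$. First I will fix the joint empirical event. I apply Lemma~\ref{app:lem:empirical-approximation} at each scale $h_j$ with confidence $\delta/M$; this is why $\log(M/\delta)$ appears in $r_{N,j}$. A union bound then gives an event of probability at least $1-\delta$. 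On this event, $\sup_{\Phi_{A\sigma}}\|\widehat{\mathcal T}_{\mathcal Z,h_j}-\mathcal T_{h_j}\|\le Ch_jr_{N,j}$ for every $j$. All subsequent steps are deterministic on this event.

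The second step is to track each branch relative to the population fixed-point manifold $\mathcal M_{h_j}$, using the transverse contraction of Lemma~\ref{app:lem:fixed-manifold}(iii) rather than the normal recursion. Write $e_t=\dist(\widehat{\mathbf y}_j^{(t)},\mathcal M_{h_j})$. While the iterate lies in $\mathcal U_{h_j}$, the one-Lipschitz property of distance gives
\[
e_{t+1}\le\bar\kappa_je_t+Ch_jr_{N,j}.
\]
By induction, $e_t\le\bar\kappa_j^te_0+Ch_jr_{N,j}/(1-\bar\kappa_j)$. The two buffer assumptions ($e_0\le\tfrac12\rho_{0,j}\sigma$, and the sampling term at most $\tfrac12\rho_{0,j}\sigma$) keep $e_t<\rho_{0,j}\sigma$, so the orbit never leaves $\mathcal U_{h_j}$. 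At $t=L_j$, the transient condition gives $e_{L_j}\le C\{\sigma^2+h_jr_{N,j}\}$. Adding $d_H(\mathcal M_{h_j},\mathcal M)\le C\sigma^2$ from Lemma~\ref{app:lem:fixed-manifold}(ii) yields
\[
\dist(\widehat{\mathbf y}_j^{(L_j)},\mathcal M)\le C\{\sigma^2+h_jr_{N,j}\}.
\]

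The third step, which I expect to be the main obstacle, is the patch condition: tangential drift is not contracted. Each step satisfies
\[
\|\widehat{\mathbf y}^{(t+1)}_j-\widehat{\mathbf y}^{(t)}_j\|\le\|\mathcal T_{h_j}(\widehat{\mathbf y}^{(t)}_j)-\widehat{\mathbf y}^{(t)}_j\|+Ch_jr_{N,j}\le C_{\mathrm{step}}e_t+Ch_jr_{N,j}.
\]
The step bound here comes from Lemma~\ref{app:lem:fixed-manifold}(iii). Summing over $t<L_j$, the geometric part gives $C_{\mathrm{step}}e_0/(1-\bar\kappa_j)\le C\sigma$. The constant parts give $L_j\cdot C\{h_jr_{N,j}+h_jr_{N,j}/(1-\bar\kappa_j)\}$, which the horizon condition $L_j\{\sigma^2+h_jr_{N,j}\}\le C_{\mathrm{hor}}\sigma$ bounds by $C\sigma$. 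This horizon condition is exactly what prevents accumulated drift of the per-step sampling error. Hence
\[
\|\widehat{\mathbf y}_j^{(L_j)}-\mathbf y^{(0)}\|\le C\sigma.
\]
Lipschitz continuity of $\Pi_{\mathcal M}$ on a reach sub-tube, together with geodesic--Euclidean comparability, then gives $d_g(\Pi_{\mathcal M}\widehat{\mathbf y}_j^{(L_j)},x_0)\le a_{\mathrm{patch}}\sigma$, with $a_{\mathrm{patch}}$ depending only on the fixed design.

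Finally, I apply Lemma~\ref{app:lem:local-affine-aggregation} with $\mathbf v_j=\widehat{\mathbf y}_j^{(L_j)}$ and $\varepsilon_{\mathrm{agg}}=C\{\sigma^2+\max_jh_jr_{N,j}\}$. This gives
\[
\dist(\widehat{\mathbf y}_{\mathrm{MS}},\mathcal M)\le CC_w\{\sigma^2+\max_jh_jr_{N,j}\}.
\]
For the sample-size claim, note that $h_j=c_j\sigma$ and the stated lower bound on $N$ make each $Nh_j^d$ exceed $C\sigma^{-2}\{D\log(1/\sigma)+\log(M/\delta)\}$. Since $\log(1/h_j)\asymp\log(1/\sigma)$, this gives $r_{N,j}=O(\sigma)$, so $h_jr_{N,j}=O(\sigma^2)$.
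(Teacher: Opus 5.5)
Your proposal is correct and follows the paper's proof essentially step for step. The shared steps are: a union bound over the $M$ scales at level $\delta/M$; the perturbed transverse contraction $\dist(\widehat{\mathbf y}_j^{(t+1)},\mathcal M_{h_j})\le\bar\kappa_j\dist(\widehat{\mathbf y}_j^{(t)},\mathcal M_{h_j})+Ch_jr_{N,j}$ together with the buffer induction and the transient condition; summed step bounds controlled by the horizon condition, which give the common $O(\sigma)$ patch; and finally the stable affine aggregation lemma with $\varepsilon_{\mathrm{agg}}=C\{\sigma^2+\max_jh_jr_{N,j}\}$.
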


\begin{proof}
Apply the uniform empirical approximation in
Lemma~\ref{app:lem:empirical-approximation} at scale $j$ with failure
probability $\delta/M$. A union bound gives, with probability at least
$1-\delta$, simultaneously for every $j$,
\[
 \sup_{\mathbf y\in\Phi_{A\sigma}}
 \|\widehat{\mathcal T}_{\mathcal Z,h_j}(\mathbf y)-\mathcal T_{h_j}(\mathbf y)\|
 \le C h_jr_{N,j},
\]
and all empirical projectors are uniquely defined by their eigengaps.

On the attraction tube, Lemma~\ref{app:lem:fixed-manifold}(iii) and the last
display imply
\[
 \dist\bigl(\widehat{\mathcal T}_{\mathcal Z,h_j}(\mathbf y),\mathcal M_{h_j}\bigr)
 \le\bar\kappa_j\dist(\mathbf y,\mathcal M_{h_j})+C h_jr_{N,j}.
\]
The initialization buffer and the stated smallness condition prove by
induction that the empirical branch remains in $\mathcal U_{h_j}$. Hence
\[
 \dist(\widehat{\mathbf y}_j^{(L_j)},\mathcal M_{h_j})
 \le\bar\kappa_j^{L_j}
      \dist(\mathbf y^{(0)},\mathcal M_{h_j})
   +\frac{C h_jr_{N,j}}{1-\bar\kappa_j}.
\]
Since
$\sup_{\mathbf z\in\mathcal M_{h_j}}\dist(\mathbf z,\mathcal M)
\le C\sigma^2$, the transient
condition yields
\[
 \dist(\widehat{\mathbf y}_j^{(L_j)},\mathcal M)
 \le C\{\sigma^2+h_jr_{N,j}\}.
\]

We next establish cross-scale locality. For the empirical iterate
$\widehat{\mathbf y}_j^{(t)}$, the population step bound in
Lemma~\ref{app:lem:fixed-manifold}(iii) and the uniform perturbation bound give
\[
 \|\widehat{\mathbf y}_j^{(t+1)}-\widehat{\mathbf y}_j^{(t)}\|
 \le C_{{\mathrm{step}},j}
 \dist(\widehat{\mathbf y}_j^{(t)},\mathcal M_{h_j})+C h_jr_{N,j}.
\]
Summing the preceding affine recursion for
$t=0,\ldots,L_j-1$ gives
\[
 \|\widehat{\mathbf y}_j^{(L_j)}-\mathbf y^{(0)}\|
 \le C\sigma+C L_j h_jr_{N,j}
 \le C\sigma,
\]
where the last inequality uses the stopping-horizon condition. Projection
Lipschitzness and local geodesic--Euclidean comparability therefore show
that all $\Pi_{\mathcal M}(\widehat{\mathbf y}_j^{(L_j)})$ lie within $C\sigma$ of
$\Pi_{\mathcal M}(\mathbf y^{(0)})$ in geodesic distance.

Apply Lemma~\ref{app:lem:local-affine-aggregation} with
$\varepsilon_{\mathrm{agg}}=C\{\sigma^2+\max_j h_jr_{N,j}\}$ to obtain the asserted
multi-scale bound. Finally, for fixed $M$ and fixed positive $c_j$, the
displayed sample-size lower bound makes every $r_{N,j}=O(\sigma)$, which
proves the last statement.
\end{proof}

\section{Numerical and reproducibility details}

\subsection{Simulation manifolds}

\medskip\noindent\textit{Curves.}
\begin{enumerate}
   \item Tilted Circle ($\vartheta \in [0,2\pi]$):
   \[
       \begin{pmatrix} x \\ y \\ z \end{pmatrix} =
       \begin{pmatrix}
           1 & 0 & 0 \\
           0 & \cos(\pi/6) & -\sin(\pi/6) \\
           0 & \sin(\pi/6) & \cos(\pi/6)
       \end{pmatrix}
       \begin{pmatrix} \cos(\vartheta) \\ \sin(\vartheta) \\ 0 \end{pmatrix}
   \]

   \item Tennis Curve ($\vartheta \in [-\pi,\pi]$):
   \[
       \begin{pmatrix} x \\ y \\ z \end{pmatrix} =
       \begin{pmatrix}
           \cos^3(\vartheta) \\
           \sin^3(\vartheta) \\
           \frac{\sqrt{3}}{2}\sin(2\vartheta)
       \end{pmatrix}
   \]

   \item Trefoil Knot ($\vartheta \in [0,2\pi]$):
   \[
       \begin{pmatrix} x \\ y \\ z \end{pmatrix} =
       \begin{pmatrix}
           \sin(\vartheta)+2\sin(2\vartheta) \\
           \cos(\vartheta)-2\cos(2\vartheta) \\
           -\sin(3\vartheta)
       \end{pmatrix}
   \]

   \item Sixfolds ($\vartheta \in [0,12\pi]$):
   \[
       \begin{pmatrix} x \\ y \\ z \end{pmatrix} =
       \begin{pmatrix}
           \cos(\vartheta)(1 + 0.3\cos(6\vartheta)) \\
           \sin(\vartheta)(1 + 0.3\cos(6\vartheta)) \\
           0.3\sin(6\vartheta)
       \end{pmatrix}
   \]

\end{enumerate}

\medskip\noindent\textit{Surfaces.}
\begin{enumerate}
   \item Sphere ($\theta \in [0,2\pi], \phi \in [0,\pi]$):
   \[
       \begin{pmatrix} x \\ y \\ z \end{pmatrix} =
       \begin{pmatrix}
           \sin(\phi)\cos(\theta) \\
           \sin(\phi)\sin(\theta) \\
           \cos(\phi)
       \end{pmatrix}
   \]

   \item Torus ($\theta,\phi \in [0,2\pi]$):
   \[
       \begin{pmatrix} x \\ y \\ z \end{pmatrix} =
       \begin{pmatrix}
           (1 + 0.3\cos(\phi))\cos(\theta) \\
           (1 + 0.3\cos(\phi))\sin(\theta) \\
           0.3\sin(\phi)
       \end{pmatrix}
   \]

\end{enumerate}

\begin{figure}[H]
\centering
\includegraphics[width=\linewidth]{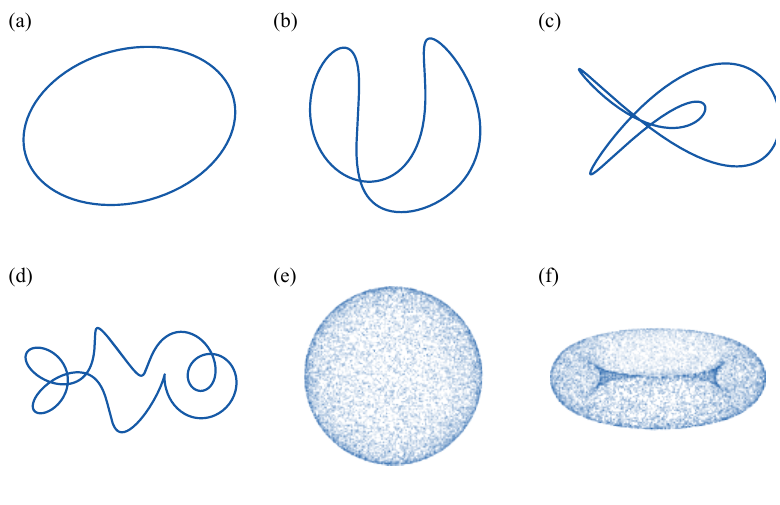}
\caption{Underlying manifolds used in the synthetic study:
(a) circle, (b) tennis curve, (c) trefoil knot, (d) sixfolds,
(e) sphere, and (f) torus.}
\label{app:fig:supp-simulation-manifolds}
\end{figure}

For each independent synthetic repetition, the curve input is sampled by
inverting a numerical cumulative arc-length table with 100,001 parameter
grid points. The Sixfolds interval traverses the same curve six times;
equal-length traversals induce the same arc-length distribution as one
fundamental period. Sphere inputs are sampled using a uniform height
coordinate and a uniform azimuth. Torus inputs are uniform in surface area:
$\theta$ is uniform on $[0,2\pi]$, while $\phi$ has density
$(1+0.3\cos\phi)/(2\pi)$, generated by numerically inverting
$F(\phi)=(\phi+0.3\sin\phi)/(2\pi)$.
Independent Gaussian noise is then added coordinatewise. The evaluation
clouds contain 20,000 points for curves and 30,000 points for surfaces and
are held fixed across repetitions of a given geometry. Distances from
fitted points to the manifold are approximated by nearest-neighbour distances
to the corresponding independent evaluation cloud.

The synthetic studies use a common selection rule. Within each synthetic
repetition, every method receives the same underlying-manifold realisation,
noisy realisation, independent evaluation cloud, sample size, and noise level.
Recall the fitted cloud $\widehat{\mathcal Y}$, independent evaluation
cloud $\mathcal X_{\mathrm{eval}}$, and the two mean errors defined in
Section~5.1 of the main paper:
\[
 e_{\mathrm{fit}}=\frac1N\sum_{i=1}^N
 \dist(\widehat{\mathbf y}_i,\mathcal X_{\mathrm{eval}}),\qquad
 e_{\mathrm{cov}}=\frac1{N_{\mathrm{eval}}}\sum_{\ell=1}^{N_{\mathrm{eval}}}
 \dist(\mathbf x_{{\mathrm{eval}},\ell},\widehat{\mathcal Y}).
\]
Tunable parameters are selected separately for each method on the full
data set by minimizing
\[
 J_{\mathrm{mean}}=\max(e_{\mathrm{fit}},e_{\mathrm{cov}}),
\]
with ties resolved first by the analogous joint 95th-percentile error and
then by the maximum fitting error.
The independent evaluation cloud provides a finite-point approximation to
distance from a fitted point to the manifold. This is method-wise full-data
oracle tuning. The displayed
$e_{\mathrm{fit}}$ and $e_{\mathrm{cov}}$ are means over all fitted and evaluation
points, respectively.

\subsection{Common search spaces}
For STSP, MS-STSP, local PCA, and mean shift, the neighbourhood of each
query consists of all fixed reference observations within radius $h$.
The ordinary synthetic studies use
\[
h/\sigma\in\{1.5,2,2.5,3,3.5,4,5,6,8\}.
\]
Local PCA uses $L=1$. STSP and mean shift stop at the first $t\ge1$ for
which the mean displacement divided by $s$ is below $10^{-3}$, or at
$T_{\max}$, whichever occurs first. In the synthetic studies,
$s=\sigma$ and $T_{\max}=30$. MS-STSP applies the same rule to the largest
normalised mean displacement
among its constituent branches and searches finite scale templates from
the same radius pool, with an $\ell_1$ weight screen to exclude unstable
extrapolation designs.
STSP and every MS-STSP branch use the original noisy cloud as their fixed
reference at every iteration. All branches are evaluated at the common
stopping iteration.
Theorem~S5 applies to finite horizons that satisfy its transient and
accumulated-drift conditions. The displacement rule used in the experiments
is a computational stopping criterion; it does not by itself verify those
theorem conditions for the selected stopping time.
For the ordinary synthetic settings, MS-STSP uses the seven triples
of adjacent values in the ordered $h/\sigma$ grid:
\[
\begin{aligned}
 &(1.5,2,2.5),\ (2,2.5,3),\ (2.5,3,3.5),\ (3,3.5,4),\\
 &(3.5,4,5),\ (4,5,6),\ (5,6,8).
\end{aligned}
\]
Each triple is multiplied by $\sigma$ to obtain its three radii.
The high-noise study uses
\[
 h/\sigma\in\{1,1.25,1.5,2,2.5,3,3.5,4\}
\]
and the six triples of adjacent values in this grid:
\[
\begin{aligned}
 &(1,1.25,1.5),\ (1.25,1.5,2),\ (1.5,2,2.5),\ (2,2.5,3),\\
 &(2.5,3,3.5),\ (3,3.5,4).
\end{aligned}
\]
These triples are also multiplied by $\sigma$. In both studies, triples with
$\sum_j|w_j|>5$ are excluded.

As in the main paper, the estimator of \citet{yao2023manifold} is called the
Yao--Yau (YY) estimator. YY-adapted retains the direction from each query
to its local centre in the
published YY construction. Within the orthogonal complement of that
direction, local PCA estimates the known $d$-dimensional tangent subspace;
the remaining directions form the normal subspace used by the update. The
published averaging option is enabled, and the routine's noise input is set
to the simulation value, that is, \texttt{sig} $=\sigma$. Its internal radii
are therefore the published
$r=5\sigma/\log_{10}N$ and
$R=10\sigma\sqrt{\log(1/\sigma)}/\log_{10}N$.
For a query $\mathbf y$, let $\mathcal I_{\mathrm{YY}}(\mathbf y)$ be the union of
the reference points within radius $2r$ and its nearest
$\min\{N,\max(5,d+2)\}$ reference points. Up to floating-point precision,
the local centre is
\[
 \mathbf m_{\mathrm{YY}}(\mathbf y)=
 |\mathcal I_{\mathrm{YY}}(\mathbf y)|^{-1}
 \sum_{i\in\mathcal I_{\mathrm{YY}}(\mathbf y)}\mathbf z_i,
 \qquad
 \mathbf v_{\mathrm{YY}}(\mathbf y)=\mathbf y-\mathbf m_{\mathrm{YY}}(\mathbf y).
\]
For a nonzero displacement, choose an orthonormal
basis matrix $\mathbf U_{\mathrm{YY}}$ for $\mathbf v_{\mathrm{YY}}^{\perp}$ and form the
matrix with rows $(\mathbf z_i-\mathbf m_{\mathrm{YY}})^{\mathsf T}\mathbf U_{\mathrm{YY}}$,
$i\in\mathcal I_{\mathrm{YY}}(\mathbf y)$. If $\mathbf V_{\mathrm{YY},d}$ contains its first
$d$ right singular vectors, the tangent projector is
\[
 \mathbf P_{\mathrm{YY}}(\mathbf y)=
 \mathbf U_{\mathrm{YY}}\mathbf V_{\mathrm{YY},d}\mathbf V_{\mathrm{YY},d}^{\mathsf T}
 \mathbf U_{\mathrm{YY}}^{\mathsf T}.
\]
The cylindrical averaging set is
\[
 \mathcal C_{\mathrm{YY}}(\mathbf y)=\left\{i:
 \|\mathbf P_{\mathrm{YY}}(\mathbf y)(\mathbf z_i-\mathbf y)\|<r,
 \quad\|(I_D-\mathbf P_{\mathrm{YY}}(\mathbf y))(\mathbf z_i-\mathbf y)\|<R
 \right\}.
\]
The preliminary output is the mean over $\mathcal C_{\mathrm{YY}}(\mathbf y)$
when this set contains more than ten points, and is $\mathbf m_{\mathrm{YY}}(\mathbf y)$
otherwise. The final averaging pass takes the mean of preliminary
outputs within radius $r/4$, augmented by the five nearest preliminary
outputs (or all outputs if fewer than five are available).

The original YY construction uses the full space
$\mathbf v_{\mathrm{YY}}^{\perp}$ as its tangent space, with projector
$I_D-\mathbf v_{\mathrm{YY}}\mathbf v_{\mathrm{YY}}^{\mathsf T}/\|\mathbf v_{\mathrm{YY}}\|^2$.
YY-adapted selects $d$ directions within that space, so that the
normal complement has dimension $D-d$. For $d=D-1$, the published routine
is used. For $d<D-1$, local PCA selects the tangent directions while
retaining the mean-displacement direction in the normal space. The radii,
cylindrical averaging rule, and final averaging pass retain the published
construction.

MF25 uses $L_{\max}=50$,
$\varepsilon_{\mathrm{opt}}=10^{-8}$, initial step size one, and ten backtracking steps.
Only $h=c_{\mathrm{MF25}}\sqrt{\sigma}$ is selected, with candidate multipliers
$c_{\mathrm{MF25}}\in\{.75,1,1.25,1.5,1.75,2,2.5\}$.

MF25 uses the complete noisy cloud as both the reference set and the
initial query set; all $N$ fitted outputs enter the evaluation.

\begin{table}[H]
\centering
\caption{MF25 radius multipliers and radii selected for the first
realisation displayed in the fitted-cloud figures. YY-adapted has
no selected external parameter.}
\label{app:tab:supp-mf-settings}
\small
\begin{tabular}{llrr}
\hline
Study & Data & $c_{\mathrm{MF25}}$ & $h=c_{\mathrm{MF25}}\sqrt{\sigma}$ \\
\hline
Curves & Circle & 1.00 & 0.2828 \\
Curves & Tennis curve & 1.00 & 0.2828 \\
Curves & Trefoil knot & 1.75 & 0.4950 \\
Curves & Sixfolds & 1.00 & 0.2828 \\
\hline
Surfaces & Sphere & 1.00 & 0.2828 \\
Surfaces & Torus & 0.75 & 0.2121 \\
\hline
High noise & $\sigma=0.1$ & 1.25 & 0.3953 \\
High noise & $\sigma=0.2$ & 1.25 & 0.5590 \\
High noise & $\sigma=0.3$ & 1.25 & 0.6847 \\
\hline
\end{tabular}

\end{table}

\subsection{SCMS configurations}
Both SCMS variants use isotropic Gaussian weights over the complete fixed
reference cloud, with all noisy observations as initial queries.
SCMS-KDE takes the eigenvectors associated with the smallest $D-d$
eigenvalues of the density Hessian; SCMS-log-KDE uses the corresponding
eigenvectors of the log-density Hessian. Each update adds the projection
of the Gaussian mean-shift vector onto this subspace.
The intrinsic dimension is supplied in the synthetic experiments.
The ordinary synthetic bandwidth pool is
\[
 h/\sigma\in\{.35,.5,.75,1,1.25,1.5,2,2.5,3,3.5,4,5,6,8\},
\]
and the high-noise pool is
\[
 h/\sigma\in\{.25,.35,.5,.75,1,1.25,1.5,2,2.5,3,3.5,4\}.
\]
The full-data geometric criterion selects the bandwidth separately for
each variant. The reported comparisons use $T_{\max}=30$ and the common
mean-displacement threshold.
Some SCMS-KDE configurations reach the iteration cap.

\subsection{Ordinary simulation repetitions}
For each of the six ordinary manifold settings, ten independent noisy samples were
generated. The evaluation cloud is fixed within each setting and is
independent of the noisy samples. The method-specific search described
above was repeated on each noisy sample. The main-paper tables report
the mean and standard deviation of each error across the ten repetitions.

\begin{table}[H]
\centering
\caption{Configurations selected for the first realisation of the six
synthetic settings. The four curves use $N=5000$ and $\sigma=0.08$;
the sphere and torus use $N=20000$ and $\sigma=0.08$.
A list in the radius column gives the three MS-STSP scales.}
\label{app:tab:supp-synthetic-config}
\begin{tabular}{llrrl}
\hline
Data & Method & $h/\sigma$ & $L$ & Radius or scales \\
\hline
Circle & Local PCA & 4 & 1 & $0.32$ \\
Circle & Mean shift & 1.5 & 28 & $0.12$ \\
Circle & SCMS-KDE & 2 & 30 & $0.16$ \\
Circle & SCMS-log-KDE & 1.25 & 9 & $0.1$ \\
Circle & STSP & 2.5 & 8 & $0.2$ \\
Circle & MS-STSP & -- & 5 & $0.32,0.4,0.48$ \\
\hline
Tennis curve & Local PCA & 4 & 1 & $0.32$ \\
Tennis curve & Mean shift & 1.5 & 25 & $0.12$ \\
Tennis curve & SCMS-KDE & 2 & 30 & $0.16$ \\
Tennis curve & SCMS-log-KDE & 1.25 & 9 & $0.1$ \\
Tennis curve & STSP & 2.5 & 8 & $0.2$ \\
Tennis curve & MS-STSP & -- & 5 & $0.32,0.4,0.48$ \\
\hline
Trefoil knot & Local PCA & 5 & 1 & $0.4$ \\
Trefoil knot & Mean shift & 2 & 22 & $0.16$ \\
Trefoil knot & SCMS-KDE & 2.5 & 14 & $0.2$ \\
Trefoil knot & SCMS-log-KDE & 2 & 6 & $0.16$ \\
Trefoil knot & STSP & 4 & 5 & $0.32$ \\
Trefoil knot & MS-STSP & -- & 4 & $0.4,0.48,0.64$ \\
\hline
Sixfolds & Local PCA & 3.5 & 1 & $0.28$ \\
Sixfolds & Mean shift & 1.5 & 22 & $0.12$ \\
Sixfolds & SCMS-KDE & 1.25 & 30 & $0.1$ \\
Sixfolds & SCMS-log-KDE & 1 & 13 & $0.08$ \\
Sixfolds & STSP & 2 & 12 & $0.16$ \\
Sixfolds & MS-STSP & -- & 6 & $0.24,0.28,0.32$ \\
\hline
Sphere & Local PCA & 3.5 & 1 & $0.28$ \\
Sphere & Mean shift & 1.5 & 25 & $0.12$ \\
Sphere & SCMS-KDE & 1.25 & 30 & $0.1$ \\
Sphere & SCMS-log-KDE & 1 & 11 & $0.08$ \\
Sphere & STSP & 2 & 11 & $0.16$ \\
Sphere & MS-STSP & -- & 5 & $0.4,0.48,0.64$ \\
\hline
Torus & Local PCA & 3 & 1 & $0.24$ \\
Torus & Mean shift & 1.5 & 27 & $0.12$ \\
Torus & SCMS-KDE & 0.75 & 30 & $0.06$ \\
Torus & SCMS-log-KDE & 1 & 13 & $0.08$ \\
Torus & STSP & 2 & 12 & $0.16$ \\
Torus & MS-STSP & -- & 8 & $0.24,0.28,0.32$ \\
\hline
\end{tabular}

\end{table}

\begin{figure}[p]
\centering
\includegraphics[width=\linewidth,height=.82\textheight,keepaspectratio]{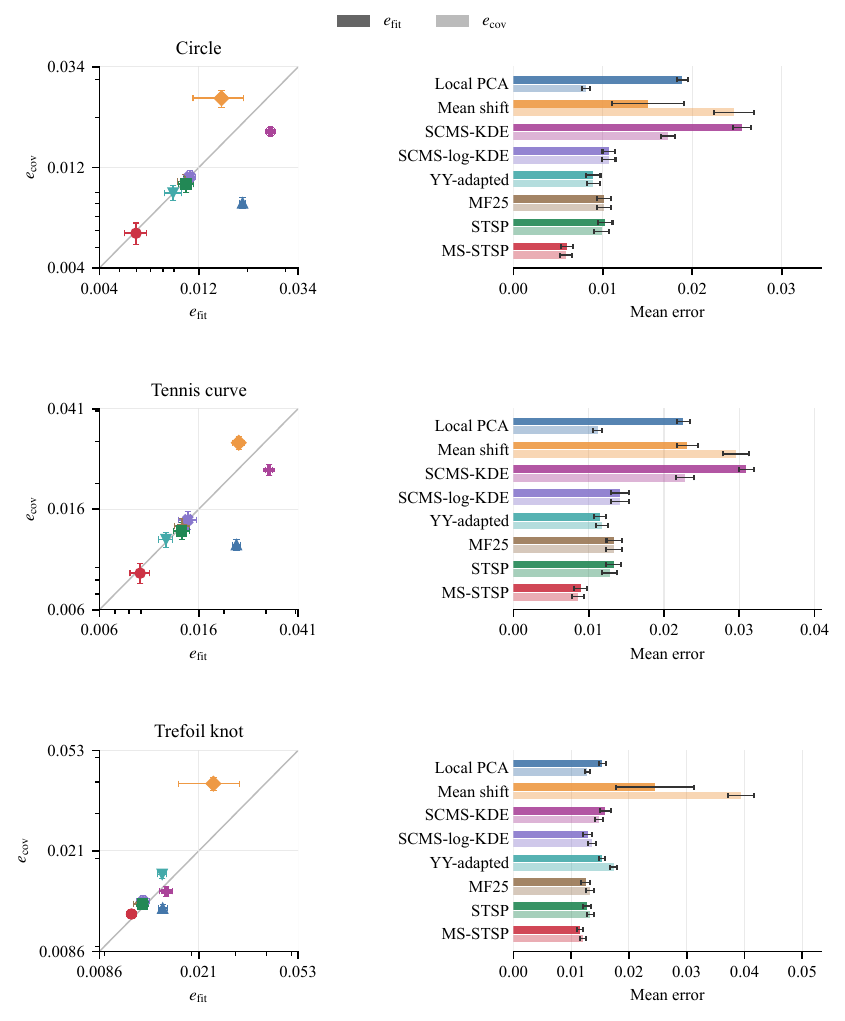}
\caption{Fitting and coverage errors for the circle, tennis curve, and
trefoil knot, from top to bottom, over ten independent realisations.
Each row pairs a scatter plot of method means
on log--log axes with horizontal bars showing the two mean errors.
Error bars show one standard deviation; the grey diagonal marks equal
errors in the scatter plot. Smaller errors are preferable.}
\label{app:fig:supp-synthetic-joint-first}
\end{figure}

\begin{figure}[p]
\centering
\includegraphics[width=\linewidth,height=.82\textheight,keepaspectratio]{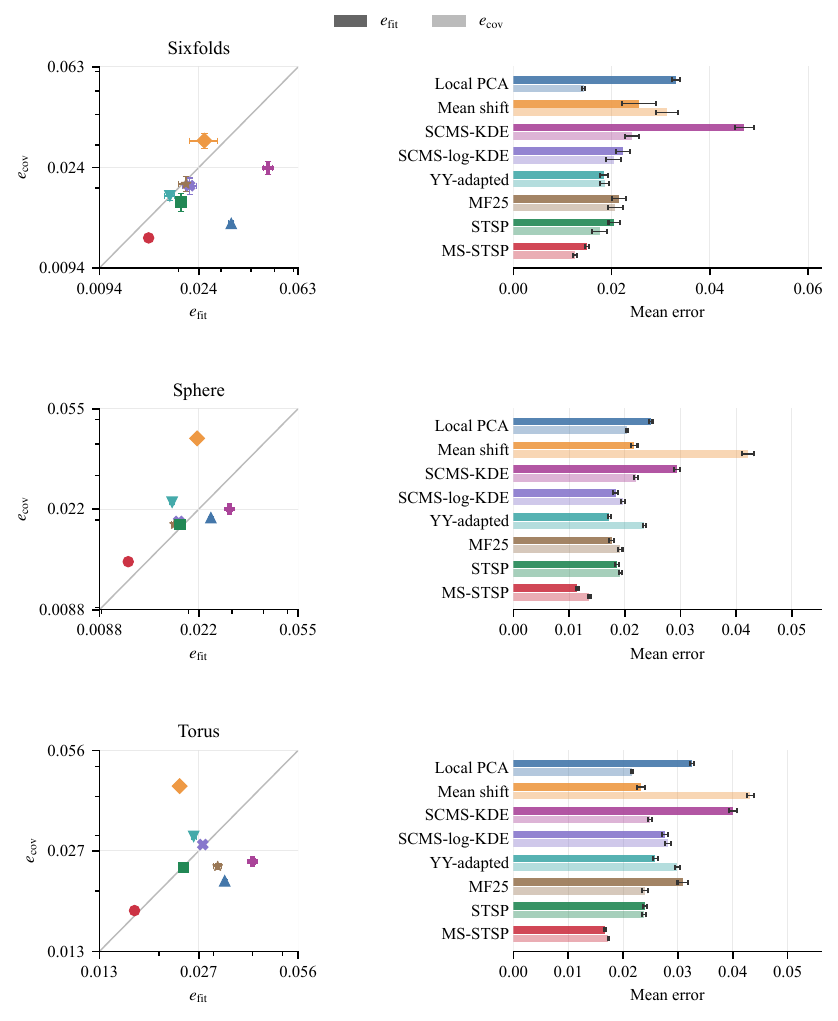}
\caption{Fitting and coverage errors for Sixfolds, the sphere, and the
torus, from top to bottom, continuing Figure~\ref{app:fig:supp-synthetic-joint-first}.
Each setting uses ten independent realisations; graphical conventions
are the same as in Figure~\ref{app:fig:supp-synthetic-joint-first}.}
\label{app:fig:supp-synthetic-joint-continued}
\end{figure}
\clearpage

\begin{figure}[!htbp]
\centering
\includegraphics[width=\linewidth]{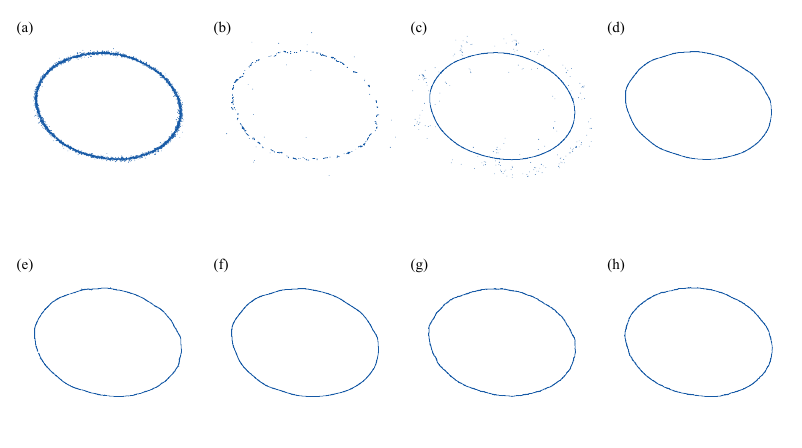}
\caption{Circle fitted clouds. Panels (a)--(h) show local PCA, mean shift,
SCMS-KDE, SCMS-log-KDE, YY-adapted, MF25, STSP, and MS-STSP, respectively.}
\label{app:fig:supp-circle-fits}
\end{figure}

\begin{figure}[!htbp]
\centering
\includegraphics[width=\linewidth]{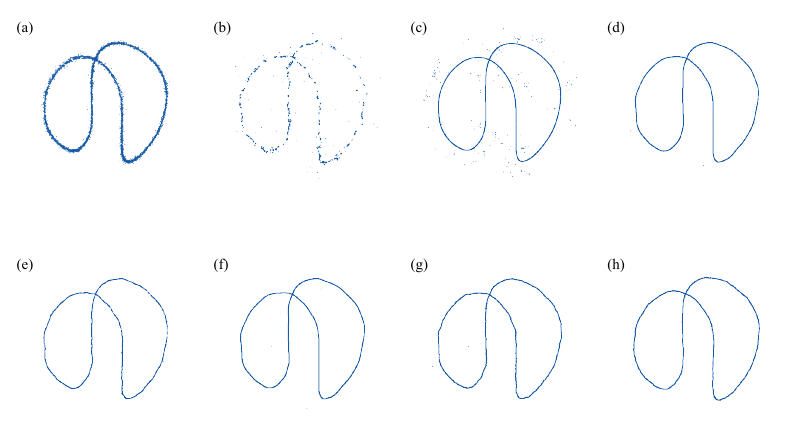}
\caption{Tennis-curve fitted clouds. Panels (a)--(h) show local PCA, mean shift,
SCMS-KDE, SCMS-log-KDE, YY-adapted, MF25, STSP, and MS-STSP, respectively.}
\label{app:fig:supp-tennis-fits}
\end{figure}

\begin{figure}[!htbp]
\centering
\includegraphics[width=\linewidth]{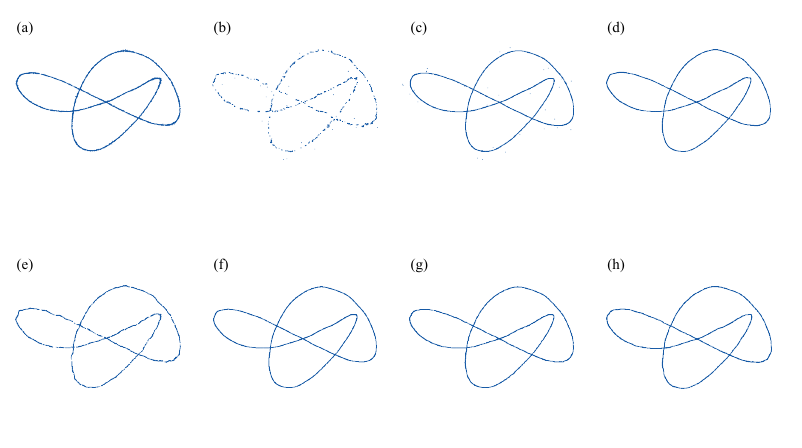}
\caption{Trefoil-knot fitted clouds. Panels (a)--(h) show local PCA, mean shift,
SCMS-KDE, SCMS-log-KDE, YY-adapted, MF25, STSP, and MS-STSP, respectively.}
\label{app:fig:supp-trefoil-fits}
\end{figure}

\begin{figure}[!htbp]
\centering
\includegraphics[width=\linewidth]{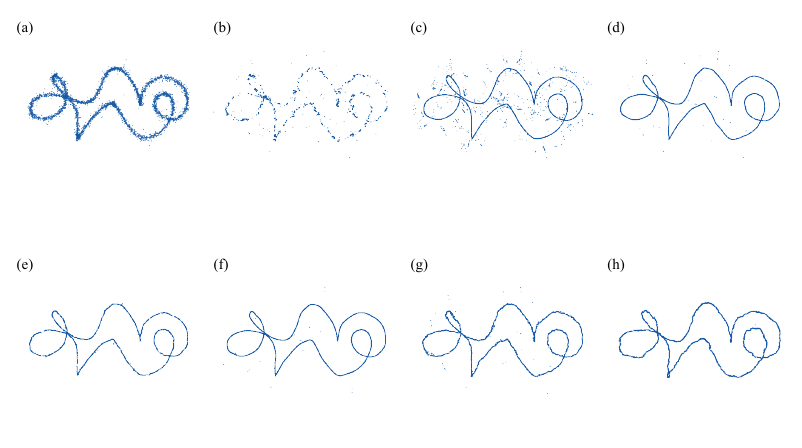}
\caption{Sixfolds fitted clouds. Panels (a)--(h) show local PCA, mean shift,
SCMS-KDE, SCMS-log-KDE, YY-adapted, MF25, STSP, and MS-STSP, respectively.}
\label{app:fig:supp-sixfolds-fits}
\end{figure}

\begin{figure}[!htbp]
\centering
\includegraphics[width=\linewidth]{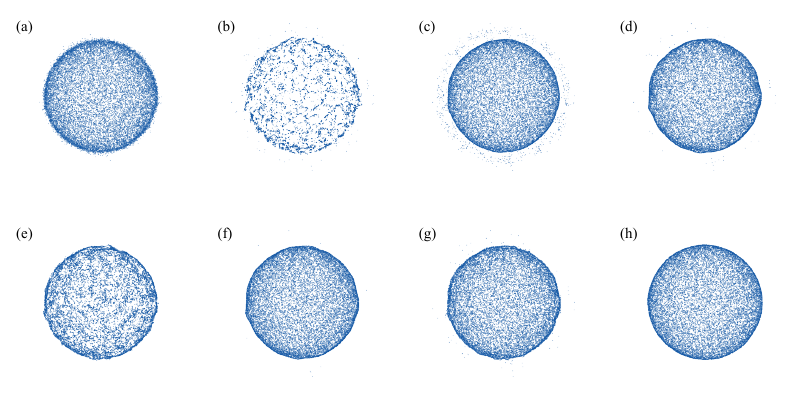}
\caption{Sphere fitted clouds. Panels (a)--(h) show local PCA, mean shift,
SCMS-KDE, SCMS-log-KDE, YY-adapted, MF25, STSP, and MS-STSP, respectively.}
\label{app:fig:supp-sphere-fits}
\end{figure}

\begin{figure}[!htbp]
\centering
\includegraphics[width=\linewidth]{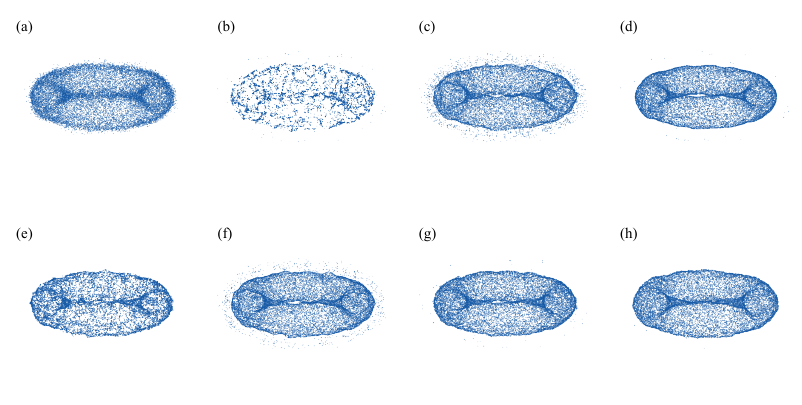}
\caption{Torus fitted clouds. Panels (a)--(h) show local PCA, mean shift,
SCMS-KDE, SCMS-log-KDE, YY-adapted, MF25, STSP, and MS-STSP, respectively.}
\label{app:fig:supp-torus-fits}
\end{figure}

\clearpage
\subsection{High-noise circle experiment}
The high-noise study uses $N=5000$ and
$\sigma\in\{0.1,0.2,0.3\}$. We compare STSP, MS-STSP,
YY-adapted, MF25, local PCA, mean shift, SCMS-KDE, and SCMS-log-KDE.
At each noise level each tunable method is selected by the geometric
criterion using the high-noise candidate sets specified above.

\begin{table}[H]
\centering
\caption{Configurations selected for the first realisation at each
high-noise level.}
\label{app:tab:supp-high-noise-config}
\scriptsize
\begin{tabular}{llrrl}
\hline
Data & Method & $h/\sigma$ & $L$ & Radius or scales \\
\hline
$\sigma=0.1$ & Local PCA & 4 & 1 & $0.4$ \\
$\sigma=0.1$ & Mean shift & 2 & 30 & $0.2$ \\
$\sigma=0.1$ & SCMS-KDE & 2 & 30 & $0.2$ \\
$\sigma=0.1$ & SCMS-log-KDE & 1.25 & 9 & $0.125$ \\
$\sigma=0.1$ & STSP & 2.5 & 8 & $0.25$ \\
$\sigma=0.1$ & MS-STSP & -- & 6 & $0.3,0.35,0.4$ \\
\hline
$\sigma=0.2$ & Local PCA & 3 & 1 & $0.6$ \\
$\sigma=0.2$ & Mean shift & 1.5 & 30 & $0.3$ \\
$\sigma=0.2$ & SCMS-KDE & 1.25 & 30 & $0.25$ \\
$\sigma=0.2$ & SCMS-log-KDE & 0.75 & 17 & $0.15$ \\
$\sigma=0.2$ & STSP & 1.5 & 18 & $0.3$ \\
$\sigma=0.2$ & MS-STSP & -- & 6 & $0.6,0.7,0.8$ \\
\hline
$\sigma=0.3$ & Local PCA & 3 & 1 & $0.9$ \\
$\sigma=0.3$ & Mean shift & 1.25 & 30 & $0.375$ \\
$\sigma=0.3$ & SCMS-KDE & 0.5 & 30 & $0.15$ \\
$\sigma=0.3$ & SCMS-log-KDE & 0.75 & 17 & $0.225$ \\
$\sigma=0.3$ & STSP & 1.5 & 17 & $0.45$ \\
$\sigma=0.3$ & MS-STSP & -- & 9 & $0.75,0.9,1.05$ \\
\hline
\end{tabular}

\end{table}

\begin{figure}[H]
\centering
\includegraphics[width=.72\linewidth,height=.44\textheight,keepaspectratio]{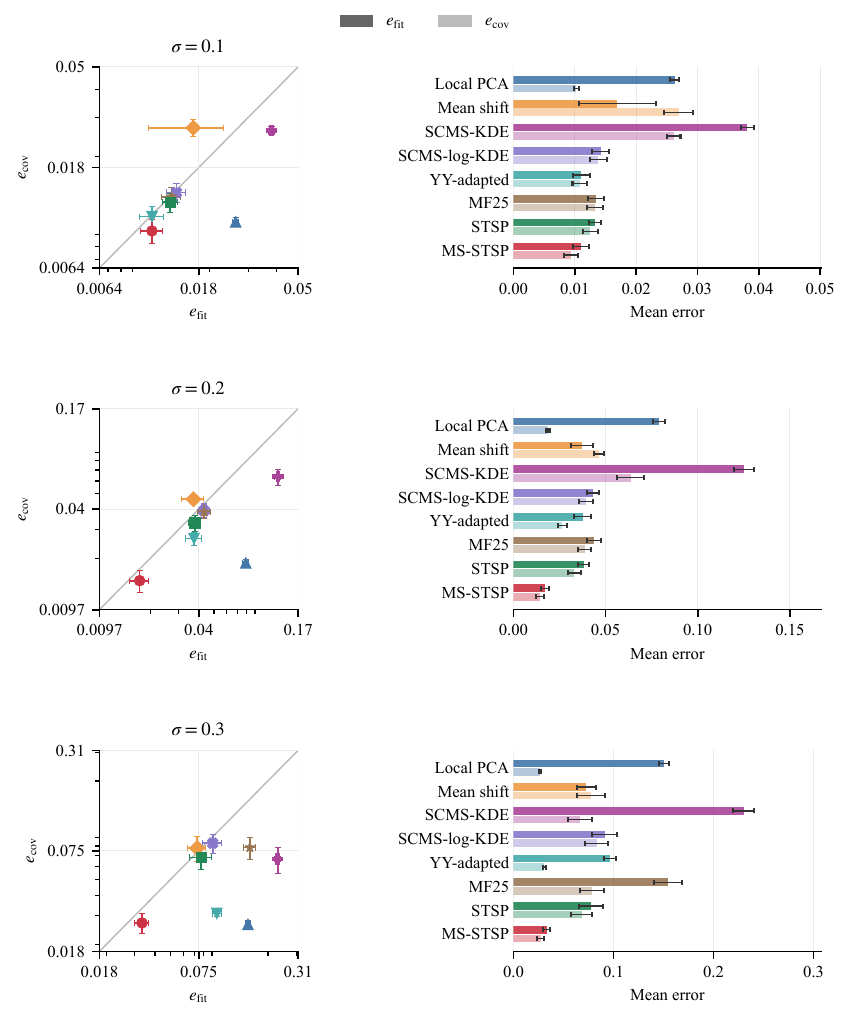}
\caption[Fitting and coverage errors for the high-noise circle]{Fitting and coverage errors over ten independent realisations
of the high-noise circle. Each row gives a scatter plot of method means
and paired horizontal bars for $e_{\mathrm{fit}}$ and $e_{\mathrm{cov}}$.
Error bars show one standard deviation; the grey diagonal marks equal
errors in the scatter plot.}
\label{app:fig:supp-high-noise-joint}
\end{figure}

\begin{figure}[H]
\centering
\includegraphics[width=.92\linewidth]{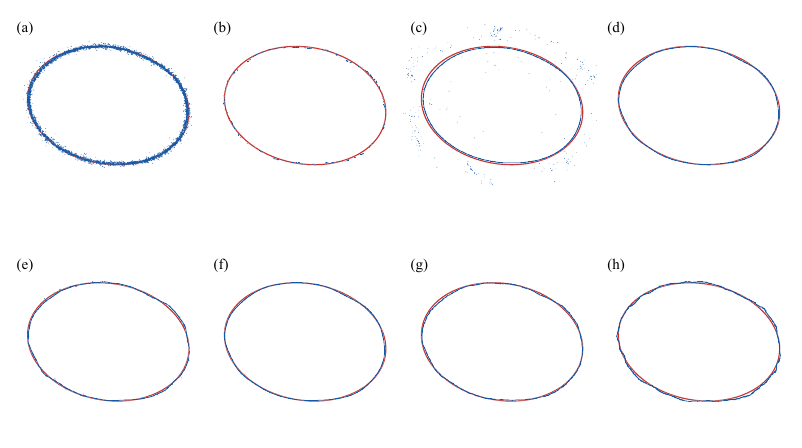}
\caption{Circle fitted clouds at $\sigma=0.1$. Panels (a)--(h) show local PCA,
mean shift, SCMS-KDE, SCMS-log-KDE, YY-adapted, MF25, STSP, and MS-STSP,
respectively.}
\label{app:fig:supp-high-circle-01}
\end{figure}

\begin{figure}[H]
\centering
\includegraphics[width=.92\linewidth]{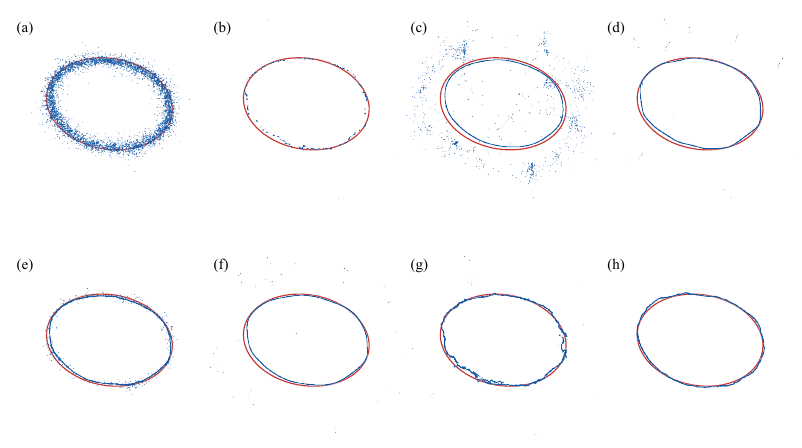}
\caption{Circle fitted clouds at $\sigma=0.2$. Panels (a)--(h) show local PCA,
mean shift, SCMS-KDE, SCMS-log-KDE, YY-adapted, MF25, STSP, and MS-STSP,
respectively.}
\label{app:fig:supp-high-circle-02}
\end{figure}

\begin{figure}[H]
\centering
\includegraphics[width=.92\linewidth]{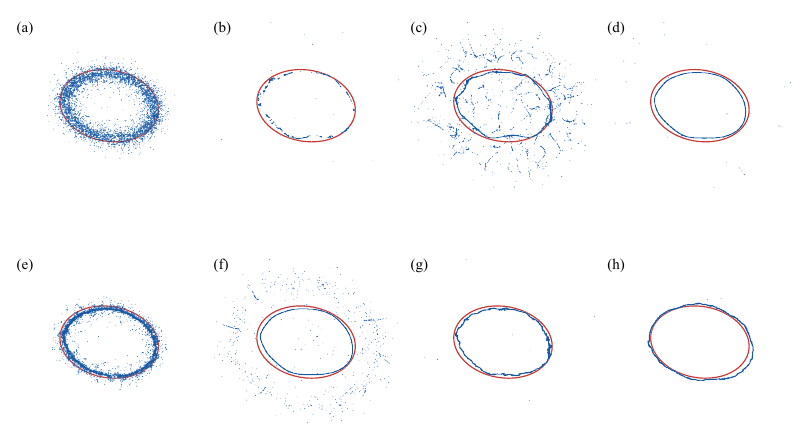}
\caption{Circle fitted clouds at $\sigma=0.3$. Panels (a)--(h) show local PCA,
mean shift, SCMS-KDE, SCMS-log-KDE, YY-adapted, MF25, STSP, and MS-STSP,
respectively.}
\label{app:fig:supp-high-circle-03}
\end{figure}

\subsection{Rate, convergence, and radius diagnostics}
The expanded rate diagnostic uses hard-radius STSP with $N=5000$,
$h=3\sigma$, $T_{\max}=30$, the displacement stopping rule above,
10 replicates, and
\[
 \sigma\in\{.01,.02,.04,.08,.10,.15,.20,.25,.30,.35\}
\]
on the circle and sphere. The iteration diagnostic uses the same two
manifolds, $N=5000$, $\sigma=.08$, $h=3\sigma$, 30 iterations, and 10
replicates. The radius-sensitivity study uses the circle, 15 geometrically
spaced radii from $.2\sigma$ to $.5$, 10 iterations, and ten replicates
at each of the two displayed noise levels.

\subsection{Dentate gyrus configuration}

The supplied count matrix contains 2,930 cells and 14 annotations:
astrocytes (120), Cajal--Retzius cells (34), Cck--Tox cells (27),
endothelial cells (87), GABAergic cells (61), immature granule cells
(785), mature granule cells (1,070), microglia (81), mossy cells (75),
neuroblasts (417), oligodendrocytes (50), oligodendrocyte precursor cells
(53), radial glia-like cells (51), and neuronal intermediate progenitor
cells (19). Counts are normalised to a library size of $10^4$ and
log-transformed. The 2,000 genes with the largest variance are retained,
standardised coordinatewise, and clipped to $[-10,10]$. A randomised PCA
with seed 20260925 produces one nested 100-component representation.
For each displayed value of $p$, the first $p$ components are centred and
divided by their root-mean-square cloud radius.

\begin{samepage}
The principal analysis uses $p=30$. The dimension comparison uses
$p\in\{20,30,40,50\}$. Mean shift, local PCA, STSP, MS-STSP, both SCMS
variants, YY-adapted, and MF25 use the common candidate scale grid
formed by the upper 20 values of a 30-point logarithmically spaced
sequence from $0.025$ to $2$. Equivalently, its unrounded values are
$h_j=0.025(2/0.025)^{(j+10)/29}$ for $j=0,\ldots,19$; to six decimal
places, this gives
\[
\begin{split}
 h\in\{&0.113286,0.131766,0.153259,0.178258,0.207335,\\
        &0.241155,0.280492,0.326245,0.379462,0.441359,\\
        &0.513353,0.597090,0.694486,0.807769,0.939531,\\
        &1.092785,1.271038,1.478368,1.719516,2\}.
\end{split}
\]
\end{samepage}
Mean shift, local PCA, and STSP use the reference observations within
radius $h$ of each query.
The SCMS variants use Gaussian weights with bandwidth $h$;
YY-adapted and MF25 retain their method-specific neighbourhoods.
MS-STSP uses three distinct radii; all $\binom{20}{3}=1140$ triples from
the 20-scale grid satisfy $\sum_j|w_j|\le5$ and are evaluated at each $p$.
The triple with the largest full-cell average silhouette is selected.
Each branch applies the fixed-reference
STSP update, with multi-scale weights as in Algorithm~2.
Methods requiring an intrinsic dimension use $d=1$. Mean shift and STSP
retain the original PCA cloud as the reference throughout their
iterations. With the normalised input radius $s=1$, mean shift, STSP,
and both SCMS variants stop at the first iteration satisfying
\[
 \frac{1}{N}\sum_{i=1}^N
 \|\mathbf y_i^{(t)}-\mathbf y_i^{(t-1)}\|<10^{-3},
\]
or at $T_{\max}=30$. MS-STSP uses the same displacement criterion for
each of its three branches, stopping when all branches satisfy it or
at $T_{\max}=30$. Local PCA applies one projection. YY-adapted uses
its published stopping rule. MF25 uses 10 initialization iterations,
50 maximum outer iterations and tolerance $10^{-8}$. At each $p$, the
full-cell average Euclidean silhouette with respect to the 14 supplied
annotations selects the radius or, for MS-STSP, the radius triple.

At $p=30$, the selected radii are 0.513353 for mean shift, 0.939531 for
local PCA, 0.441359 for SCMS-KDE, 0.326245 for SCMS-log-KDE, 0.326245 for
YY-adapted, 0.939531 for MF25, and 0.694486 for STSP. Mean shift stops
after 18 updates and STSP after 11. At $p=(20,30,40,50)$, the selected
MS-STSP radius triples are $(0.694486,1.478368,1.719516)$,
$(0.694486,1.271038,1.719516)$, $(0.807769,1.271038,1.719516)$, and
$(0.807769,1.271038,1.719516)$, respectively. Their stopping times are
$(23,16,14,19)$. At the same dimensions, the selected
STSP radii are $(0.694486,\allowbreak 0.694486,\allowbreak 0.807769,\allowbreak 0.807769)$ and the
corresponding stopping times are $(10,11,9,11)$. The selected mean-shift
radii are $(0.513353,\allowbreak 0.513353,\allowbreak 0.597090,\allowbreak 0.694486)$, with stopping times
$(17,18,19,16)$. Both STSP and mean shift reach the displacement
threshold before the iteration cap at these dimensions.

For each selected $p=30$ output, $k$-means uses 14 clusters, random seeds
0--9, five $k$-means++ initializations per seed, and at most 500 iterations.
ARI and geometric NMI compare each partition with the supplied annotations;
accuracy uses an optimal one-to-one matching between cluster and annotation
labels. The main-paper table reports averages over the 10 seeds.

For the main-paper t-SNE figure, the selected STSP and MS-STSP outputs
are separately centred and divided by their root-mean-square radii.
Each is embedded with perplexity 80, early exaggeration 24, PCA
initialisation, automatic learning rate, 1,500 optimisation iterations,
and random seed 20260927. Both embeddings are aligned by a global
orthogonal transformation to the independently computed input embedding;
this allows rotation and reflection but no local deformation. The
annotations determine colours in the t-SNE display and are not supplied
to the embedding algorithm; they also enter the SIL criterion used to
select the fitted representation.

\bibliographystyle{plainnat}
\bibliography{paper-ref}
\end{document}